\documentclass[letterpaper,11pt]{article}
\usepackage{graphicx}
\usepackage{rose-preamble}
\usepackage{macrosProj}

\usepackage[subtle, mathspacing=normal]{savetrees}

\title{History-Independent Load Balancing}
\author{Michael A. Bender\thanks{Department of Computer Science, Stony Brook University.} \and William Kuszmaul\thanks{Computer Science Department, Carnegie Mellon University.} \and Elaine Shi\thanks{Computer Science Department \& Electrical and Computer Engineering, Carnegie Mellon University.} \and Rose Silver\thanks{Computer Science Department, Carnegie Mellon University.}}
\date{}

\begin{document}
\maketitle

\begin{abstract}
We give a (strongly) history-independent two-choice balls-and-bins algorithm on $n$ bins that supports both insertions and deletions on a set of up to $m$ balls, while guaranteeing a maximum load of $m / n + O(1)$ with high probability, and achieving an expected recourse of $O(\log \log (m/n))$ per operation. To the best of our knowledge, this is the first history-independent solution to achieve nontrivial guarantees of any sort for $m/n \ge \omega(1)$ and is the first fully dynamic solution (history independent or not) to achieve $O(1)$ overload with $o(m/n)$ expected recourse. 
\end{abstract}

\section{Introduction}\label{sec:introduction}
Since its introduction more than two decades ago, history independence \cite{NaorTe01, Micciancio97} has become one of the most widely studied security properties in data structures~\cite{Micciancio97, NaorTe01, HartlineHoMo05,HartlineHoMo02,AcarBlHa04, BuchbinderPe03, BlellochGo07, NaorSeWi08, Golovin09, Golovin10,BenderBeJo16}. Formally, a data structure is said to be \defi{history independent} \cite{NaorTe01} (or \defi{strongly history independent} \cite{NaorTe01}) if, for any sequence $t_1, t_2, \ldots$ of times, leaking the data structure's states $S_{t_1}, S_{t_2}, \ldots$ at those times does not leak any information beyond which elements were contained in the data structure at those points in time. With a vanilla data structure, if the adversary hacks into the server and obtains a snapshot of the data structure's state, it can learn not only which elements are currently in the data structure, but also historical information such as the order of past insertions and deletions, or even sensitive information about which elements were present in the past but have since been deleted. History independent data structures defend against such attacks, revealing the information-theoretically minimum amount of information even if the adversary hacks the data structure at multiple points in time.

A natural way to construct a history independent data structure is to define it in such a way that its state at any given moment is \emph{completely determined} by its current set of elements (and its random tape). Hartline et al.~\cite{HartlineHoMo05,HartlineHoMo02} proved that, in most data-structural settings\footnote{Hartline et al.'s result \cite{HartlineHoMo02, HartlineHoMo05} holds whenever the logical states of the data structure form a strongly connected graph (i.e., all logical states are mutually reachable from each other).} (including those in this paper), this is actually the \emph{only} way to achieve history independence. Thus one can think of history independence as really being a type of unique representability. 

In this paper, we revisit a basic problem where efficient history independence has so far proven difficult to achieve: the problem of maintaining a dynamic assignment of balls to bins, where each ball has two random choices for where it could go, and where the goal is to prevent any one bin from being too overloaded. Although this problem has been heavily studied in the data structure \cite{bansal2022balanced, dietzfelbinger2007balanced, frieze2018balanced,naor2008history,PaghRo01} and scheduling \cite{ABKU94,Vocking99,CFMMSU98,mitzenmacher2001power, mitzenmacher1999studying,BCSV00proc,bansal2022balanced} literatures, the best known solutions are all (significantly) history \emph{dependent}. In this paper, we show that history independent solutions can also do surprisingly well. In fact, our final solution even improves \emph{on the best prior history-dependent results}, achieving a doubly-exponential reduction to recourse (the number of balls moved per operation) while maintaining constant overload (the maximum amount by which any bin is overloaded). 

\paragraph{Two-choice load balancing.} In the \defi{two-choice load-balancing problem}, a set $\mathcal{S}$ of up to $m$ balls must be assigned to $n$ bins. Each ball $x \in \mathcal{S}$ has two uniformly random bins $h_1(x), h_2(x) \in [n]$ where it is capable of going. In its most general form, the goal of two-choice load balancing is to maintain a dynamic allocation of balls to bins, so that, over time, as balls are inserted and deleted, we achieve two simultaneous guarantees:
\begin{enumerate}
\item \textbf{Low Overload:} The amount by which the fullest bin’s load exceeds $m/n$ (i.e., the \defi{overload}) is small;
\item \textbf{Low Recourse: } On any given insertion and deletion, the number of balls that get moved around (this is known as the \defi{recourse}) is small.  Recourse will often be measured as a function of $\mu \coloneqq m/n$.
\end{enumerate}

The two-choice load-balancing problem \cite{ABKU94,Vocking99,CFMMSU98,mitzenmacher2001power, mitzenmacher1999studying,BCSV00proc,bansal2022balanced} can be viewed as capturing a natural scheduling problem: balls represent jobs, each of which are capable of running on two random servers.\footnote{In practice, there are many reasons, depending on the setting, that a system may choose to restrict each job to only two possible servers. In general, it allows the system to minimize the degree to which it duplicates certain types of resources. As a simple example, suppose that a job requires access to a stream of incoming data that is specific to that job. If the data is sent by external sources who do not know which server the job will be on, they can send the data to just two servers, instead of all $n$ servers.} The goal is to schedule jobs to servers so that no server is too overloaded and so that, as jobs arrive and depart, the jobs that are already present experience as little migration as possible.

As we will discuss, in some restricted versions of the problem (e.g., where balls are inserted but not deleted), it is even possible to achieve nontrivial guarantees with no migration at all. But, in the full version of the problem, where balls are both inserted and deleted, there is strong evidence \cite{bansal2022balanced} that any algorithm with low overload must incur recourse.

The simplest solution to the two-choice load-balancing problem is to assign each ball $x \in \mathcal{S}$ to its \emph{first} choice $h_1(x)$. This solution, known as the \defi{single-choice algorithm}, has two appealing properties: it is naturally history independent, and it has no recourse. However, the algorithm has a relatively large overload, e.g., when $m \ge n \log n$, the overload is $\Theta(\sqrt{m \log n})$ with high probability. 

\paragraph{Past work: The (history-dependent) power of two choices.} By using both choices, instead of just one, one can get much better bounds on overload.

If we consider a setting in which balls are inserted (but not deleted), then better overload can be achieved with the \defi{greedy algorithm}, which implements insertions by placing the new ball $x$ in the less loaded of the two bins $h_1(x)$ and $h_2(x)$. This simple algorithm, which was first analyzed for $m = n$ by Azar, Broder, Karlin and Upfal~\cite{ABKU94} in 1994, ends up being surprisingly tricky to analyze for larger $m$, and it was only after a long line of work \cite{ABKU94,Vocking99,CFMMSU98,mitzenmacher2001power, mitzenmacher1999studying,BCSV00proc} that Berenbrink, Czumaj, Steger, and V{\"o}cking~\cite{BCSV00proc} were able to achieve a tight bound, showing that the algorithm achieves overload $\log \log n + O(1)$ with high probability in $n$ (and independently of $m$). 

Whereas the single-choice algorithm was history independent, the greedy algorithm is emphatically not---changes to the order in which elements are inserted can result in significantly different outcomes for the final state of the system. The algorithm also suffers from the fact that its overload guarantee, which holds for insertion-only workloads, actually fails for workloads with both insertions and deletions \cite{bansal2022balanced}. In fact, Bansal and Kuszmaul \cite{bansal2022balanced} give evidence (in the form of a lower bound for a large class of algorithms) that the only way to achieve good load balancing on fully dynamic workloads is with algorithms that incur non-zero recourse. Note that, if one is willing to incur recourse, then one can use \emph{tombstones}\footnote{This means that the algorithm simply \emph{marks} elements as deleted, and then rebuilds the entire system (incurring significant recourse) every, say, $n$ operations.} in order to transform the greedy algorithm into a dynamic algorithm---this maintains an overload bound of $O(\log \log n)$ while achieving amortized expected recourse $O(\mu) = O(m/n)$ per operation. 

In the data-structures literature, there has also been a great deal of work on non-greedy solutions that achieve even better bounds on overload \cite{PaghRo01, dietzfelbinger2007balanced, frieze2018balanced}. Notably, Dietzfelbinger and Weidling \cite{dietzfelbinger2007balanced} give an (again, highly history dependent) solution that achieves overload $\le 1$ with high probability, while achieving expected recourse $O(\mu)$.\footnote{Although the expected recourse in \cite{dietzfelbinger2007balanced} is $O(\mu)$, the expected \emph{time} ends up being much larger (super-polynomial in $\mu = m/n$). We remark that this will not be the case for any of the algorithms studied in this paper---all of the algorithms that we propose can be straightforwardly implemented to incur time proportional to their recourse, and with $O(m)$ space used to store metadata.}  Whether  or not a solution with overload $O(1)$ and expected recourse $o(\mu)$ exists has, as far as we know, remained open.

\paragraph{Past work: History independent solutions. }For history-independent solutions, the only parameter regime that is well understood is the one where $m < n/2 - \Omega(n)$ (i.e., $\mu = 1/2 - \Omega(1)$) \cite{naor2008history}. Here, Naor, Segev, and Wieder \cite{naor2008history} show that it is possible to achieve both overload and expected recourse $O(1)$. This is achieved by applying a separate orientation algorithm to each connected component of the graph $(V = [n], E = \{(h_1(x), h_2(x)) \mid x \in \setBalls\})$, and by exploiting the fact that most of the components in the graph contain only a constant number of edges. This ``small-component’’ graph structure is special to the $m < n/2 - \Omega(n)$ regime, and the question of whether efficient history-independent solutions also exist for larger $m$ has remained open.

\subsection{This Paper: The History-Independent Power of Two Choices}
In this paper, we construct a history-independent two-choice algorithm that simultaneously achieves overload $O(1)$ with high probability in $n$, and expected recourse $O(\log \log \mu)$ where $\mu = m/n$. Even for non-history-independent solutions, our algorithm is the first to achieve overload $O(1)$ with recourse $o(\mu)$.

We remark that, due to the aforementioned unique representability property shown by Hartline et al.~\cite{HartlineHoMo05,HartlineHoMo02}, we can fully define any history-independent data structure by simply describing the data structure's state given a set $\mathcal{S}$ of balls. This is the approach that we will take in discussing the algorithms below.

\paragraph{Warmup: History-independent greedy.} We begin in Section~\ref{sec:greedy} by exploring what is arguably the simplest history-independent algorithm that one could consider (after single-choice), namely, the \defi{history-independent greedy algorithm}. The algorithm calculates the allocation for a given set $\mathcal{S}$ of balls by assigning a canonical ordering $x_1 < x_2 < \cdots$ to the balls in $\mathcal{S}$, and then computing the allocation that the greedy algorithm would have produced if the balls were inserted in that order $(x_1, x_2, \ldots)$. 

For many data structural problems \cite{naor2008history,berger2022memoryless,kuszmaul2023strongly,AmbleKn74,seidel1996randomized}, this basic approach (i.e., applying a canonical ordering, and then inserting the items greedily in that order) ends up being an efficient or even optimal history-independent solution for that problem. This makes the algorithm a natural first candidate for our exploration.

We show that the history-independent greedy algorithm \emph{does} achieve a nontrivial bound: it incurs expected recourse $O(\mu)$ while (trivially) offering an overload bound of $\log \log n + O(1)$. We also prove for a large range of parameters this recourse analysis is tight.

\paragraph{A better algorithm: Slice and Spread.} The main result of the paper is an alternative history-independent algorithm that achieves a doubly exponentially smaller expected recourse of $O(\log \log \mu)$ while offering a high-probability overload of $O(1)$. 

We begin in Section \ref{sec:slicespread} by presenting a basic version of the algorithm, which we call the \defi{Slice and Spread Algorithm}, that achieves the desired recourse bound of $O(\log \log \mu)$ but that comes with only a relatively weak overload bound. All that the basic version of the algorithm guarantees for overload is that the \emph{average} number of balls above height $m/n$ in each bin is $O(1)$ —  but the algorithm does not (a priori) say anything about the maximum.

We then show in Sections \ref{sec:postprocessoverload} an (almost) black-box approach for history-independently transforming an algorithm with low expected overload \emph{per bin} into an algorithm with low maximum overload \emph{across all the bins}, while only increasing expected recourse by a constant factor. This leads to the final result of the paper:
\begin{restatable}{thm}{thmfulltheorem}\label{thm:full-theorem}
    There exists a history-independent two-choice allocation algorithm $\alg$ that achieves the following guarantees: The expected recourse of $\alg$ is $O(\log\log (m/n))$. Furthermore, for each set $\setBalls$ of balls, the overload induced by $\alg$ is $O(1)$ with high probability in $n$. 
\end{restatable}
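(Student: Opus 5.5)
The proof composes the two ingredients outlined in the introduction. The first is the basic Slice and Spread algorithm of Section~\ref{sec:slicespread}: a history-independent algorithm $\alg_0$ with expected recourse $O(\log\log\mu)$ whose \emph{expected} overload per bin is $O(1)$. The second is the post-processing transformation of Section~\ref{sec:postprocessoverload}, which turns low per-bin expected overload into low maximum overload. History independence is preserved under composition, because the final state is a deterministic function of $\setBalls$ and the random tape. By Hartline et al., this means we only need to define the final allocation as a function of the current set, which the composition does.

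\textbf{Step 1: the base algorithm.} I would fix $\alg_0$ to be Slice and Spread. Concretely, I take it to place balls in "slices" of heights that grow doubly exponentially, and to spread the excess of each slice using the two choices. I would then cite the two properties established for it.
\begin{itemize}
\item[(i)] For any fixed $\setBalls$ and any bin $b$, $\mathbb{E}[\max(0,\mathrm{load}_b-\mu)]=O(1)$. Ideally I would also get an exponential tail, $\Pr[\mathrm{load}_b-\mu\ge k]\le e^{-\Omega(k)}$, or at least enough concentration of the total excess $\sum_b \max(0,\mathrm{load}_b-\mu)=O(n)$ with high probability.
\item[(ii)] The expected number of balls moved on an insertion or deletion is $O(\log\log\mu)$. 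The intended argument is that only $O(\log\log\mu)$ slice levels exist, and each level contributes $O(1)$ expected changed assignments.
\end{itemize}

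\textbf{Step 2: the overload-reducing wrapper.} Given the allocation of $\alg_0$, the balls above height $\mu+c$ in each bin (for a constant $c$) are declared "excess" and are reassigned by a second history-independent process. For this process I would use the Naor--Segev--Wieder component-based orientation from the sparse regime $\mu'<1/2-\Omega(1)$, applied to the excess balls using their alternate choices. The point is that the total number of excess balls is at most $\epsilon n$ with high probability, by taking $c$ large and using (i). So the graph formed by the excess balls is subcritical, and its components have size $O(\log n)$ with high probability, each orientable with load $O(1)$. To avoid conflicts with the heavily loaded bins, the excess graph should use the excess ball's other choice, or fresh hash choices if the paper's two-choice model allows that.

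The overload bound then follows: every bin has at most $\mu+c$ base balls plus $O(1)$ redirected balls, with high probability in $n$.

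\textbf{Step 3: recourse of the composition.} One update changes $O(\log\log\mu)$ base assignments in expectation. Each such change alters the excess status of $O(1)$ balls. Each excess change in turn triggers expected $O(1)$ recourse in the NSW structure, because recourse there is bounded by the expected component size, which is $O(1)$ in the subcritical regime. Linearity of expectation then gives $O(\log\log\mu)$ total.

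\textbf{Main obstacle.} The delicate step is the coupling in Step 3. The excess set depends on the random choices, so its distribution is not that of uniformly random edges. Also, the component containing a changed excess ball is biased toward being large. I would handle this by bounding the expected size of the component containing a fixed changed ball through a branching-process domination: each bin hosts an excess ball with probability at most a small constant, and this holds even conditioned on its neighbors, using negative dependence or an exposure argument. If this is too weak, the fallback is to charge the recourse to $O(\log n)$-size components only with probability $1/\mathrm{poly}$, while keeping expected component size $O(1)$.
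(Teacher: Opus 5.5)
Your high-level decomposition matches the paper's: Slice and Spread followed by a post-processing layer built on the Naor--Segev--Wieder per-component orientation, i.e.\ Theorem~\ref{thm:offline-slice-and-spread} combined with Theorem~\ref{thm:postprocess}. However, Step~2 does not work as stated, for two reasons.

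\textbf{The excess set is not small.} What is proved about Slice and Spread is only that the cumulative overload $\sum_b \max(0,\ell_b-\mu)$ is $O(n)$ w.h.p. There is no per-bin tail. A total of $Cn$ is consistent with $n/(2c)$ bins each overloaded by $2Cc$. That leaves $(C-1/2)n$ balls above $\mu+c$ for every constant $c$, so you cannot make the excess set $\epsilon n$. The paper does not need this: the ECO procedure randomly splits an $O(n)$-edge set into $p^{-1}$ classes, each of which is subcritical.

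\textbf{The excess balls have spoiled randomness.} This is the more serious problem, and it is exactly why the paper does not orient them directly.
\begin{itemize}
\item Which balls are excess is decided by an algorithm that has seen their hashes.
\item All excess balls of a bin $b$ share endpoint $b$, so the excess graph has a degree profile $(k_b)$ dictated by the algorithm.
\item For balls that Slice and Spread already sliced to $h_2$, the ``alternate choice'' is their $h_1$ bin, which was overloaded when they were evicted.
\end{itemize}
Even if the other endpoints were fresh, exploring from a uniformly random endpoint has mean offspring $\sum_b k_b^2/n$. No bound on $\sum_b k_b$ controls this: $\epsilon n/\log n$ bins with $\log n$ excess each give $\epsilon\log n$. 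So the component sizes and cycle counts are unjustified, and with them both the $O(1)$ in-degree and the $O(1)$-per-change recourse in your Step~3. Your proposed domination (``each bin hosts an excess ball with probability at most a small constant'') controls the wrong quantity and is not proved. Fresh hashes are not available in the model.

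\textbf{The missing idea} is the Two-Phase Swapping Procedure combined with the good pre-baking property. Slice and Spread round-assigns only $1\%$ of balls, so Type~1 and Type~2 balls sit at $h_1$ and their $h_2$ is never evaluated. The swap then works as follows:
\begin{itemize}
\item Each excess ball in bin $b$ is traded for a Type~2 ball of $b$, which moves to its fresh $h_2$.
\item In each destination bin, a Type~1 ball is then evicted in turn.
\end{itemize}
The resulting set $B'$ has both endpoints independent and uniform, however the excess was distributed (Lemma~\ref{lem:randomG}). This is what makes Lemma~\ref{lem:goodgraphsa} and the unit-change recourse argument of Lemma~\ref{lem:swappingrecourse2} go through.

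When $\mu=O(\log n)$, a bin may lack enough Type~2 or Type~1 balls to swap. The paper covers these failures with the algorithm-independent superset $F'$ (the sets $X^{(a,b,c,d)}$ and $Y^{(a,c)}$). It proves the system is $F'$-safe (Lemma~\ref{lem:Fsafe}) and analyzes those graphs by Poissonization (Lemmas~\ref{lem:G2} and~\ref{lem:G3}). Your proposal has no counterpart to either piece, and without them Step~2 has no valid random-graph model to which the orientation can be applied.
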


\paragraph{The surprising power of history independence. } Our final result can be viewed as part of a larger trend in recent years, in which history-independent data structures have been able to achieve \emph{better overall bounds} than the prior (non-history-independent!) states of the art \cite{kuszmaul2023strongly,bender2024online,behnezhad2019fully}.  These results suggest that history independence should be viewed not just as a natural privacy guarantee, but also as a powerful algorithmic paradigm for building natural and efficient algorithms.
\section{Preliminaries}\label{sec:preliminaries}

\paragraph{Dynamic balls and bins.} We begin by reviewing the dynamic balls-and-bins setting. 
In a balls-and-bins system, there are a fixed number $n$ of bins. Balls are \defi{inserted} into and \defi{deleted} from the system dynamically over time. There are up to $m$ balls in the system at any time. When a ball is inserted into the system, it is allocated to a bin. When a ball is deleted from the system, it is removed from its bin. Additionally, as balls get inserted or deleted, balls that are currently in bins can get shifted around to other bins. The number of balls that get shifted during a given insertion or deletion is referred to as the \defi{recourse} of the allocation algorithm during that operation.

Balls come from the universe 
$\universe$, and we denote $\setBalls_t \subseteq \universe$ as the set of balls currently in the system at time step~$t$. When there is no ambiguity, we abbreviate to $\setBalls$. An adversary can update the set of balls present in the system by inserting or deleting one ball per time step.

The \defi{state of the system} is (1) a set $\setBalls \subseteq \universe$ of balls, (2) a set of $n$ bins labeled $1,\ldots,n$, and (3) an allocation of each ball $x \in \setBalls$ to exactly one of the $n$ bins. Given a state, the \defi{load} $\ell_i$ of a bin~$i$ is the number of balls allocated to bin~$i$; the \defi{maximum load} is the maximum over all bins of the loads of the bins; the \defi{overload} is the amount by which the maximum load exceeds $m/n$. We will also refer to the \defi{cumulative overload}, which is the total amount by which the bins' loads exceed $m/n$ (i.e., $\sum_{i=1}^n \max(\ell_i - m/n,0)$).

\paragraph{The two-choice paradigm.} In the two-choice paradigm, the system gets two fully random hash functions $\hashOne$, $\hashTwo$: $\universe \rightarrow [n]$. 
The system maintains the invariant that, for all balls $x$ present in the system, $x$ is allocated either to bin $\hashOne[x]$ or $\hashTwo[x]$, which are the two \defi{allowable locations for $x$}. In particular, in each time step $t$, the \defi{allocator} assigns (or reassigns) each ball $x \in \setBalls_t$ to one of its two allowable locations. The allocator may be randomized, in which case, we can model the system as having an additional infinite string $R$ of random bits (distinct from the random hash functions $\hashOne$ and $\hashTwo$).

The goal is to design an allocation strategy in the two-choice setting that achieves a good high-probability bound on the overload for each state, while simultaneously minimizing the expected number of ball movements per insert/delete.

\paragraph{History-independent two-choice allocation.} 
A two-choice allocation algorithm is history independent if, for each $\setBalls$, $\hashOne$, $\hashTwo$, and $\random$, the allocator allocates all $x \in \setBalls$ to bins according to some function $\allocation<\setBalls,\hashOne,\hashTwo,\random> : \setBalls \rightarrow [n]$. When the context is clear, we will drop the subscripts $\hashOne$, $\hashTwo$, and $\random$ and simply write $\allocation$.

Since our allocation is history independent, it is not a function of time but rather only a function of $\setBalls$ (and additional randomness). When analyzing the overload of the algorithm, we'll want to show that, for any set $\setBalls$, the overload is small with high probability in $n$, where the randomness is over $\hashOne$, $\hashTwo$, and $R$.

\paragraph{Recourse for history independent solutions.} Let two sets $\setBalls, \setBallsPrime \subseteq \universe$ be \defi{neighboring sets} if their contents differ by exactly one ball, i.e. $|\setBalls \triangle \setBallsPrime| = 1$.\footnote{The symmetric difference of sets $A$ and $B$ is defined as $A \triangle B \coloneqq (A \setminus B) \cup (B \setminus A)$.} 
For any pair $(\setBalls,\setBallsPrime)$ of neighboring sets, and for any $\hashOne$, $\hashTwo$, $\random$, we are interested in the \defi{recourse} of $\Paren{\setBalls,\setBallsPrime,\hashOne,\hashTwo,\random}$, which is the number of balls whose allocations change between $\setBalls$ and $\setBallsPrime$. More formally, it is defined as
\begin{equation}
\recourse\Paren{\setBalls,\setBallsPrime,\hashOne,\hashTwo,\random} \coloneqq
    1 + \sum_{x \in \setBalls \cap \setBallsPrime} \ind\Paren{\allocation<\setBalls,\hashOne,\hashTwo,\random>[x] \ne \allocation<\setBallsPrime,\hashOne,\hashTwo,\random>[x]}.\footnote{\text{$\ind$ is the indicator function that maps true to $1$ and false to $0$.}}
\end{equation}
The expected recourse $\recourse[\alg]$ of an allocation algorithm $\alg$ is defined as:
\begin{equation}\label{eq:max-exp-recourse}
\recourse[\alg] \coloneqq \max_{\substack{\setBalls, \setBallsPrime \\ |\setBalls \triangle \setBallsPrime| \le 1}} \E_{\hashOne,\hashTwo,\random} [\recourse(\setBalls, \setBallsPrime, \hashOne, \hashTwo, \random)].
\end{equation}

\paragraph{A WLOG assumption on the number of balls. }We conclude the section by noting one WLOG assumption that will be helpful in some sections of the paper. 

Note that, by taking each set of size $k < m$ balls and adding $(m - k)$ \defi{dummy balls} $d_1, \ldots, d_k$ (these are fake balls to pad the size of the set), we can treat any set of size less than $m$ as a set of size $m$. Furthermore, by alternating between $m$ and $m - 1$ balls, we can get between any two sets of size $m$. Therefore, when designing history-independent allocation algorithms, we may assume without loss of generality that all of our sets have sizes $m - 1$ or $m$, and, in particular, that neighboring sets $\setBalls$ and $\setBallsPrime$ have sizes $m-1$ and $m$.
\section{The History-Independent Greedy Algorithm}\label{sec:greedy}
We now formally describe the history-independent greedy allocation strategy, which we call
\defi{HI Greedy}. For each $\setBalls$, $\hashOne$, $\hashTwo$, and $\random$, the HI Greedy allocator $\allocation<\setBalls,\hashOne,\hashTwo,\random>$ can be computed via the following procedure: We imagine that all of the bins are initialized as empty, and that the procedure inserts all of the balls $x \in \setBalls$ one-by-one into the system according to some fixed canonical ordering of the balls. When each $x \in \setBalls$ is inserted, $x$ is then allocated to the bin in $\{\hashOne[x], \hashTwo[x]\}$ that has the smaller load at that moment (ties are broken by using $\hashOne$).

The following proposition describes both the recourse and overload of HI Greedy.
\begin{proposition}
    The expected recourse of HI Greedy is $O(m/n)$. Furthermore, for each set $\setBalls$, with high probability in $n$, the overload is $\log\log n + O(1)$.
    \label{prop:greedy}
\end{proposition}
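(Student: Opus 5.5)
The overload claim is immediate from prior work. For any fixed $\setBalls$, HI Greedy is the greedy two-choice process run on $|\setBalls|\le m$ balls in a fixed canonical order. The choices $\hashOne[x],\hashTwo[x]$ are fully random and independent of that order, so the resulting allocation has exactly the distribution of the classical insertion-only greedy process. The result of Berenbrink, Czumaj, Steger, and V\"ocking~\cite{BCSV00proc} therefore gives maximum load $|\setBalls|/n+\log\log n+O(1)$ with high probability. If $|\setBalls|<m$, we can either pad with dummy balls or just note that $|\setBalls|/n\le m/n$. Either way the overload is $\log\log n+O(1)$ w.h.p.

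For recourse, use the WLOG assumption: $\setBallsPrime=\setBalls\cup\{y\}$ with $|\setBallsPrime|=m$. Let $y$ sit at position $k$ in the canonical order, and couple the two runs using the same $\hashOne,\hashTwo$. The runs agree on every ball before $y$. From then on, track the \emph{difference vector} $D_t=L'_t-L_t$ between the two load vectors after each ball $t\ge k$. Right after $y$ is placed, $D$ is a unit vector $e_{b}$. Consider a later ball $x$ placed differently in the two runs. Since the load vectors differ by a single unit vector, the two runs disagree on $x$ only if $y$'s surplus bin $b$ is one of $x$'s choices and the comparison between $x$'s two bins flips. When that happens, the unit of discrepancy moves from $b$ to the other choice of $x$, so $D$ stays a unit vector $e_{b'}$. (Ties are broken by $\hashOne$ consistently in both runs, so this case check is clean.) Hence the recourse is $1$ plus the number of hops of a single ``hole/extra token'' performing a walk. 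One such hop is possible per subsequent ball whose choices include the current token bin and whose other bin has load equal to, or one off from, the token bin's load.

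It remains to bound the expected number of hops by $O(m/n)$. Each of the at most $m$ later balls hits the current token bin with probability at most $2/n$, independently of the past, because its hashes are fresh. Summing gives expected number of hops at most $2m/n$. This needs only that the token's location is determined by earlier balls, which holds since the choices of ball $t$ are independent of $D_{t-1}$. So the expected recourse is $\le 1+2m/n=O(m/n)$.

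The step needing the most care is the coupling claim that the discrepancy always remains exactly one unit vector, i.e.\ that a differently placed ball transfers the surplus rather than creating a second discrepancy. I would verify this by a short case analysis on whether $b\in\{\hashOne[x],\hashTwo[x]\}$ and on the load comparison under tie-breaking. Tightness, which the paper mentions, is not required by the statement.
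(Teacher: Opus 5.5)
Your proposal is correct and follows the paper's proof. The overload bound is quoted from Berenbrink et al. For recourse, you couple the two worlds and show the load difference is always a single unit that can only move, never split (the paper's Fact~\ref{fact:special}). You then charge each move to a later ball whose fresh hashes hit the current discrepancy bin, which happens with conditional probability at most $2/n$ per ball, for $O(m/n)$ in total.
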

\begin{proof}
The overload being $\log \log n + O(1)$ is directly
implied by Berenbrink, Czumaj, Steger, and V{\"o}cking~\cite{BCSV00proc}. Below
we focus on proving the recourse.

Fix two arbitrary neighboring sets $\setBalls$ and $\setBalls' = \setBalls \cup \{x^*\}$ that differ in only one ball denoted $x^*$.
Consider two parallel worlds--called world $0$ and world $1$--in which we insert the balls in $\setBalls$ and $\setBalls'$, respectively, 
one-by-one according to the canonical ordering. We imagine that we perform one insertion in each of the two worlds in each time step, and we will analyze the difference in the bin loads in the two worlds in each time step.

Let $t^*$ be the time step when $x^*$ is inserted into world $1$. (We may assume that nothing happens in world $0$ at this time step.) For each time step before $t^*$, the two worlds have identical configurations. However, at time step $t^*$, exactly one bin denoted $i^*$ differs in these two worlds. Specifically, bin $i^*$ picks up one more ball in world $1$.  

The key to bounding recourse is to observe the following fact:

\begin{fact}
At the end of every time step $t \geq t^*$, 
exactly one bin differs in load in the two worlds, and 
in this special bin, world $1$ has one more ball than world $0$. 
\label{fact:special}
\end{fact}
\begin{proof}
We can prove this fact inductively for each $t \ge t^*$, where we have just seen that it is true when $t = t^*$.

Suppose the claim is true for some time step $t = k$. We show that the claim is then true for time step $t = k+1$ (assuming the time step exists). When we insert the ball $x$ in both worlds in time step $k+1$, either $(1)$ $x$'s allocation is the same in both worlds, or $(2)$ $x$'s allocation is different in both worlds. In case $1$, the special bin from time step $k$ remains a special bin in time step $k+1$, and furthermore remains the only special bin, thus proving the claim for time step $k+1$ in this case.

In case $2$, suppose that $x$ was allocated to some bin $i_0$ in world $0$ and is allocated to some different bin $i_1$ in world $1$. Note that this is only possible if the special bin is $i_0$. We prove the claim by imagining that in time step $k+1$ we first insert $x$ into world $0$ and then insert $x$ into world $1$. When we first insert $x$ into world $0$, the load of bin $i_0$ in world $0$ catches up to its load in world $1$; in other words, none of the bins at this point are special anymore. When $x$ is then inserted into world $1$, bin $i_1$ in world $1$ gets an extra ball and becomes the special bin between the two worlds, thus proving the claim.
\end{proof}

\paragraph{Completing the proof of Proposition~\ref{prop:greedy}.} When a ball is inserted, it is only possible for it to enter different bins in the two worlds if its two choices include the differing bin in that time step.
The probability that it happens to choose
the differing bin is $O(1/n)$.
Therefore, among all $\le m$ balls inserted, in expectation
$O(m/n)$ of them will enter different bins in the two worlds. Since these are the only balls that contribute to the recourse, the expected recourse
is upper bounded by $O(m/n)$.
\end{proof}

We also prove a lower bound on the expected recourse of HI Greedy, showing that so long as $m = n^{2 - \Omega(1)}$, the expected recourse is \emph{at least} $\Omega(m/n)$. For brevity, the proof is deferred to Appendix \ref{sec:greedylower}.
\begin{restatable}{prop}{propgreedylower}
    If $m \le n^{2 - \Omega(1)}$, then the expected recourse of HI Greedy is $\Omega(m/n)$.
\label{prop:greedylower}
\end{restatable}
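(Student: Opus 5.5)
We prove the lower bound with the same two-world coupling used in the proof of Proposition~\ref{prop:greedy}. Choose the pair of neighboring sets so that the differing ball $x^*$ comes \emph{first} in the canonical ordering: $\setBalls$ has $m-1$ balls and $\setBalls' = \setBalls \cup \{x^*\}$. By Fact~\ref{fact:special}, after step $t^*$ there is always exactly one \emph{special} bin, which holds one extra ball in world $1$. Every ball whose allocation differs between the two worlds moves the special bin. The recourse is therefore at least the number of times the special bin moves. Our goal is to show that it moves $\Omega(m/n)$ times in expectation.

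Next we characterize when the special bin moves. Suppose the special bin is $s$ when ball $x$ is inserted, and $x$'s other choice is $j$. In world $0$, $x$ goes to the less loaded of $s$ and $j$; in world $1$, the load of $s$ is one larger. The allocations differ exactly when $\ell^{(0)}_s \le \ell_j$ and $\ell^{(0)}_s + 1 > \ell_j$ (up to the tie-breaking by $\hashOne$). Thus $\ell_j = \ell^{(0)}_s$ in world $0$, or in one of the two orientations $\ell_j = \ell^{(0)}_s + 1$. In words, the special bin moves whenever $x$ has $s$ as one choice and its other choice $j$ currently has load equal to the world-$0$ load of $s$, with the orientation arranged so the tie breaks the right way. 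A fresh ball hits $s$ as one choice with probability $\approx 2/n$. The ordering of $h_1, h_2$ is independent of everything else, so with constant probability the tie-breaking is favorable. Hence, at each step, the probability of a move is at least
\[
\Omega(1/n)\cdot \Pr[\text{a uniform bin } j \text{ has load } \ell^{(0)}_s].
\]

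The key quantitative step, and the main obstacle, is showing that this last probability is $\Omega(1)$ for a constant fraction of the $m$ steps. Greedy keeps all loads within $t/n \pm O(\log\log n)$ whp at every time $t$. A stronger statement is also known: the load distribution relative to the average has an exponentially decaying, time-invariant profile (Berenbrink et al.~\cite{BCSV00proc}). Consequently a constant fraction $c$ of bins sit at load exactly $\lceil t/n\rceil$ or so, i.e.\ at the most popular level near the mean, at every time $t$. We must also ensure that the special bin $s$ itself usually sits at such a popular level. Here I would argue via the dynamics: right after the special bin moves, it moved to the less loaded choice in world $1$, which has load at most that of a random bin. Between moves, $s$ is a uniformly random-ish bin of typical load. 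So I would show that $s$'s world-$0$ load is within $O(1)$ of the mean with constant probability at each step, conditioned on the history.

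This is where the hypothesis $m \le n^{2-\Omega(1)}$ should enter. When $m/n$ is large, loads can drift relative to one another, and the special bin could get stuck at an atypical level. The condition keeps the number of steps $m$ small enough that the profile and its concentration remain valid via union bounds over all $m$ steps with error $n^{-\Omega(1)}$. It also keeps $\mu \le n^{1-\Omega(1)}$, so per-step failure probabilities summed over $m$ steps stay $o(1)$.

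With these in hand, linearity of expectation gives
\[
\Omega(m)\cdot \Omega(1/n)\cdot \Omega(1) = \Omega(m/n)
\]
expected moves, and hence expected recourse $\Omega(m/n)$. If establishing the stationary profile proves hard, a fallback is a weaker but sufficient claim: the fraction of bins at load exactly $\lfloor t/n \rfloor$ or $\lceil t/n\rceil$ is at least $1/O(\log\log n)$... This would however lose a factor, so I would try the profile argument first.
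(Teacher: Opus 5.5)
\paragraph{Verdict.} Your setup is right and matches the paper's: $x^*$ goes first in the canonical order, you use Fact~\ref{fact:special}, and every differing allocation moves the special bin. Your case analysis is also correct: a move occurs exactly when the ball's other choice $j$ has load equal to the special bin's world-$0$ load $L$ (with $h_1$ pointing at the special bin), or equal to $L+1$ (with $h_1$ pointing at $j$). The gap is in the step you yourself flag as the main obstacle, and the route you sketch for it does not work.

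\paragraph{Where the argument breaks.} Let $N_t$ be the number of bins at load $L_t$ or $L_t+1$. You need $\sum_t N_t=\Omega(mn)$ in expectation, and you try to get it pointwise, i.e.\ $N_t=\Omega(n)$ for a constant fraction of the steps. Neither half of your justification holds up.
\begin{itemize}
\item The available profile bounds (\cite{BCSV00proc}, or the exponential-potential bound of \cite{talwar2014balanced} that the paper uses) only say that most bins lie in a band of $O(1)$ levels around $t/n$. Pigeonhole then gives \emph{some} level with $\Omega(n)$ bins, not that every level in the band is populated. So ``the special bin is within $O(1)$ of the mean'' does not give $N_t=\Omega(n)$. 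Lower-bounding the occupancy of a specific level at a specific time is a harder statement that you have not established.
\item The special bin is not a random bin and is never refreshed. After a move, the new special bin has world-$0$ load $L$ or $L+1$, so it inherits the old level. That level changes only when a ball hashes to the special bin. Hence ``uniformly random-ish between moves'' is unfounded, and a per-step bound ``conditioned on the history'' is false: if the special bin currently sits $10$ levels above $t/n$, its load never decreases, so it stays far from $t/n$ for $\Omega(n)$ steps.
\end{itemize}
Your fallback loses a $\log\log n$ factor, so it does not prove the statement either.

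\paragraph{What the paper does instead.} The paper supplies two ideas you are missing.
\begin{itemize}
\item \emph{Drift argument for the special bin.} It tracks $|L'_t-t/n|$, where $L'_t$ is the special bin's world-$1$ load. Whenever most bins are in the band and the special bin is outside it, the expected change is at most $-1/(2n)$, and it is always at most $3/n$. The increments telescope to a nonnegative quantity, so the special bin is within $c$ of $t/n$ for $\Omega(m)$ steps in expectation (Lemma~\ref{lem:specialband}).
\item \emph{Crossing argument instead of pointwise occupancy.} Over a window of $c'n$ steps, with constant probability no ball hashes to the special bin, so its level is frozen. A constant fraction of bins start below it (Lemma~\ref{lem:belowband}) and almost all end above it (Lemma~\ref{lem:bandhigh}). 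Since loads increase one ball at a time, each such bin must pass through the special bin's world-$0$ load, creating a critical tie.
\end{itemize}
This gives $\Omega(n)$ critical-tie creations per window and $\Omega(m)$ overall. Each creation persists until the next ball touches one of its two bins, which yields $\Omega(1/n)$ expected recourse (Lemma~\ref{lem:tiestorecourse}). This time-integrated count is exactly what your per-step bound was meant to supply, and it needs no information about the occupancy of any individual level.

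\paragraph{Role of the hypothesis.} The condition $m\le n^{2-\Omega(1)}$ does not enter by summing per-step failure probabilities. It enters through McDiarmid's inequality. By Fact~\ref{fact:special}, the number of bins in the band is $2$-Lipschitz in the $m$ insertions, so its fluctuation $\sqrt{m}\log n$ is $o(n)$ only under that hypothesis.
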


\section{Improving the Overload: Slicing and Spreading}\label{sec:slice-and-spread}
In this section, we introduce the Slice and Spread Algorithm (Algorithm~\ref{alg:slice-and-spread}), whose guarantees are given in Theorem~\ref{thm:offline-slice-and-spread}. 
\begin{restatable}{thm}{thmofflinesliceandspread}\label{thm:offline-slice-and-spread}
    The expected recourse of Slice and Spread is $O(\log\log (m/n))$. Furthermore, for each state, the cumulative overload of that state under Slice and Spread is $O(n)$ with high probability in $n$.
\end{restatable}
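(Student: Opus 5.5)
The plan is to analyze Slice and Spread level by level. I take the algorithm (Algorithm~\ref{alg:slice-and-spread}) to work roughly as follows, and the argument depends on this reading. Using the random tape $\random$, it ranks the balls of $\setBalls$ in a canonical random order. It then slices them into consecutive groups $\setBalls_0, \setBalls_1, \ldots, \setBalls_L$ whose expected per-bin sizes $\mu_0 \approx \mu$, $\mu_1 \approx \sqrt{\mu}$, $\mu_2 \approx \mu^{1/4}, \ldots$ shrink doubly exponentially, so $L = O(\log\log \mu)$ and $\mu_L = O(1)$. The first slice is placed by a history-independent rule with a simple structure (e.g.\ single choice). Each later slice is ``spread'' to fill the deficits left by the earlier slices, using both choices.

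Both halves of Theorem~\ref{thm:offline-slice-and-spread} then reduce to a per-level statement. For the overload, I would prove by induction on $j$ that, after slices $0,\ldots,j$ are placed, the total deficit below the running target is only $O(n\sqrt{\mu_j})$ (up to lower-order factors) with high probability in $n$. This is exactly what slice $j+1$, which has about $n\mu_{j+1}$ balls, needs in order to fill the gaps. The base case is a Chernoff bound on single-choice loads: the typical deviation is $\sqrt{\mu}$, and summing over bins gives cumulative deviation $O(n\sqrt{\mu})$.

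For the inductive step I would show that spreading $n\mu_{j+1}$ balls with two choices into a profile whose deficit is spread over many bins shrinks the total deviation to $O(n\sqrt{\mu_{j+1}})$. The intended tool is a witness/layered-induction argument of the type used for greedy in \cite{BCSV00proc}, applied only to the bins below the target. At the last level, slice $L$ holds $O(n)$ balls placed on top of a profile whose cumulative deviation is $O(n)$, so the cumulative overload is $O(n)$ with high probability.

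For recourse, I would reuse the two-worlds coupling from the proof of Proposition~\ref{prop:greedy}. Fix neighboring sets $\setBalls' = \setBalls \cup \{x^*\}$ and couple $\hashOne, \hashTwo, \random$. Because the slicing is defined by random ranks, adding $x^*$ changes slice membership for only $O(1)$ balls in expectation: $x^*$ itself, plus the boundary ball that shifts from one slice to the next at each level. Within a level, the argument of Fact~\ref{fact:special} shows that the two worlds differ in a single ``special'' bin whose identity moves. A ball is placed differently only if one of its two choices hits the special bin, which happens with probability $O(1/n)$ per ball processed after the divergence. I must show that the expected number of such moves at level $j$ is $O(1)$, not $O(\mu_j)$. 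Since levels $j\ge 1$ only fill deficits, the special bin should become ``full'' and stop attracting balls after $O(1)$ expected hits. Summing $O(1)$ over $L = O(\log\log\mu)$ levels gives the recourse bound.

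The main obstacle is this per-level recourse bound of $O(1)$ rather than $O(\mu_j)$, together with controlling how the discrepancy introduced at level $j$ propagates into the inputs of levels $j+1,\ldots,L$. The naive greedy coupling already gives $O(\mu_j)$ per level, which is too weak at level $0$. My first attempt would be to charge each divergence at level $j$ to a bounded-length ``displacement path'' that terminates as soon as it reaches a bin at its target height. I would then show that the path length has an $O(1)$ expected value because the fraction of bins still below target at any moment is bounded below by a constant.
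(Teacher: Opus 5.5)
\textbf{Gap.} Your argument is about a different algorithm from Algorithm~\ref{alg:slice-and-spread}, so even a completed version would not prove this statement. The step you flag as the main obstacle is also where your version breaks down. In the paper, every ball starts in $h_1(x)$. In round $t$, each bin $i$ evicts the canonically first $\min(|C_i^{(t)}|,\ell_i^{(t)}-(\mu-\mu_t))$ of its round-$t$ balls, where round labels are drawn independently per ball from $R$. Each evicted ball goes blindly to the fixed bin $h_2(x)$, and $\mu_t=\mu_{t-1}^{3/4}$. No placement compares two bins, and no ball moves twice.

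For your greedy-style ``fill the deficits with both choices'' levels, the hoped-for $O(1)$ moves per level is not just unproved but doubtful. By Fact~\ref{fact:special}, the discrepancy never disappears inside a level, since some bin always holds the extra ball. So there is no displacement path that ``terminates.'' Once filling has brought a constant fraction of bins into a band of near-equal loads, every later ball hashing to the special bin has constant probability of resolving a near-tie differently in the two worlds. This is the critical-tie mechanism behind the $\Omega(m/n)$ lower bound for HI Greedy in Appendix~\ref{sec:greedylower}, and it suggests $\Theta(\mu_j)$ moves at level $j$. Separately, with fixed-size rank slices, inserting $x^*$ shifts a boundary ball at \emph{every} later level. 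That is $\Theta(L)$ membership changes, not $O(1)$, each injecting fresh discrepancy. Even at $O(1)$ cost per unit of discrepancy per level, this accumulates to $O(L^2)$.

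\textbf{What the paper uses instead.} The recourse bound is deterministic and rests on the structure above. For $(t,i)\neq(t^*,i^*)$, bin $i$'s eviction set is a canonical prefix of the \emph{same} set $C_i^{(t)}$ in both worlds. Its length is $1$-Lipschitz in the bin's own load, so $E_i^{(t)}\le d_i^{(t)}$ (Lemma~\ref{lem:red-balls}). Each red ball carries its unit of discrepancy to a single bin $h_2(x)$, so $d^{(t+1)}\le d^{(t)}$ (Lemma~\ref{lem:bounding-discrepancy}). The only exception is an additive $2$ in the extra ball's own round (Lemma~\ref{lem:special-red-balls}, via a hybrid world); independent per-ball round labels confine this injection to that one round. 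Since $d^{(1)}=1$, every $d^{(t)}\le 3$, and the recourse is at most $3+3T$ for every fixed $h_1,h_2,R$.

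The overload half is likewise not a layered induction. Thresholds $\mu-\mu_t$ with exponent $3/4$ leave a margin $\mu_t^{3/4}\gg\sqrt{\mu_t}$. This lets Lemma~\ref{lem:imbalance} bound the per-round growth of the total threshold error by $n/\mathrm{poly}(\mu_t)$, covering underfilled bins, overfull bins, and bins short on round-$(t+1)$ balls. With square-root shrinkage, a constant fraction of bins would land below the next threshold. Your inductive step, that two-choice filling of a nonuniform profile shrinks the deviation to $O(n\sqrt{\mu_{j+1}})$, is asserted rather than proved.
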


Throughout the section, we will make the WLOG assumption from Section~\ref{sec:preliminaries} that the number of balls is always either $m$ or $m - 1$. 

\subsection{Slice and Spread Overview}
\paragraph{The slice-and-spread gadget.} At the heart of Slice and Spread is a key gadget that we will refer to as the \defi{slice-and-spread gadget}. The gadget first takes as input an initial allocation of balls to bins, where each ball $x$ is placed into $\hashOne[x]$. The gadget works by first picking some \defi{slicing threshold} $\ell$. Then, for all bins that have more than $\ell$ balls, the gadget tries to temporarily remove balls from each of these bins so that each bin's load is reduced to $\ell$. (For reasons we will come back to later, this will sometimes fail.) One can imagine the gadget as cleanly slicing off the top of each bin so each bin has height $\ell$. Finally, the gadget reallocates the removed balls, putting each removed ball $x$ into $\hashTwo[x]$. This is the ``spreading'' part, as one can imagine the gadget spreading out the removed balls among the bins.

\paragraph{Reducing the overload.} The slice-and-spread gadget can be used to reduce the overload. To see how, imagine the scenario in which $m >> n$ balls are allocated into $n$ bins according to their first-choice hash. By a Chernoff bound, with high probability in $n$, this allocation has overload $x = O\Paren{\sqrt{m/n} \cdot \sqrt{\log n} + \log n}$ and furthermore, every bin has load between $m/n \pm x$ with high probability in $n$. Suppose the gadget picks the slicing threshold to be approximately the height $m/n - x$ (the final algorithm will need to pick the threshold more carefully). Then, the gadget will remove roughly $O(xn) = O(\sqrt{mn})$ balls from the system and put them into their second hash. Since the second hashes are independent and uniformly random, then this is equivalent to throwing $\sqrt{mn}$ balls into $n$ bins. The new overload is reduced to $\tilde{O}\Paren{\sqrt{x}}$.

\paragraph{Repeatedly applying the slice-and-spread gadget.} The algorithm repeatedly applies the slice-and-spread gadget in order to continually reduce the overload. The algorithm proceeds in rounds, where in each round, the gadget carefully selects a slicing threshold so that intuitively most of the jaggedness from the bins gets sliced and then spread. However, in order for the slice and spread to reduce the overload, the gadget must be careful to only slice away balls that have not been previously sliced before. This is so that every time balls are spread, they are making use of the fresh randomness coming from their $\hashTwo$. To do this (and, specifically, to do this with good recourse), the algorithm gives balls in $\setBalls$ a round assignment, where a ball can only be evicted in the round if it is also assigned to that round. The details of the round assignment are given in Section~\ref{sec:slice-and-spread-description}. With the round assignments, and with carefully-chosen slicing thresholds, the algorithm is able to reduce the cumulative overload to $O(n)$.

\paragraph{Gadget failures.} Before we present the algorithm in its entirety, it is worth identifying what will end up being one of the primary challenges in the analysis of its overload. In some cases, a bin may contain many balls, but very few (or no) balls assigned to the current round. In this case the slice-and-spread gadget will not be able to bring the load down to the threshold for the current round. These types of \emph{slicing failures} can be avoided in rounds of the algorithm where the number of sliced balls is very large, but become unavoidable in rounds where the number of such balls is $o(n \log n)$. 

Somewhat subtly, \emph{spreading failures} also turn out to be a problem. This happens when the slicing gadget fails to place enough balls in a bin for it to \emph{reach} the threshold in the next round of the algorithm. Spreading failures in one round increase the number of balls that get sliced in the next, and can prevent the number of balls in play from decreasing at the necessary rate between consecutive rounds. 

A key insight in the algorithm is that, if we pick our thresholds correctly, then even if we allow slicing and spreading failures to occur, we can still bound the final number of balls that reside above height $m/n$. At the same time, we will be able to keep the number of rounds that our algorithm has to be $O(\log \log (m/n))$, which will in turn allow us to bound the expected recourse.

\subsection{Slice and Spread Full Description}\label{sec:slice-and-spread-description}
The Slice and Spread Algorithm begins by initializing parameters. These parameters are assumed in the Slice and Spread procedure described in Algorithm~\ref{alg:slice-and-spread}.

\paragraph{Variable initialization.} Let $\mu \coloneqq m/n$ and let $T \coloneqq \log_{4/3} \log (\mu)$. We refer to each $t \in [T]$ as a \defi{round}. For each round $t \in [T]$, the algorithm initializes the following:
\begin{tcolorbox}[colback=gray!20, colframe=gray!60, sharp corners]
\begin{itemize}
    \item $\mu_t \coloneqq (\mu_{t - 1})^{3/4}$, where $\mu_0 \coloneqq \mu$.
    \item $m_t \coloneqq \mu_t \cdot n$, where $m_0 \coloneqq m$. 
    \item $\threshold{t} \coloneqq \mu - \mu_t$, where $\threshold{0} \coloneqq 0$.
\end{itemize}
\end{tcolorbox}
To a first approximation, we can think of $\mu_t$ as being the expected number of balls each bin receives during the spreading stage of each round $t$ if we start with $|\setBalls| = m$ and if we never incur any gadget failures; we can also think of $m_t$ as the number of balls that are thrown in each round $t$ in this scenario. The values $\threshold{t}$ will be the heights at which the algorithm tries to slice each bin during round $t$, and we will refer to them as \defi{slicing thresholds}. 

We remark that, to simplify our exposition throughout the section, we will treat each $\mu_t$ as an integer---this allows us to avoid having to propagate a large number of floors and ceilings throughout the section. 

\paragraph{Round assignments.} Using $\random$, the algorithm also initializes each ball $x \in \setBalls$ to be a \defi{round-assigned ball} with probability $0.01$. (We remark that the reason for assigning balls to rounds with probability $0.01$ becomes apparent only in Section~\ref{sec:postprocessoverload}.) The algorithm makes the additional initializations related specifically to round-assigned balls:
\begin{tcolorbox}[colback=gray!20, colframe=gray!60, sharp corners]
    \begin{itemize}
        \item $\cmap: \universe \rightarrow [\maxRounds]$ maps each round-assigned ball $x \in \setBalls$ to a round $\cmap[x]$ such that each ball independently satisfies  $$\Pr[\cmap[x] = t] = \Theta\Paren{\frac{m_{t - 1}}{m}}$$ for each $t \in [\maxRounds]$. 
        \item $\colorSet[t][i] \coloneqq \{x \in \setBalls \mid \hashOne[x] = i \text{ and } \cmap[x] = t\}$ for each $(t,i) \in [T] \times [n]$.
    \end{itemize}
\end{tcolorbox}
We now give the full details of Slice and Spread below in Algorithm~\ref{alg:slice-and-spread}.

\begin{algorithm}[H]
\caption{Slice and Spread.}
\label{alg:slice-and-spread}
\begin{algorithmic}[1]
    \Procedure{SliceAndSpread}{$\setBalls, \hashOne, \hashTwo, \random$}
        \State Place each $x \in \setBalls$ into bin $\hashOne[x]$.
        \For{each round $t = 1$ to $T$}
            \State \label{line:slice}\defi{Slice:} For each $i \in [n]$, let $\load{t}{i}$ be the current number of balls in bin $i$. For each $i \in [n]$, take the \\\hspace{60pt}$\min(|\colorSet[t][i]|,\load{t}{i} - \threshold{t})$-first balls from $\colorSet[t][i]$ according to a canonical total ordering, remove \\\hspace{60pt}(evict) them from bin $i$, and add them to the \defi{rethrow set} $\rethrow{t}$.
            \State \label{line:spread}\defi{Spread:} For each ball $x \in \rethrow{t}$, place $x$ into bin $\hashTwo[x]$.
        \EndFor
    \EndProcedure
\end{algorithmic}
\end{algorithm}

\subsection{The Cumulative Overload of Slice and Spread}
In this section, we bound the cumulative overload of Slice and Spread:

\begin{restatable}{prop}{propfoo}\label{prop:foo}
    For each $\setBalls \subseteq \universe$, if $|\setBalls| \le m$, then after applying Slice and Spread to $\setBalls$, the cumulative overload is $O(n)$ with high probability in $n$. 
\end{restatable}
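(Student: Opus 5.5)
Throughout, we track a potential that measures how far the configuration is from being level at each round's threshold. For round $t$, define the \emph{excess} $E_t \coloneqq \sum_i \max(\load{t}{i} - \threshold{t}, 0)$ and the \emph{deficit} $D_t \coloneqq \sum_i \max(\threshold{t} - \load{t}{i}, 0)$, both measured just before the slice of round $t$. Since the total number of balls is at most $m = n\mu$ and $\threshold{t} = \mu - \mu_t$, we have
\[
E_t - D_t = |\setBalls| - n\threshold{t} \le n\mu_t = m_t .
\]
So it suffices to control the deficit. At the end, the cumulative overload above height $\mu$ is at most the excess above height $\threshold{T}$ after the last round. We choose $T$ so that $\mu_T = O(1)$, which means height $\threshold{T}$ is only $O(1)$ below $\mu$.

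The plan is to prove, by induction over rounds and with high probability, an invariant of the form $E_{t+1} \le m_t + O(n)$, possibly with a slowly accumulating additive error. First consider the slice at round $t$. A bin $i$ above $\threshold{t}$ fails to reach the threshold only when $|\colorSet[t][i]| < \load{t}{i} - \threshold{t}$. The set $\colorSet[t][i]$ is a binomial with mean $\Theta(m_{t-1}/n) = \Theta(\mu_{t-1})$. By induction, the typical excess per bin is about $\mu_{t-1}$ plus fluctuations. Choosing the constant in $\Pr[\cmap[x]=t]$ large enough relative to $0.01$ makes slicing failures rare. The number of rethrown balls is at most $E_t$, and the unsliced residue is at most $\sum_i \max(\load{t}{i}-\threshold{t}-|\colorSet[t][i]|,0)$, which I bound in expectation by Chernoff tails.

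Next consider the spread. The rethrow set $\rethrow{t}$ has size $r_t \le E_t$. Its balls land in $\hashTwo$ independently and uniformly, using fresh randomness because each ball is rethrown at most once. The excess above $\threshold{t+1} = \threshold{t} + (\mu_t - \mu_{t+1})$ after spreading comes from two sources: the unsliced residue, and bins where the binomial $\mathrm{Bin}(r_t, 1/n)$ plus the current height exceeds $\threshold{t+1}$. Since the threshold rises by $\mu_t - \mu_{t+1} \approx \mu_t$, the new excess per bin is roughly $\E[(\mathrm{Bin} - \mu_t)^+] = O(\sqrt{\mu_t}) \le \mu_{t+1} = \mu_t^{3/4}$. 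The exponent $3/4$ leaves room to absorb constants and the $\sqrt{\log}$ slack.

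The main obstacle is dependence. The first-hash loads, the round assignments, and the second-hash arrivals from earlier rounds are all correlated through which balls got sliced, and spreading failures mean a bin may sit below the next threshold. To handle this, I would condition on the history up to round $t$ and use only the fresh $\hashTwo$ values of $\rethrow{t}$ together with the independent $\colorSet[t+1][i]$. A bin-wise sum of these fresh terms is a sum of negatively associated or bounded-difference variables, so McDiarmid or Chernoff gives concentration of $E_{t+1}$ around its mean, deviating by $O(\sqrt{n\log n}\cdot\mathrm{poly})$. For the late rounds where $\mu_t = O(\log n)$, per-bin tails are no longer whp. Here I would switch to bounding totals only: each round contributes $O(n)$ excess in expectation, with whp concentration of the sum by McDiarmid. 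Because the $\mu_t$ decrease doubly exponentially, only $O(1)$ effective rounds contribute non-negligible error, and summing gives cumulative overload $O(n)$.
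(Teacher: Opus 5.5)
Your opening reduction is correct and matches the paper. The identity $E_t - D_t = |\setBalls| - n\threshold{t}$ is the paper's observation that the number of balls rethrown in round $t$ is $m_t + \errunder{t} - \errover{t}$: your $D_t$ is $\errunder{t}$ and your unsliced residue $U_t$ is $\errover{t}$. At the end, the overload is at most $D_T + m_T$. (To make $D_t = O(n)$ achievable you need the paper's WLOG padding to $|\setBalls|\in\{m-1,m\}$; otherwise $D_t \ge m - |\setBalls| - m_t$.)

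What is missing is the step that carries the proof: showing that $D_t + U_t$ stays $O(n)$ through all rounds. Your stated invariant $E_{t+1} \le m_t + O(n)$ only says, by your own identity, that $D_{t+1} \le m_t - m_{t+1} + O(n)$, which is far too weak. The deficit also cannot be controlled on its own. The number of balls rethrown is $r_t = m_t + D_t - U_t$, so any unsliced residue directly causes a shortfall in arrivals and hence new deficit.

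The paper handles this with a bin-by-bin recursion in which errors carry over with coefficient one. Let $k_i$ be the round-$t$ arrivals to bin $i$ and $c_i$ its number of round-$(t+1)$ balls. Then:
\begin{itemize}
\item the new underfill error of bin $i$ is at most its old one plus $(\mu_t - \mu_t^{3/4} - k_i)^+$;
\item the new overfill error is at most its old one plus $(k_i - \mu_t - \mu_t^{3/4})^+ + (2\mu_t^{3/4} - c_i)^+$.
\end{itemize}
These fresh terms are bounded whp by $n/\poly(\mu_t)$. This comes from a lemma: for $\mu_t n \pm O(n)$ uniform throws, the total deviation of bin loads beyond $\mu_t^{3/4}$ is $n e^{-\Omega(\mu_t^{1/4})}$ in expectation, with McDiarmid concentration when $\mu_t \le \polylog m_t$. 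The hypothesis $r_t = m_t \pm O(n)$ needed there is supplied exactly by the inductive bound $D_t + U_t = O(n)$.

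This quantitative decay is what your late-round plan lacks. You claim only $O(n)$ error per round once $\mu_t = O(\log n)$. But there can be $\Theta(\log\log\log n)$ such rounds, so this sums to $O(n\log\log\log n)$. The doubly exponential decay of $\mu_t$ gives a total of $O(n)$ only once each round's fresh error is bounded by something like $n/\poly(\mu_t)$, and your sketch never derives such a bound.

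Your slicing-failure step also rests on a wrong estimate and an unavailable fix.
\begin{itemize}
\item The excess per bin before round-$t$ slicing is not about $\mu_{t-1}$. Per bin, about $\mu_{t-1}$ balls arrive while the threshold rises by $\mu_{t-1} - \mu_t$. So the excess is about $\mu_t = \mu_{t-1}^{3/4}$, plus $O(\mu_{t-1}^{3/4})$ fluctuation, plus carried residue.
\item You cannot make the constant in the round-assignment probabilities $\Theta(m_{t-1}/m)$ large. The round-$1$ probability is $\Theta(m_0/m) = \Theta(1)$ and is at most one. So a bin holds at most about $0.01\mu$ round-$1$ balls in expectation, whereas your estimate would require about $\mu$.
\end{itemize}
Slicing actually succeeds because of the polynomial gap between the roughly $2\mu_{t-1}^{3/4}$ balls a typical bin must shed and the $\Theta(\mu_{t-1})$ round-$t$ balls it holds.

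Likewise, the threshold rises from round $t$ to $t+1$ by $\mu_t - \mu_t^{3/4}$, not $\approx \mu_t$. It is this $\mu_t^{3/4}$ slack, set against $\sqrt{\mu_t}$-scale fluctuations, that yields the summable $n/\poly(\mu_t)$ bounds in both directions.
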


To show that the cumulative overload is small, we want to be arguing that Slice and Spread is actually ``smoothing out'' the bin loads in every round. Ideally, in each round, it is taking balls from bins with large loads and distributing them to bins with small loads. 

There are a couple of situations which make it difficult for Slice and Spread to smooth out the loads. For example, during the slicing stage in a round $t$, there may be bins that cannot be sliced down to $\threshold{t}$. This is because of at least one of two reasons: either the bin doesn’t have enough round-$t$ assigned balls in its possession to support the removal of balls, or the bin’s load may be exceedingly high. Additionally, there may be bins whose loads are already below the slicing threshold $\threshold{t}$. This implies that an earlier round of Slice and Spread failed to smooth out balls into this bin.

It’s important to note that overfilled and underfilled bins are connected to each other:  having a lot of overfilled bins means that there may be more underfilled bins, and vice versa. More generally, there can be cause-and-effect loops between rounds. If a bin is overloaded (resp.~underloaded) in one round, it is more likely to continue being overloaded (resp.~underloaded) in the next. And, more subtly, if there are many underfilled bins in one round, then this causes more balls to be above height $\threshold{t}$ in that round (namely, the balls that should have been below the threshold in the underfilled bins), which causes more balls to get sliced and spread in that round, which causes more overloaded bins in the next round. Bounding the effect of these types of feedback loops, and more generally bounding the effect of overfilled and underfilled bins on the algorithm, is the main technical challenge in this section. 

To prove Proposition~\ref{prop:foo}, we will make use of some additional notation and definitions. To simplify our indexing of variables, throughout the section, we will consider there to be a \defi{round 0} in which there is a spreading stage (given by Line~\ref{line:spread} in Algorithm~\ref{alg:slice-and-spread}) but no slicing stage. 

With this convention in mind, for $t \in \{0, \ldots, T\}$ and $i \in [n]$, define $\loadi{t}{i}$ to be the number of balls in bin $i$ immediately prior to the spreading stage in round $t$. (For $t = 0$, this gives $\loadi{t}{i} = 0$.)  A bin is \defi{underfilled bad in round $t$} if $\loadi{t}{i} < \threshold{t}$ and is \defi{overfilled bad in round $t$} if $\loadi{t}{i} > \threshold{t}$. Define the \defi{round-$t$ error} of bin $i$ to be $|\loadi{t}{i} - \threshold{t}|$. Define $\errunder{t}$ (resp.~$\errover{t}$) to be the sum of the errors of the underfilled bad (resp.~overfilled bad) bins in round $t$, and define $\err{t} \coloneqq \errunder{t} + \errover{t}$.

We begin our analysis with a technical lemma about throwing balls into bins. We remark that the proof of this lemma makes use of McDiarmid's Inequality \cite{mcdiarmid1989method} (which we restate for reference in Appendix~\ref{app:mcdiarmid}).
\begin{lemma}
Let $\mu \ge 1$ and suppose we throw $m = \mu n \pm O(n)$ balls into $n$ bins uniformly at random. Let $X_i$ denote the number of balls in each bin $i$, and call bin $i$ \defi{imbalanced} if $|X_i - \mu| \ge \mu^{3/4}$. Define $A_i$ to be $0$ if bin $i$ is balanced, and to be $|\mu - X_i|$ if bin $i$ is imbalanced. Then, with high probability in $m$, the number of imbalanced bins is at most $n / \mu^{\omega(1)}$ and $\sum_{i = 1}^{n} A_i \le n / \mu^{\omega(1)}$, where the $\omega(1)$ term is parameterized by $\mu$ (not $n$).
\label{lem:imbalance}
\end{lemma}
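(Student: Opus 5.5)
The plan is to bound the expectations of the two quantities by a per-bin tail bound, and then show concentration with McDiarmid's Inequality, using the fact that each ball affects each sum by only a small amount.

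\textbf{Step 1: per-bin tails.} Each $X_i$ is binomial with mean $\mu' = m/n = \mu \pm O(1)$. Since $\mu^{3/4} - O(1) \ge \mu^{3/4}/2$ for $\mu$ larger than a constant, the event $|X_i - \mu| \ge \mu^{3/4}$ implies $|X_i - \mu'| \ge \mu^{3/4}/2$. A Chernoff bound then gives
\[
\Pr[|X_i - \mu'| \ge k] \le 2\exp\!\big(-\Omega(\min(k^2/\mu, k))\big).
\]
Taking $k = \mu^{3/4}/2$ gives $\Pr[\text{bin } i \text{ imbalanced}] \le \exp(-\Omega(\mu^{1/2})) = \mu^{-\omega(1)}$. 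For the expectation of $A_i$, write
\[
\E[A_i] \le \mu^{3/4}\Pr[\text{imbalanced}] + \sum_{k \ge \mu^{3/4}} \Pr[|X_i - \mu| \ge k].
\]
The tail sum is dominated by its first terms and is also $\exp(-\Omega(\mu^{1/2}))\cdot \mathrm{poly}(\mu) = \mu^{-\omega(1)}$. Hence both $\E[\#\text{imbalanced}]$ and $\E[\sum_i A_i]$ are $n\mu^{-\omega(1)}$. For constant $\mu$ the statement is trivial after adjusting constants, since then $n/\mu^{\omega(1)}$ can absorb everything. The delicate regime is large $\mu$.

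\textbf{Step 2: concentration.} View both sums as functions of the $m$ independent ball positions. Moving one ball changes at most two bins. The count of imbalanced bins therefore changes by at most $2$. The function $\sum_i A_i$ changes by at most $O(\mu^{3/4})$, because a bin crossing the imbalance boundary jumps from $0$ to about $\mu^{3/4}$. McDiarmid with these Lipschitz constants gives deviations of order $\sqrt{m\log m}$ and $\mu^{3/4}\sqrt{m\log m}$ with probability $1-m^{-c}$.

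\textbf{The main obstacle.} Compared with $n\mu^{-\omega(1)}$, these deviations are too large in general: $\sqrt{m}$ is far more than $n/\mu^{C}$ only when $n$ is small relative to $\mu$. I would handle this in one of two ways.

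The first is to interpret ``with high probability in $m$'' in the regime where $n$ is polynomially larger than $\mu$. There $\sqrt{m\log m}\,\mu^{3/4} = n\sqrt{\mu\log m}\,\mu^{3/4}/\sqrt{n}$. This is at most $n/\mu^{\omega(1)}$ provided $n \ge \mu^{\omega(1)}\log m$, which is the case for $\mu \le n^{o(1)}$-type regimes.

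Otherwise, the cleaner fix is to apply McDiarmid to a truncated variable. Replace $A_i$ by $\min(A_i, \mu^{3/4}\log m)$, noting that untruncated $A_i$ exceeds the cap with probability $m^{-\omega(1)}$ by Chernoff plus a union bound. Then condition on the number of imbalanced bins being small and use a Poissonization/negative association Chernoff bound: the indicators of imbalance are negatively associated, as are the capped $A_i$'s, which are monotone functions of disjoint sets of balls. A Chernoff bound for negatively associated bounded variables gives the sum $\le$ its mean $n\mu^{-\omega(1)}$ plus $O(\mu^{3/4}\log^2 m)$ with probability $1-m^{-c}$. When $n\mu^{-\omega(1)}$ is itself smaller than that slack, I would absorb it by choosing the $\omega(1)$ slowly enough, or by noting that it is dominated by polylog terms.

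I expect this concentration step to be the crux, and would first try negative association, since McDiarmid's additive $\sqrt{m}$ error is too coarse.
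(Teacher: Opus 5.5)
\textbf{There is a genuine gap, exactly where you flagged the ``main obstacle.''} The statement places no upper bound on $\mu$. Let $f$ be the $\omega(1)$ function, and suppose $\mu^{f(\mu)} > n$ (for example, $\mu \ge n^{\delta}$ with $\mu$ large enough that $f(\mu) > 1/\delta$). Then the target $n/\mu^{f(\mu)}$ is below $1$, so in that regime the lemma is really asserting that \emph{no} bin is imbalanced.

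No concentration inequality with a positive additive error can give that. McDiarmid leaves a deviation of $\mu^{3/4}\sqrt{m\log m}$. Your negative-association bound leaves $O(\mu^{3/4}\log^2 m)$, which already exceeds $n$ once $\mu \ge n^{4/3}$. ``Choosing the $\omega(1)$ slowly enough'' does not help, because $f$ is a fixed function of $\mu$ that must tend to infinity.

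Separately, your claim that the imbalance indicators and the $A_i$ are negatively associated is false. They are non-monotone in $X_i$. For instance, with $n=2$ and $m=2\mu$, bin $1$ is imbalanced if and only if bin $2$ is. To use NA at all, you would have to split $A_i$ into its upper- and lower-deviation parts, each of which is monotone in $X_i$.

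\textbf{The missing idea is a case split that uses your Step 1 bound directly} rather than a concentration inequality.

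\emph{Large $\mu$.} Suppose $\mu \ge C\log^2 m$ for a large constant $C$. Your per-bin tail gives $\Pr[\text{bin } i \text{ imbalanced}] \le e^{-\Omega(\sqrt{\mu})} \le m^{-c-1}$. A union bound over the $n \le m$ bins then shows that, with high probability in $m$, there are no imbalanced bins at all, so both quantities are exactly $0$. Equivalently, apply Markov's inequality to $\sum_i A_i$, which is either $0$ or at least $\mu^{3/4} \ge 1$.

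\emph{Small $\mu$.} Suppose $\mu < C\log^2 m$. Then $\log m = O(\log n)$, so $\mu$ is polylogarithmic in $n$. Your Step 2 McDiarmid deviation is therefore $\tilde{O}(\sqrt{n})$. This is at most $n/\mu^{f(\mu)}$ for a slowly growing $f$ such as $f(\mu)=\log\mu$, since then $\mu^{f(\mu)} = n^{o(1)}$. No negative association is needed here.

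This is precisely the paper's proof. It splits at $\mu = \log^8 m$ only because it uses the weaker per-bin tail $e^{-\Omega(\mu^{1/4})}$. In the large-$\mu$ case it uses Markov together with the fact that $\sum_i A_i$ is either $0$ or at least $1$.
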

\begin{proof}
Since the number of imbalanced bins is at most $\sum_{i = 1}^n A_i$, it suffices to show that $\sum_{i = 1}^{n} A_i \le n / \mu^{\omega(1)}$. Furthermore, since the result is trivially true for small $\mu = O(1)$, we may assume that $\mu$ is at least a large positive constant, and therefore also that $m \ge n$.

To begin the analysis, let us bound $\E[A_i]$. By a Chernoff bound, we have for all $j \ge \sqrt{h}$ that
\begin{equation}\Pr[|X_i - \mu| \ge j] \le e^{-\Omega(j/\sqrt{\mu})}.
\label{eq:poorchernoff}
\end{equation}
We can therefore bound
\begin{align*}
    \E[A_i] & \le \mu^{3/4} \cdot \Pr[A_i \neq 0] + \sum_{j > \mu^{3/4}} \Pr[A_i \ge j] \\
            & = \mu^{3/4} \cdot \Pr[|X_i - \mu| > \mu^{3/4}] + \sum_{j > \mu^{3/4}} \Pr[|X_i - \mu| \ge j] \\
            & \le \mu^{3/4} \cdot e^{-\Omega(\mu^{1/4})} + \sum_{j > \mu^{3/4}} e^{-\Omega(j / \sqrt{\mu})} \tag{by \eqref{eq:poorchernoff}} \\
            & \le \mu^{3/4} \cdot e^{-\Omega(\mu^{1/4})} + e^{-\Omega(\mu^{1/4})}\sum_{r > 0} e^{-\Omega(r / \sqrt{\mu})}  \\
            & \le \mu^{3/4} \cdot e^{-\Omega(\mu^{1/4})} + e^{-\Omega(\mu^{1/4})} \cdot \frac{1}{1 - e^{\Omega(1 /\sqrt{\mu})}}  \tag{since $\sum t^r = 1 / (1-t)$ for $t < 1$} \\
            & \le \mu^{3/4} \cdot e^{-\Omega(\mu^{1/4})} + e^{-\Omega(\mu^{1/4})} \cdot \Theta(\sqrt{\mu})  \tag{since $1 - e^{1/x} = \Theta(1/x)$ for $x \ge 1$}\\
            & \le e^{-\Omega(\mu^{1/4}) + O(\log \mu)} \\
            & \le e^{-\Omega(\mu^{1/4})}. 
\end{align*}    
It follows that
$$\E\left[\sum_{i = 1}^n A_i\right] \le n e^{-\Omega(\mu^{1/4})}.$$

We complete the proof by considering two parameter regimes, depending on whether $\mu \ge \log^{8} m$. If $\mu \ge \log^{8} m$, then
$$\E\left[\sum_{i = 1}^n A_i\right] \le n e^{-\Omega(\log^2 m)} \le m^{-\omega(1)},$$
where the final inequality uses the fact that $m \ge n$. By Markov's inequality, it follows that $\Pr[\sum A_i \ge 1] \le m^{-\omega(1)}$. Since $\sum_{i = 1}^n A_i$ is either $0$ or at least $1$, it follows that $\sum_{i = 1}^n A_i = 0 \le n / \mu^{\omega(1)}$ with high probability in $m$.

On the other hand, if $\mu \le \log^{8} m$, then we can complete the proof with McDiarmid's inequality. Define $Y_1, \ldots, Y_m$ to be the independent random bin choices for each of the balls $1, \ldots, m$, and define $F(Y_1, \ldots, Y_m) = \sum_{i = 1}^n A_i$. Since $F$ is a $\mu^{3/4}$-Lipschitz function determined by $m$ independent random variables, we have by McDiarmid's inequality (Theorem \ref{thm:mcdiarmid}) that 
$$\Pr[|F - \E[F]| \ge k \mu^{3/4} \sqrt{m}] \le e^{-\Omega(k^2)}.$$
Recalling that $\mu\le \polylog n$ and $m \le n \polylog n$, it follows with high probability in $m$ that
$$|F - \E[F]| \le \tilde{O}(\sqrt{n}).$$
This implies that
$$\sum_{i = 1}^n A_i \le n e^{-\Omega(\mu^{1/4})} + \tilde{O}(\sqrt{n}) \le n / \mu^{\omega(1)} + \tilde{O}(\sqrt{n}).$$
Again using the fact that $\mu \le \log^8 m$, this is at most $n / \mu^{\omega(1)}$. 
\end{proof}

Next we prove a basic lemma relating the error $\err{t}$ to the number of balls thrown in the spreading stage of round $t$.
\begin{lemma}
For $t \in \{0, \ldots, T\}$, if $\err{t} = O(n)$, then the number of balls that get thrown in the spreading stage of round $t$ is $m_{t} \pm O(n)$.  
\label{lem:Ettomt}
\end{lemma}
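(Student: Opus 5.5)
The plan is a pure counting (conservation-of-balls) argument: no probability is needed. It rests on the identity
\[
\sum_{i=1}^n \loadi{t}{i} \;=\; |\setBalls| - |\rethrow{t}|,
\]
which I would verify first. Immediately before the spreading stage of round $t$, every ball of $\setBalls$ is either sitting in some bin or in the rethrow set $\rethrow{t}$, and not both. Balls evicted in earlier rounds have already been re-placed by the earlier spreading stages. The number of balls thrown in round $t$ is therefore exactly $|\rethrow{t}|$. For round $0$ the convention is that all of $\setBalls$ is "thrown" (placed via $\hashOne$) and $\loadi{0}{i}=0$, so the same identity holds there.

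Next, I would compare the total load to the threshold. Writing $\sum_i (\loadi{t}{i} - \threshold{t})$, the triangle inequality gives
\[
\Bigl|\sum_{i=1}^n \bigl(\loadi{t}{i}-\threshold{t}\bigr)\Bigr| \;\le\; \sum_{i=1}^n \bigl|\loadi{t}{i}-\threshold{t}\bigr| \;=\; \err{t},
\]
since bins exactly at the threshold contribute zero error. The underfilled bad and overfilled bad bins account for all the remaining terms. Hence $\sum_i \loadi{t}{i} = n\threshold{t} \pm \err{t}$.

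Finally, I would plug in the parameters. Since $\threshold{t} = \mu - \mu_t$, we get $n\threshold{t} = m - m_t$, and this also holds for $t=0$ because $\threshold{0}=0$ and $m_0=m$. Combining with the identity above,
\[
|\rethrow{t}| \;=\; |\setBalls| - (m - m_t) \mp \err{t}.
\]
By the WLOG assumption, $|\setBalls|\in\{m-1,m\}$, so $|\rethrow{t}| = m_t \pm (\err{t}+1) = m_t \pm O(n)$ whenever $\err{t}=O(n)$.

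There is no real obstacle here. The only step requiring care is the bookkeeping claim that, just before spreading in round $t$, the balls are partitioned exactly between the bins and $\rethrow{t}$. In particular, the slicing stage only removes balls, and the earlier rethrow sets have all been re-placed. The round-$0$ convention must also be checked to fit the same formulas.
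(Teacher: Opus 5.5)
Your proof is correct and is essentially the paper's argument. The paper likewise writes the total load before spreading as $w = n\threshold{t} + \errover{t} - \errunder{t}$ (your triangle-inequality bound is the same identity with absolute values) and takes the number of thrown balls to be $m - w = m_t \pm O(n)$. Your version is slightly more careful in tracking $|\setBalls| \in \{m-1, m\}$ explicitly, whereas the paper tacitly takes $|\setBalls| = m$.
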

\begin{proof}
This holds trivially for $t = 0$, since the number of balls thrown in the spreading stage of round $0$ is exactly $m = m_0$. Fix a $t \in \{1, \ldots, T\}$, and let $w = \sum_{i \in [n]} \loadi{t}{i}$ be the number of balls in bins immediately prior to the spreading stage of round $t$. We have 
\begin{align*}
    w &=
    \sum_{i \in [n]} \loadi{t}{i} \\
    &= \sum_{i \in [n]} \loadi{t}{i} - \threshold{t} + \threshold{t}\\
    &= n\threshold{t} + \sum_{i \mid \loadi{t}{i} > \threshold{t}} \loadi{t}{i} - \threshold{t} + \sum_{i \mid \loadi{t}{i} < \threshold{t}}  -(\threshold{t} - \loadi{t}{i}) \\
    &= n\threshold{t} + \errover{t} - \errunder{t}.
\end{align*}

The number of balls thrown in the spreading stage of round $t$ is therefore
$$m - w = m - n \threshold{t} - \errover{t} + \errunder{t} = m_{t} - \errover{t} + \errunder{t} = m_{t} \pm O(n).$$
\end{proof}

\paragraph{Bounding the underfilled badness.} The next lemma allows us to bound how much $\errunder{t}$ grows when we increment $t$.
\begin{lemma}
Let $t \in \{0, \ldots, T - 1\}$. With high probability in $m_{t}$, if the number of balls thrown in the spreading stage of round $t$ is $m_{t} \pm O(n)$, then $\errunder{t + 1} \le \errunder{t} + n / \poly(\mu_{t})$. 
\label{lem:Eminus}
\end{lemma}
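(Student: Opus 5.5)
The plan is a per-bin accounting argument, with the only probabilistic input being Lemma~\ref{lem:imbalance} applied to the balls thrown in round $t$.

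\textbf{Step 1: a one-step recurrence for each bin.} Fix a bin $i$ and let $Y_i$ be the number of balls that land in bin $i$ during the spreading stage of round $t$. Immediately after that stage, bin $i$ holds $\loadi{t}{i} + Y_i$ balls. The slicing stage of round $t+1$ evicts only $\min(|\colorSet[t+1][i]|, \cdot - \threshold{t+1})$ balls, so it never pushes a bin below $\threshold{t+1}$. Hence
\[
\loadi{t+1}{i} \ge \min\bigl(\loadi{t}{i} + Y_i,\ \threshold{t+1}\bigr).
\]
Therefore the round-$(t+1)$ underfill of bin $i$ is at most $\max\bigl(0,\ \threshold{t+1} - \loadi{t}{i} - Y_i\bigr)$. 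Using $\threshold{t+1} - \threshold{t} = \mu_t - \mu_{t+1} = \mu_t - \mu_t^{3/4}$, we can rewrite
\[
\threshold{t+1} - \loadi{t}{i} - Y_i = \bigl(\threshold{t} - \loadi{t}{i}\bigr) + \bigl(\mu_t - \mu_t^{3/4} - Y_i\bigr).
\]
For $t = 0$ we have $\threshold{0} = 0 = \loadi{0}{i}$, so the first term vanishes.

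\textbf{Step 2: case split on balance.} Call bin $i$ balanced if $|Y_i - \mu_t| < \mu_t^{3/4}$.
\begin{itemize}
\item If bin $i$ is balanced, the second bracket is nonpositive. The underfill of bin $i$ in round $t+1$ is then at most $\max(0, \threshold{t} - \loadi{t}{i})$, which is exactly its round-$t$ underfilled error.
\item If bin $i$ is imbalanced, the second bracket is at most $\mu_t - Y_i \le A_i$, in the notation of Lemma~\ref{lem:imbalance} with $\mu_t$ in place of $\mu$. So the underfill is at most the round-$t$ underfilled error plus $A_i$.
\end{itemize}
Summing over bins gives
\[
\errunder{t+1} \le \errunder{t} + \sum_i A_i.
\]

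\textbf{Step 3: the probabilistic input.} It remains to show $\sum_i A_i \le n/\poly(\mu_t)$ with high probability in $m_t$. We invoke Lemma~\ref{lem:imbalance} with $\mu_t$ in place of $\mu$, using the hypothesis that $m_t \pm O(n)$ balls are thrown. If $\mu_t = O(1)$, the claim is trivial, since $\poly(\mu_t)$ may be taken constant and $A_i \le$ \ldots\ can be absorbed; more precisely, we choose the implicit polynomial so that the bound holds vacuously in that regime.

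The step needing care, and the main obstacle, is justifying that the thrown balls land independently and uniformly, even though the set $\rethrow{t}$ is chosen adaptively.
\begin{itemize}
\item For $t = 0$, the thrown balls are all of $\setBalls$ placed via $\hashOne$, so this is immediate.
\item For $t \ge 1$, every ball in $\rethrow{t}$ lies in some $\colorSet[t][i]$ and thus has $\cmap = t$. Which balls are evicted depends only on $\hashOne$, $\cmap$, the canonical order, and the $\hashTwo$-values of balls with $\cmap < t$. These are disjoint from the round-$t$ balls.
\end{itemize}
So, conditioned on $\rethrow{t}$, the values $\hashTwo[x]$ for $x \in \rethrow{t}$ are still independent and uniform. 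Lemma~\ref{lem:imbalance} then applies conditionally, giving $\sum_i A_i \le n/\mu_t^{\omega(1)} \le n/\poly(\mu_t)$ with high probability in $m_t$, which completes the proof.
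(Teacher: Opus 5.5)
Your proposal is correct and is essentially the paper's proof. The paper likewise writes the round-$(t+1)$ underfill of bin $i$ as $\max(\mu_t - \mu_t^{3/4} + q_i - k_i, 0)$, splits this into the round-$t$ underfill plus $\max(\mu_t - \mu_t^{3/4} - k_i, 0)$, and bounds the sum of the latter by Lemma~\ref{lem:imbalance}. Your balanced/imbalanced case split and your independence argument (round $t$'s eviction set depends only on $h_1$, the round assignments, and the second hashes of earlier-round balls) make explicit two steps the paper uses silently, and your unfinished sentence about $\mu_t = O(1)$ can be dropped because Lemma~\ref{lem:imbalance} already covers that regime.
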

\begin{proof}
Focus on a fixed round $t \in \{0, \ldots, T - 1\}$. For each bin $i$, define $q_i$ so that $\loadi{t}{i} = \threshold{t} - q_i$. (We can think of $q_i$ as simply being a signed version of the round-$t$ error for bin $i$.) Let $k_i$ be the number of balls that the spreading stage in round $t$ places into bin $i$. Then we can express $\errunder{t + 1}$ as 
\begin{align*}
    \errunder{t + 1} & = \sum_i \max(\threshold{t+1} - (\threshold{t} - q_i + k_i), 0) \\
    & = \sum_i \max(\mu_{t} - \mu_{t}^{3/4} + q_i - k_i, 0) \tag{since $\threshold{t+1} - \threshold{t} = \mu_{t} - \mu_{t}^{3/4}$}\\
    & \le \sum_i \max(q_i, 0) + \sum_i \max(\mu_{t} - \mu_{t}^{3/4} - k_i, 0) \\
    & = \errunder{t} + \sum_i \max(\mu_{t} - \mu_{t}^{3/4} - k_i, 0).
\end{align*}
Finally, by Lemma \ref{lem:imbalance}, we have with high probability in $m_{t}$ that
$$\sum_i \max(\mu_{t} - \mu_{t}^{3/4} - k_i, 0) \le n / \poly(\mu_{t}),$$
which completes the proof. 
\end{proof}

\paragraph{Bounding the overfilled badness.} Next we bound the growth of $\errover{t}$ as $t$ increases.  
\begin{lemma}
Let $t \in \{0, \ldots, T - 1\}$. With high probability in $m_{t}$, if the number of balls thrown in the spreading stage of round $t$ is $m_t \pm O(n)$, then $\errover{t + 1} \le \errover{t} + n / \poly(\mu_t)$. 
\label{lem:Eplus}
\end{lemma}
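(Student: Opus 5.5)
The plan is to mirror the proof of Lemma~\ref{lem:Eminus}. The new ingredient is that the slicing stage of round $t+1$ must be used to cancel the typical upward fluctuation of size about $\mu_t^{3/4}$ that spreading creates. As there, write $\loadi{t}{i} = \threshold{t} - q_i$, and let $k_i$ be the number of balls placed into bin $i$ by the spreading stage of round $t$. After spreading, bin $i$ holds $\threshold{t} - q_i + k_i$ balls. Its excess over $\threshold{t+1}$ is therefore
\[
e_i \coloneqq -q_i + k_i - \mu_t + \mu_t^{3/4},
\]
using $\threshold{t+1} - \threshold{t} = \mu_t - \mu_t^{3/4}$.

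Next I would check that every ball of $\colorSet[t+1][i]$ is still in bin $i$ at the start of round $t+1$. Such a ball starts in $\hashOne[x] = i$, and a ball can only be evicted in the round it is assigned to. The slicing stage of round $t+1$ therefore removes exactly $\min(|\colorSet[t+1][i]|, e_i)$ balls when $e_i > 0$. Hence the round-$(t+1)$ overfilled error of bin $i$ is $\max(0, e_i - |\colorSet[t+1][i]|)$. Using $\max(0,a+b-c) \le \max(a,0) + \max(b-c,0)$ with $a = -q_i$, we get
\[
\errover{t+1} \le \sum_i \max(-q_i,0) + \sum_i \max\bigl(0,\, k_i - \mu_t + \mu_t^{3/4} - |\colorSet[t+1][i]|\bigr) = \errover{t} + \sum_i Z_i ,
\]
where $Z_i$ denotes the $i$-th summand of the second sum. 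It remains to show that $\sum_i Z_i \le n/\poly(\mu_t)$ with high probability.

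To bound $Z_i$, split into two cases according to whether bin $i$ is \emph{starved}, meaning $|\colorSet[t+1][i]| < 2\mu_t^{3/4}$. Let $A_i$ be as in Lemma~\ref{lem:imbalance}, applied to the spreading stage of round $t$; this applies because, by hypothesis, $m_t \pm O(n)$ balls are thrown uniformly via $\hashTwo$.
\begin{itemize}
\item If bin $i$ is not starved, then $Z_i \le \max(0, k_i - \mu_t - \mu_t^{3/4}) \le A_i$.
\item If bin $i$ is starved, then $Z_i \le \max(0, k_i - \mu_t) + \mu_t^{3/4} \le A_i + 2\mu_t^{3/4}$.
\end{itemize}
Lemma~\ref{lem:imbalance} gives $\sum_i A_i \le n/\mu_t^{\omega(1)}$ with high probability in $m_t$. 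So it suffices to show that the number of starved bins is at most $n/\poly(\mu_t)$ with high probability; the extra factor $\mu_t^{3/4}$ is absorbed into the polynomial.

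The main obstacle is this starved-bin bound. Each ball lands in $\colorSet[t+1][i]$ independently with probability $\Theta(m_t/m)\cdot 0.01/n$. With about $m$ balls, $|\colorSet[t+1][i]|$ is binomial with mean $\Theta(\mu_t)$, which is at least $c\mu_t$ for a small constant $c$. When $\mu_t$ exceeds a sufficiently large constant, we have $2\mu_t^{3/4} \le c\mu_t/2$, and a Chernoff bound gives $\Pr[\text{bin } i \text{ starved}] \le e^{-\Omega(\mu_t)}$. Concentration of the number of starved bins then follows as in the proof of Lemma~\ref{lem:imbalance}. The count is a $1$-Lipschitz function of the independent per-ball choices, so if $\mu_t \le \polylog$ we apply McDiarmid's inequality (Theorem~\ref{thm:mcdiarmid}); otherwise Markov's inequality shows there are no starved bins with high probability.

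The starved-bin count depends only on $\hashOne$ and the round assignment, while the $A_i$ are handled by Lemma~\ref{lem:imbalance} on $\hashTwo$. We therefore only need a union bound over the two events and never independence between them. Finally, if $\mu_t = O(1)$, the trivial bound $Z_i \le A_i + O(\mu_t)$ already gives $\sum_i Z_i = O(n) = n/\poly(\mu_t)$, so the constant-$\mu_t$ rounds are harmless.
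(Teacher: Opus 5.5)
\textbf{Verdict.} Your proposal follows the paper's argument, but one concentration step fails as written when $m/n$ is large. It needs the conditioning the paper uses.

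\textbf{Agreement with the paper.} Your decomposition matches the paper's proof. The paper also writes the round-$(t+1)$ overfill error of bin $i$ as $\max(0, e_i - c_i)$, where $c_i$ is the number of round-$(t+1)$ balls with first hash $i$. It then splits off the round-$t$ overfill error and bounds the rest by $a_i + b_i$, with $a_i = \max(k_i - \mu_t - \mu_t^{3/4}, 0)$ and $b_i = \max(2\mu_t^{3/4} - c_i, 0)$. Your starved/non-starved split, which charges $2\mu_t^{3/4}$ per starved bin, handles the $c_i$-deficit equivalently. Your pointwise bounds $Z_i \le A_i$ and $Z_i \le A_i + 2\mu_t^{3/4}$ are correct, as is the Markov branch for large $\mu_t$.

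\textbf{The gap: concentration of the starved-bin count.} You treat the count as a $1$-Lipschitz function of the independent choices (round assignment and $h_1$-value) of all $\approx m$ balls, and then apply McDiarmid's inequality. With $\approx m$ variables, McDiarmid only controls deviations of order $\sqrt{m}$ times a logarithmic factor. That already exceeds $n$ once $m \ge n^2$.

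Since $\mu_t$ drops to $O(1)$ by round $T$ no matter how large $\mu$ is, the McDiarmid branch ($\mu_t \le \polylog n$) is exactly the one you need in the late rounds. So for $m/n \gtrsim n$ your bound on the number of starved bins is vacuous. The analogy with the proof of Lemma~\ref{lem:imbalance} also breaks there: that proof uses $\mu \le \log^8 m$ precisely to ensure there are only $m \le n \polylog n$ variables.

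\textbf{The fix, which is what the paper does.}
\begin{enumerate}
\item Use a Chernoff bound to show that the total number of round-$(t+1)$ balls is $\Theta(m_t)$, with high probability in $m_t$.
\item Condition on the round assignments, and concentrate only over the $h_1$-values of these $\Theta(m_t)$ balls, which are still independent and uniform. The paper does this by applying Lemma~\ref{lem:imbalance} to them, with mean load $\Theta(\mu_t)$ per bin.
\end{enumerate}
The deviation is then $\tilde O(\sqrt{m_t}) = \tilde O(\sqrt{n})$ when $\mu_t \le \polylog n$. This remains of the form $n/\poly(\mu_t)$ even after multiplying by $2\mu_t^{3/4}$. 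With this change your proof goes through.
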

To understand $\errover{t+1}$, it is helpful to consider the contribution of each bin $i$ to $\errover{t}$. Recall that bin $i$ contributes to $\errover{t}$ if it is overfilled bad in round $t$. In order for bin $i$ to be overfilled bad, it must have come across at least one of two failure modes: (1) The total number of round-$t$ balls in the bin is too small or (2) the total number of balls in the bin is too large. Before proving Lemma~\ref{lem:Eplus}, we show that the accumulation of these failure events across all bins is small with high probability. As we will see, we can prove this for both failure events via straightforward applications of Lemma~\ref{lem:imbalance}.

We begin with the first failure event:
\begin{clm}\label{clm:not-enough-round-t}
    Let $c_i \coloneqq |\colorSet[t+1][i]|$ be the number of balls in bin $i$ with round assignment $t+1$. Let $b_i \coloneqq \max(2 \mu_t^{3/4} - c_i, 0)$.
    \begin{equation}\label{eq:bi}
        \sum_{i = 1}^n b_i \le n / \poly(\mu_t)
    \end{equation}
    with high probability in $m_t$. 
\end{clm}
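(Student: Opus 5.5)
We reduce the claim to Lemma~\ref{lem:imbalance}, viewing the round-$(t+1)$-assigned balls as balls thrown uniformly into bins via $\hashOne$.

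\textbf{Step 1: the count $c_i$ is a balls-into-bins load.} Each ball $x \in \setBalls$ independently becomes round-assigned with probability $0.01$. Conditioned on that, it gets $\cmap[x] = t+1$ with probability $\Theta(m_t/m)$. Its first hash $\hashOne[x]$ is uniform and independent of both events. So the number $M$ of balls with round assignment $t+1$ is a sum of $|\setBalls| \in \{m-1, m\}$ independent Bernoullis with mean $\Theta(m_t)$. Conditioned on $M$, these balls are thrown independently and uniformly into the $n$ bins by $\hashOne$, and $c_i$ is exactly the load of bin $i$ in that throw.

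\textbf{Step 2: calibrate the mean.} We use the freedom in the $\Theta(\cdot)$ in the definition of $\cmap$. Choose the constant so that $\E[M] = C m_t$ for a suitable constant $C \ge 4$, so that $C\mu_t - 2\mu_t^{3/4} \ge \mu_t \ge \mu_t^{3/4}$ whenever $\mu_t \ge 1$. By a Chernoff bound, $M = C m_t \pm O(\sqrt{m_t \log m_t})$ with high probability in $m_t$; this is $C\mu_t n \pm O(n)$ when $m_t \ge n$. Alternatively, we can simply condition on $M$ and apply Lemma~\ref{lem:imbalance} with $\mu' = M/n$, which is within a constant factor of $C\mu_t$.

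\textbf{Step 3: apply Lemma~\ref{lem:imbalance}.} Take $\mu' = M/n = \Theta(\mu_t)$. If bin $i$ has $b_i > 0$, then $c_i < 2\mu_t^{3/4} \le \mu' - \mu'^{3/4}$, using the choice of $C$ and assuming $\mu_t$ is at least a large constant. So bin $i$ is imbalanced in the sense of Lemma~\ref{lem:imbalance}, and moreover
\[
b_i \le 2\mu_t^{3/4} \le \mu' - c_i = A_i .
\]
Hence $\sum_i b_i \le \sum_i A_i \le n/\mu'^{\omega(1)} \le n/\poly(\mu_t)$, with high probability in $M$. Since $M = \Theta(m_t)$ with high probability, this is also with high probability in $m_t$.

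\textbf{Main obstacle: small $\mu_t$.} When $\mu_t = O(1)$, the comparison $2\mu_t^{3/4} \le \mu' - \mu'^{3/4}$ may fail. But then the claimed bound $n/\poly(\mu_t)$ is $\Theta(n)$ for a suitable polynomial, and the claim holds trivially since $b_i \le 2\mu_t^{3/4} = O(1)$ gives $\sum_i b_i = O(n)$. So we may assume $\mu_t$ exceeds a large constant, and the constants in Step 2 can be fixed accordingly.
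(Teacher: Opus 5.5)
Your argument is essentially the paper's: a Chernoff bound gives $M=\Theta(m_t)$ with high probability, you condition on $M$ and apply Lemma~\ref{lem:imbalance} with $\mu'=M/n=\Theta(\mu_t)$, bound each $b_i$ by the corresponding $A_i$ once $\mu_t$ is a large constant, and dispose of $\mu_t=O(1)$ trivially. One correction: the calibration $\E[M]=Cm_t$ with $C\ge 4$ is not available. For $t=0$ it would require $\E[M]\ge 4m_0=4m>|\setBalls|$, and the round-assignment probabilities, including the factor $0.01$, must sum to at most $1$. It is also unnecessary, since for any fixed $a>0$ with $\mu'\ge a\mu_t$ the inequality $2\mu_t^{3/4}\le \mu'-\mu'^{3/4}$ holds once $\mu_t$ exceeds a constant depending on $a$; the threshold on $\mu_t$ should therefore be chosen as a function of the given $\Theta$-constant, not the other way around.
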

\begin{proof}
    Note that \eqref{eq:bi} is trivial for $\mu_t = O(1)$, so we will assume without loss of generality that $\mu_t$ is at least a large positive constant. 
    
    To establish \eqref{eq:bi}, let us first consider the \emph{total number} $c^{(t+1)}$ of balls with round assignment $t+1$ in the system. Since each ball has a $\Theta(m_{t} / m)$ probability of having round assignment $t+1$, the expected value $\E[c^{(t+1)}]$ is $\Theta(m_{t})$. By a Chernoff bound, it follows with high probability in $m_t$ that $c^{(t+1)} = \Theta(m_t)$.
    
    Conditioning on $c^{(t+1)} = \Theta(m_t)$, and applying Lemma \ref{lem:imbalance}, we have with high probability in $c^{(t+1)}$ (and thus with high probability in $m_t$) that 
    $$\sum_{i = 1}^n \max(c^{(t+1)} / n - c_i - (c^{(t+1)}/n)^{3/4}, 0) \le n / \poly(c^{(t+1)}/n) \le n / \poly(\mu_t).$$
    Since $c^{(t+1)}/n = \Theta(\mu_{t})$, and since $\mu_{t}$ is at least a sufficiently large positive constant, we have that\\ $c^{(t+1)} / n - c_i - (c^{(t+1)}/n)^{3/4} \ge 2\mu_t^{3/4} - c_i$. It follows that
    $$\sum_{i=1}^n b_i = \sum_{i = 1}^n \max(2 \mu_t^{3/4} - c_i, 0) \le n / \poly(\mu_t),$$
    and therefore that \eqref{eq:bi} holds.
\end{proof}
The second failure event follows by a straightforward application of Lemma~\ref{lem:imbalance}:
\begin{clm}\label{clm:overfill-is-rare}
    Suppose that the total number of balls thrown in round $t$ is $m_t \pm O(n)$. Let $k_i$ be the number of balls that the spreading stage in round $t$ places into bin $i$, and let $a_i = \max(k_i - (\mu_t + \mu_t^{3/4}), 0)$. Then, with high probability in $m_t$,
    $$\sum_{i=1}^n a_i \le n/\poly(\mu_t).$$
\end{clm}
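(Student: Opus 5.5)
The plan is to read Claim~\ref{clm:overfill-is-rare} as a direct instance of Lemma~\ref{lem:imbalance}, applied to the spreading stage of round $t$.

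\textbf{Setting up the throw.} Every ball placed during the spreading stage of round $t$ goes to bin $\hashTwo[x]$ (or to $\hashOne[x]$ when $t = 0$). For $t \ge 1$, a ball lies in $\rethrow{t}$ only if it has round assignment $t$, so its $\hashTwo$ value has not been used to place it before. Whether a ball is sliced depends only on $\hashOne$, $\random$, and the $\hashTwo$ values of balls from earlier rounds. So, conditioned on the identity of $\rethrow{t}$, the spreading stage is a throw of $|\rethrow{t}|$ balls into $n$ bins independently and uniformly at random. By hypothesis $|\rethrow{t}| = m_t \pm O(n) = \mu_t n \pm O(n)$.

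\textbf{Applying the lemma.} We apply Lemma~\ref{lem:imbalance} with $\mu$ replaced by $\mu_t$ and with $X_i = k_i$, conditioning on the rethrow set. We may assume $\mu_t$ exceeds a large constant, since otherwise the claim is trivial: the bound $n/\poly(\mu_t)$ is then $\Omega(n)$, and $\sum_i a_i \le |\rethrow{t}| = O(n)$ once $\mu_t = O(1)$.

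We then compare $a_i$ with the quantity $A_i$ from the lemma, bin by bin.
\begin{itemize}
\item If $a_i > 0$, then $k_i > \mu_t + \mu_t^{3/4}$. So bin $i$ is imbalanced, and $A_i = k_i - \mu_t \ge a_i$.
\item If $a_i = 0$, then trivially $a_i \le A_i$.
\end{itemize}
Hence $\sum_i a_i \le \sum_i A_i \le n/\mu_t^{\omega(1)} \le n/\poly(\mu_t)$, with high probability in $|\rethrow{t}| = \Theta(m_t)$, which is the same as with high probability in $m_t$. Averaging over the conditioning preserves the bound.

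\textbf{Main obstacle.} The only delicate step is justifying that the spread balls' destinations are fresh and independent, given how the set $\rethrow{t}$ was selected. I would argue this by revealing $\hashTwo$ lazily, round by round: the slicing decision in round $t$ depends only on current loads and the fixed round assignments $\cmap$, never on $\hashTwo$ values of round-$t$ balls. The small $\pm O(n)$ discrepancy in the number of balls is already allowed by Lemma~\ref{lem:imbalance}.
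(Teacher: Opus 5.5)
Your proposal is correct and matches the paper's proof, which simply invokes Lemma~\ref{lem:imbalance} for the round-$t$ spreading stage. You also make explicit three points the paper leaves implicit, and each is sound: the round-$t$ balls' $h_2$ values are still fresh given how the rethrow set is chosen, the bin-by-bin bound $a_i \le A_i$ holds, and the case $\mu_t = O(1)$ is trivial.
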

\begin{proof}
    By Lemma~\ref{lem:imbalance}, 
    \begin{equation}\label{eq:ai}
        \sum_{i = 1}^n a_i = \sum_{i = 1} \max(k_i - (\mu_t + \mu_t^{3/4}), 0) \le n / \poly(\mu_t)
    \end{equation}
    with high probability in $m_t$.
\end{proof}

We can now prove Lemma~\ref{lem:Eplus}.
\begin{proof}[Proof of Lemma~\ref{lem:Eplus}]
Focus on a bin $i$ in round $t$, and define $q_i$ so that $\loadi{t}{i} = \threshold{t} + q_i$. Let $k_i$ be the number of balls that the spreading stage in round $t$ places into the bin $i$. After the spreading stage in round $t$, bin $i$ has $\threshold{t} + q_i + k_i$ balls. During the slice step in round $t + 1$, the bin would like to evict $\max(\threshold{t} + q_i + k_i - \threshold{t+1}, 0)$ balls with round assignment $t+1$. So, if the bin has $c_i \coloneqq |\colorSet[t+1][i]|$ balls with round assignment $t+1$, its overfill error in round $t + 1$ (i.e., its contribution to $\errover{t + 1}$) will be 
$$\max(\max(\threshold{t} + q_i + k_i - \threshold{t+1},0) - c_i, 0) = \max(\threshold{t} + q_i + k_i - \threshold{t+1} - c_i, 0).$$
Since $\threshold{t+1} - \threshold{t} = \mu_t - \mu_t^{3/4}$, our bound on the bin's overfill error is at most
$$\max(q_i, 0) + \max(k_i - c_i - \mu_t + \mu_t^{3/4}, 0).$$
Note that if $k_i \le \mu_t + \mu_t^{3/4}$ and $c_i \ge 2\mu_t^{3/4}$, then $\max(k_i - c_i - \mu_t + \mu_t^{3/4}, 0) = 0$. Thus we will bound $\max(k_i - c_i - \mu_t + \mu_t^{3/4}, 0)$ by bounding the amount by which $k_i$ exceeds $\mu_t + \mu_t^{3/4}$ or by which $c_i$ goes below $2\mu_t^{3/4}$. Specifically, if we define $a_i \coloneqq \max(k_i - (\mu_t + \mu_t^{3/4}), 0)$ and $b_i \coloneqq \max(2 \mu_t^{3/4} - c_i, 0)$, then we have
$$\max(q_i, 0) + \max(k_i - c_i - \mu_t + \mu_t^{3/4}, 0) \le \max(q_i, 0) + a_i + b_i.$$
This bounds each bin $i$'s contribution to $\errover{t + 1}$ by $\max(q_i, 0) + a_i + b_i$. Summing over the bins gives
$$\errover{t + 1} \le \sum_{i = 1}^n (\max(q_i, 0) + a_i + b_i) = \errover{t} + \sum_{i = 1}^n a_i + \sum_{i = 1}^n b_i.$$ By Claim~\ref{clm:overfill-is-rare} and Claim~\ref{clm:not-enough-round-t}, each of $\sum_{i = 1}^n a_i$ and $\sum_{i = 1}^n b_i$ are of the form $n / \poly(\mu_t)$ with high probability in $m_t$, completing the proof.

\end{proof}

Putting the previous lemmas together, we can now obtain a good overall bound on $\err{t}$.
\begin{lemma}
With high probability in $m_t$, we have
$$\err{t} \le n / \poly(\mu_t).$$
\label{lem:Etbound}
\end{lemma}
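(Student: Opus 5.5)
We prove the statement by induction on $t$, chaining Lemmas~\ref{lem:Ettomt}, \ref{lem:Eminus}, and \ref{lem:Eplus}. The invariant we maintain is
$$\err{s} \le \sum_{r < s} 2n/\poly(\mu_r) \quad \text{for all } s \le t.$$
The base case is $t = 0$. Every bin has $\loadi{0}{i} = 0 = \threshold{0}$, so $\err{0} = 0$.

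For the inductive step, assume $\err{t} = O(n)$. Lemma~\ref{lem:Ettomt} then says that round $t$ spreads $m_t \pm O(n)$ balls. This is exactly the hypothesis of Lemmas~\ref{lem:Eminus} and \ref{lem:Eplus}. Applying both, with high probability in $m_t$ we get
$$\err{t+1} = \errunder{t+1} + \errover{t+1} \le \err{t} + 2n/\poly(\mu_t).$$
Unrolling the recursion gives
$$\err{t} \le \sum_{r=0}^{t-1} n/\poly(\mu_r).$$

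Next we show this sum is dominated by its last term. Since $\mu_{r} = \mu_{r+1}^{4/3}$, we have $\mu_{t-1-j} = \mu_{t-1}^{(4/3)^j}$. So if the polynomial has exponent $c$, the terms are $\mu_{t-1}^{-c(4/3)^j}$, which decay super-geometrically in $j$ once $\mu_{t-1}$ exceeds a large constant. Hence the sum is $O(n/\poly(\mu_{t-1}))$. Using $\mu_{t-1} = \mu_t^{4/3}$, this is $n/\poly(\mu_t)$.

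Two edge cases remain. If $\mu_t = O(1)$, the claimed bound is just $O(n)$ and the induction still delivers it. The terms with small $\mu_r$ occur only at the final few rounds, and there are $O(1)$ of them, each contributing $O(n)$. Since $\mu_T = \mu^{(3/4)^T}$ is a constant, all $\mu_r$ with $r \le T$ stay at least a constant. This constant-sized slack is harmless because the target is $n/\poly(\mu_t)$ with implicit constants. The $O(n)$ hypothesis needed to keep applying Lemma~\ref{lem:Ettomt} is preserved, because the accumulated sum is $O(n)$ throughout.

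The main obstacle is combining the failure probabilities. Each step fails with probability $1/\poly(m_r)$ for $r < t$. Since $m_r \ge m_t$, each failure probability is at most $1/\poly(m_t)$. A union bound over at most $t \le T = O(\log\log \mu)$ steps then gives total failure probability $O(\log\log\mu)/\poly(m_t)$. Because $\log\log\mu \le m_t$ (as $m_t \ge n\mu_t \ge n$ and $\mu \le m$), this is still $1/\poly(m_t)$, i.e., high probability in $m_t$.

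A subtlety is that the events for earlier rounds are stated conditionally: round $r$'s lemma needs the round-$r$ spread count to be $m_r \pm O(n)$. We handle this by phrasing each lemma as ``with high probability, either the hypothesis fails or the conclusion holds.'' On the intersection of all these good events, the induction goes through deterministically.
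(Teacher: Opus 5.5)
Your argument is the same as the paper's. You induct from zero error in round $0$ and use Lemma~\ref{lem:Ettomt} to supply the $m_t \pm O(n)$ hypothesis of Lemmas~\ref{lem:Eminus} and~\ref{lem:Eplus}. You then collapse the accumulated sum to $n/\poly(\mu_t)$ using the doubly exponential growth of $\mu_r$ as $r$ decreases. Your explicit handling of the conditional events (``either the hypothesis fails or the conclusion holds'') usefully spells out something the paper leaves implicit.

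The one step that does not hold as written is the union bound over failure probabilities.
\begin{itemize}
\item The inequality $\log\log\mu \le m_t$ does not follow from $m_t \ge n$ and $\mu \le m$, and it is false in general. At $t = T$ you have $m_T = \Theta(n)$, while $\log\log\mu$ is not bounded by any function of $n$; the inequality already fails for $\mu \ge 2^{2^{3n}}$.
\item More to the point, $T$ need not be $\poly(m_T) = \poly(n)$, so the factor $T$ cannot be absorbed into $1/\poly(m_t)$.
\end{itemize}

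The fix is to bound the failure probabilities the same way you bounded the errors, which is what the paper does when it writes $\sum_{q \le t} 1/\poly(m_q) = 1/\poly(m_t)$. Since $m_r = n\mu_t^{(4/3)^{t-r}}$ and $\mu_t \ge \mu_T$ is at least a constant greater than $1$, for every constant $c$ we get
\[
\sum_{r<t} m_r^{-c} \;\le\; m_t^{-c}\sum_{j \ge 1} \mu_t^{-c((4/3)^j - 1)} \;=\; O(m_t^{-c}).
\]
This bound does not depend on $T$ at all, so it holds in every regime of $\mu$.
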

\begin{proof}
By Lemmas \ref{lem:Ettomt}, \ref{lem:Eminus}, and \ref{lem:Eplus} for each $t \in \{0, \ldots, T - 1\}$, we have that, with high probability in $m_t$, if $\err{t} = O(n)$, then 
$$\err{t + 1} \le \err{t} + n / \poly(\mu_t).$$
Since $\err{0} = 0 = O(n)$, it follows by induction on $t$ for every $t \in \{0, \ldots,  T - 1\}$ that, with probability at least 
$$1 - \sum_{q = 0}^{t} 1/ \poly(\mu_q),$$
we have
$$\err{t + 1} \le \sum_{q = 0}^{t} n / \poly(\mu_q) \le O(n).$$
Since $\sum_{q = 0}^{t} 1 / \poly(m_q) = 1 / \poly(m_t)$ and $\sum_{q = 0}^{t} n / \poly(\mu_q) = n / \poly(\mu_t)$, the lemma follows. 
\end{proof}

Finally, we can establish a bound on the total number of balls above height $\mu = m/n$ at the end of the construction:

\propfoo*
\begin{proof}
Applying Lemma \ref{lem:Etbound} to $t = T$, we have with high probability in $n$ that $\err{T} \le O(n)$. Since $\err{T} \le O(n)$, we know from Lemma \ref{lem:Ettomt} that the number $K$ of balls thrown in the spreading stage of round $T$ is at most $m_T = O(n)$. The total number of balls above height $\threshold{T} \le m/n$ at the end of the algorithm is therefore at most $\err{T + 1} + K \le O(n)$. 
\end{proof}

\subsection{The Recourse of Slice and Spread}\label{sec:slicespread}
In this section, we bound the expected recourse of Slice and Spread. To do so, we introduce Proposition~\ref{prop:recourse} which bounds the recourse for each $(\setBalls,\setBallsPrime,\hashOne,\hashTwo,\random)$:
\begin{restatable}{prop}{proprecourse}\label{prop:recourse}
    Fix any neighboring datasets $\setBalls$ and $\setBallsPrime$, and $h_1$, $h_2$, and $R$. The recourse of $(\setBalls,\setBallsPrime, h_1, h_2, R)$ incurred by Slice and Spread is $O(\log\log (m/n))$.
\end{restatable}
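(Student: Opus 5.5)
We run Slice and Spread in parallel on $\setBalls$ (world $0$) and on $\setBallsPrime = \setBalls \cup \{x^*\}$ (world $1$), with the same $\hashOne,\hashTwo,\random$, exactly as in the proof of Proposition~\ref{prop:greedy}. The hashes, the round-assignment bits, and $\cmap$ are shared by every ball, so the two worlds differ only in the presence of $x^*$. The goal is an invariant that holds deterministically at every stage. After the initial placement, and after each slice and each spread stage, the two worlds differ in a bounded number of ``discrepancies''. Here a discrepancy is either a ball located in different bins in the two worlds (or present in only one world), or a bin whose load differs. We will show that each stage increases the number of discrepancies by at most $O(1)$. Since there are $T = O(\log\log(m/n))$ rounds, the total number of balls ever placed differently is $O(T)$, which bounds the recourse.

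\textbf{Step 1 (invariant).} We claim the following holds before each slice stage. The multiset of bin-load differences between world $1$ and world $0$ has total $\ell_1$-mass $O(t)$, at most $O(t)$ balls sit in different places, and every displaced ball is either already thrown to $\hashTwo$ in one world only, or is $x^*$. Initially only $x^*$ differs, and it contributes $+1$ to bin $\hashOne[x^*]$ in world $1$.

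\textbf{Step 2 (slice stage).} In bin $i$, the algorithm evicts the first $\min(|\colorSet[t][i]|,\load{t}{i}-\threshold{t})$ balls of $\colorSet[t][i]$ in the canonical order. The set $\colorSet[t][i]$ is identical in both worlds, except possibly for the membership of $x^*$ or of displaced balls. Therefore, if the load of bin $i$ differs by $d_i$ between the worlds, and its relevant color set differs by $e_i$ elements, then the evicted prefixes differ by at most $|d_i|+e_i$ balls. This is where the canonical prefix rule is essential, because prefixes of a common ordered list have nested structure. Summing over bins, the rethrow sets $\rethrow{t}$ in the two worlds differ by at most the current discrepancy mass. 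The important point is that these newly differing balls are exactly compensated by the load differences they erase. The plan is to make this precise as a potential argument: a ball evicted in one world but not the other cancels one unit of load difference in its $\hashOne$ bin and creates one unit in its $\hashTwo$ bin. Under this argument the total load-difference mass does not grow, and only the number of displaced balls grows.

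\textbf{Step 3 (spread stage).} The spread stage places each ball of $\rethrow{t}$ at $\hashTwo$ deterministically. It therefore moves differences around without creating new ones, except for those balls that are in the symmetric difference of the two rethrow sets.

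The main obstacle is Step 2: showing that the load-difference mass stays $O(1)$ per round rather than compounding. A naive bound would double the discrepancy each round and give $2^T = \polylog(\mu)$. To avoid this, I will follow the structure of Fact~\ref{fact:special}, adapted to this setting. I will try to prove that the signed load differences always sum to exactly $1$ in absolute value. The key observation is that a prefix truncation of length $\load{t}{i}-\threshold{t}$ converts a load surplus of $d$ in bin $i$ into at most $d$ extra evictions, and each extra eviction exactly transfers one unit of surplus to another bin, never duplicating it. In addition, a $\min$-truncation by $|\colorSet[t][i]|$ can only absorb surplus, never create it. Granting this conservation of total surplus (which I expect to be $\le 1$ up to $x^*$'s own eviction), each round adds $O(1)$ newly displaced balls. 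Summing over the $T$ rounds then gives recourse $1 + O(T) = O(\log\log(m/n))$.
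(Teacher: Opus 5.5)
Your plan is essentially the paper's proof. You run the two worlds in parallel and use the total discrepancy $d^{(t)}=\sum_i|\ell^{(t)}_i(0)-\ell^{(t)}_i(1)|$ as a potential. Nested canonical prefixes show that for $(t,i)\neq(t^*,i^*)$ the balls evicted in only one world number $E_i^{(t)}\le d_i^{(t)}$, and all are evicted from the heavier world. So slicing lowers $d^{(t)}$ by $\sum_i E_i^{(t)}$, which is the most that spreading can add back. Recourse is then charged to these red balls, since each ball is evicted at most once, in round $\cmap[x]$. Your Step~2 mechanism is already the proof of the ``conservation'' you grant at the end; it is the paper's Lemma~\ref{lem:bounding-discrepancy}.

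Some statements need correcting, though none changes the $O(\log\log(m/n))$ bound.

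(i) The color sets never depend on displaced balls. A ball with $\cmap[x]=t$ cannot leave $\hashOne[x]$ before round $t$, so $\colorSet[t][i]$ is identical in both worlds except at $(t^*,i^*)$. This is what makes the prefixes nested. If $e_i$ could be nonzero elsewhere, your conservation argument would fail.

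(ii) At $(t^*,i^*)$ your bound $|d_i|+e_i$ is false. Suppose bin $i^*$ has equal loads in the two worlds at round $t^*$, which happens if world~1 evicted an extra ball from $i^*$ earlier. Suppose also that $x^*$ falls inside world~1's evicted prefix. Then world~1 evicts $x^*$, while world~0 evicts the ball that $x^*$ pushed out of the prefix. That gives two red balls with $d_{i^*}=0$ and $e_{i^*}=1$. The correct bounds are $E_{i^*}^{(t^*)}\le d_{i^*}^{(t^*)}+2$ and $d^{(t^*+1)}\le d^{(t^*)}+2$. The paper proves these with a hybrid world in which $x^*$ is present but not round-assigned, plus the triangle inequality.

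(iii) The same example creates a bin with a world-0 surplus. So a Fact~\ref{fact:special}-style ``single special bin with $+1$'' invariant is false after round $t^*$. The signed sum of load differences is always exactly $1$, but that is trivial and does not control the $\ell_1$ mass. The right target is $d^{(t)}\le d^{(1)}+2=3$ for all $t$. Correspondingly, the $O(t)$ mass in your Step~1 invariant must be $O(1)$, since $O(t)$ would only give $O(T^2)$ recourse.
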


To prove Proposition~\ref{prop:recourse}, we will make use of some additional notation and definitions. 

As a convention, let $|\setBallsPrime| > |\setBalls|$. Let the \textbf{\emph{extra ball}} $x^*$ be the ball that appears in $\setBallsPrime$ but not $\setBalls$. Let \textbf{\emph{world $0$}} be the scenario that considers $\setBalls$, and let \textbf{\emph{world $1$}} be the scenario that considers $\setBallsPrime$. In general, to distinguish between the two worlds, we will add the notation of $(w)$ to the end of the variable, where $w$ is the variable for world. For example, $\load{t}{i}[w]$, $\colorSet[t][i][w]$, and $\rethrow{t}[i][w]$. Let $i^* \coloneqq \hashOne[x^*]$ be the first-choice bin of the extra ball. If $x^*$ has a round assignment, then let $t^* \coloneqq \cmap[x^*]$ be the round assignment. Otherwise, let $t^* = 0$.

Much of the analysis will be spent discussing the differences between the states of the data structure between world $0$ and world $1$. Let $d^{(t)}_i \coloneqq |\ell^{(t)}_i(0) - \ell^{(t)}_i(1)|$ be the \textbf{\emph{discrepancy of bin $i$}}. Let $d^{(t)} \coloneqq \sum_i d^{(t)}_i$ be the \textbf{\emph{total discrepancy}} across the two worlds in round $t$. 

Finally, call a ball $x$ a \textbf{\emph{red ball}} if $x$ was evicted from a bin in some world $w$ but not simultaneously in $\bar{w}$. Let $E^{(t)}_i$ be the number of red balls evicted from bin $i$ during round $t$ in world $0$ and world $1$. That is, $E_i^{(t)} = |\rethrow{t}[i][0] \Delta \rethrow{t}[i][1]|$. 

With these conventions in mind, we can now begin the analysis. We start with a few simple lemmas that will help us reason about the recourse. The first lemma says that the number of balls $x \in \colorSet[t][i][0]$ that get assigned to different bins in worlds $0$ and $1$ is at most $E_i^{(t)}$.
\begin{lemma}\label{lem:recourse-to-red}
    Let $\phi_0$ and  $\phi_1$ represent the final allocations of the balls to bins in world $0$ and world $1$, respectively. For every $(t,i) \in [T]\times [n]$,
    \begin{equation}\label{eq:bounding-recourse-round}
        \sum_{x \in \colorSet[t][i][0]} \ind\Paren{\phi_0(x) \ne \phi_1(x)} \le E^{(t)}_i.
    \end{equation}
\end{lemma}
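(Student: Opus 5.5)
A ball $x \in \colorSet[t][i][0]$ has first choice $i$ and round assignment $t$. The plan is to pin down its final bin in terms of its eviction status in round $t$, and then compare the two worlds ball by ball.

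\textbf{Step 1: where such a ball ends up.} Fix a world $w$ and a ball $x \in \colorSet[t][i][w]$. By Algorithm~\ref{alg:slice-and-spread}, $x$ starts in bin $\hashOne[x] = i$. The slice step of round $t' \neq t$ only removes balls from $\colorSet[t'][\cdot]$, so $x$ cannot be evicted in any round other than $t$. Once a ball is spread into $\hashTwo[x]$ it is never moved again, since only round-$t$ balls sitting in their first-choice bin are ever candidates for eviction. Hence
\[
\phi_w(x) = \begin{cases} \hashTwo[x] & \text{if } x \in \rethrow{t}[i][w],\\ \hashOne[x] & \text{otherwise.}\end{cases}
\]

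\textbf{Step 2: compare the worlds.} Take $x \in \colorSet[t][i][0]$ with $x \neq x^*$. Then $x$ has the same hashes and the same round assignment in both worlds, because $\random$ assigns $\cmap$ independently per ball. So $x \in \colorSet[t][i][1]$ as well. By Step 1, $\phi_0(x) \neq \phi_1(x)$ forces $x$ to be evicted in exactly one of the two worlds. (If $\hashOne[x] = \hashTwo[x]$ the final bins agree anyway, which only helps.) That is, $x \in \rethrow{t}[i][0] \triangle \rethrow{t}[i][1]$.

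The ball $x^*$ itself lies in $\setBallsPrime$ but not in $\setBalls$, so it is not in $\colorSet[t][i][0]$ and causes no trouble. Indeed the sum on the left-hand side of \eqref{eq:bounding-recourse-round} ranges over world-$0$ balls only.

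\textbf{Step 3: conclude.} The map from the counted balls into $\rethrow{t}[i][0] \triangle \rethrow{t}[i][1]$ is the identity, so it is injective. The left side of \eqref{eq:bounding-recourse-round} is therefore at most $E^{(t)}_i$.

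\textbf{Main obstacle.} The only point needing care is the invariant in Step 1: that no ball is moved more than once, and that only its designated round can evict it. This follows because the slice in round $t$ draws solely from $\colorSet[t][i]$, which are balls with first choice $i$ that are still in bin $i$ (evicted balls having left). One should also check that the balls evicted from bin $i$ are exactly the first $\min(\cdot)$ elements of $\colorSet[t][i]$. Hence the red balls of $(t,i)$ are precisely the symmetric difference of the two evicted prefixes, which is what $E^{(t)}_i$ counts.
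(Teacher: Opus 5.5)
Your proof is correct and follows the same route as the paper. You determine $\phi_w(x)$ as $h_2(x)$ or $h_1(x)$ according to whether $x$ is evicted from bin $i$ in round $t$ in world $w$, and observe that a disagreement forces $x$ into the symmetric difference of the two eviction sets. Beyond what the paper writes, you justify that such a ball can only be moved in its assigned round $t$, and you note that the indicator identity becomes an inequality when $h_1(x)=h_2(x)$ (the paper claims equality), which is all the lemma needs.
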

\begin{proof}
    For a given ball $x \in \colorSet[t][i][0]$, and a given world $w \in \{0, 1\}$, the final bin $\phi_w(x)$ that ball $x$ gets assigned to in world $w$ is 
    $$\begin{cases} h_1(x) & \text{ if } x \not\in \rethrow{t}[i][w] \\ h_2(x) & \text{ if }x \in \rethrow{t}[i][w]\end{cases}.$$ 
    It follows that 
    $$\ind\Paren{\phi_0(x) \ne \phi_1(x)} = \ind\Paren{x \in \rethrow{t}[i][0] \triangle \rethrow{t}[i][1]},$$
    which implies that 
    \begin{equation}\label{eq:googoo}
        \sum_{x \in \colorSet[t][i][0]} \ind\Paren{\phi_0(x) \ne \phi_1(x)} = \sum_{x \in \colorSet[t][i][0]} \ind\Paren{x \in \rethrow{t}[i][0] \triangle \rethrow{t}[i][1]}  \le |\rethrow{t}[i][0] \triangle \rethrow{t}[i][1]| =  E^{(t)}_i.
    \end{equation}
    We note that the inequality in \eqref{eq:googoo} is due to the fact that the summation is only over balls shared in common between world $0$ and world $1$ (which does not include $x^*$), but $E_i^{(t)}$ may contain $x^*$.
\end{proof}

This next lemma says that, if we want to bound $E_i^{(t)}$, it suffices to bound $d_i^{(t)}$.
\begin{lemma}\label{lem:red-balls}
    Let $(t,i) \in [T]\times [n]$ such that $(t,i) \ne (t^*,i^*)$. Then, 
    $$E_i^{(t)} \le d_i^{(t)}.$$
\end{lemma}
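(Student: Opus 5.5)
The plan is a direct case analysis on the slicing step in round $t$ at bin $i$, comparing the two worlds. First I will clarify what $\load{t}{i}[w]$ refers to. In the slicing step of round $t$, the bin's load is the load just before slicing. That equals the load after the spreading stage of round $t-1$. The discrepancy that governs slicing should therefore be the one between the pre-slicing loads. I will read $d_i^{(t)}$ as the discrepancy of exactly these loads; if the paper's $\loadi{}{}$ indexing differs by a round, only the bookkeeping changes.

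The key structural observation is that the candidate sets agree whenever $(t,i)\neq(t^*,i^*)$. The only ball present in one world but not the other is $x^*$, and $x^*\in\colorSet[t][i][1]$ only if $\hashOne[x^*]=i$ and $\cmap[x^*]=t$, i.e.\ $(t,i)=(t^*,i^*)$. Round assignments and the canonical ordering come from $\random$ and are shared, so $\colorSet[t][i][0]=\colorSet[t][i][1]=:C$. Moreover, the set of balls of $C$ still sitting in bin $i$ at round $t$ is all of $C$ in both worlds. Balls of $C$ are never evicted before round $t$, since they are only eligible in round $t$. Spreading places balls into bins but never removes a round-$t$ ball from its $\hashOne$ bin.

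Next, in world $w$ the slicing step evicts the first $k_w := \min(|C|, \max(\load{t}{i}[w]-\threshold{t},0))$ balls of $C$ in the canonical order. This is one canonical order applied to the same set $C$, so $\rethrow{t}[i][0]$ and $\rethrow{t}[i][1]$ are nested prefixes. Hence $E_i^{(t)} = |k_0-k_1|$. Finally, the map $\ell\mapsto \min(|C|,\max(\ell-\threshold{t},0))$ is $1$-Lipschitz. This gives $|k_0-k_1|\le |\load{t}{i}[0]-\load{t}{i}[1]| = d_i^{(t)}$, which is the claim.

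The only step needing care is the indexing one: checking that $d_i^{(t)}$ is measured at the moment slicing occurs, and that no earlier round could have removed a round-$t$ ball from bin $i$ in one world only. Both follow from the algorithm's description. I expect the rest to be routine.
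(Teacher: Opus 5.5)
Your proposal is correct and follows essentially the same argument as the paper. Because the round-$t$ candidate sets of bin $i$ coincide in both worlds when $(t,i)\neq(t^*,i^*)$, the evicted sets are nested canonical prefixes, so $E_i^{(t)}$ equals the difference of the eviction counts, which is at most $d_i^{(t)}$ by the $1$-Lipschitz property of the min (the paper's identity $|\min(a,b)-\min(a,c)|\le|b-c|$). Your clipping at $0$ and your check that all round-$t$ balls are still in bin $i$ at slicing time are small details the paper leaves implicit.
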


This lemma makes use of the following simple claim:
\begin{restatable}{clm}{clmwhenasubset}\label{clm:when-a-subsest}
    Let $(t, i) \neq (t^*, i^*)$, and let $w \in \{0, 1\}$ be the world $w$ for which $\load{t}{i}(w) = \load{t}{i}(\bar{w}) + d_i^{(t)}$. Then 
    \begin{equation}\label{eq:rethrowcontain}
        \rethrow{t}[i][\bar{w}] \subseteq \rethrow{t}[i][w].
    \end{equation}
\end{restatable}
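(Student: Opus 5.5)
The plan is to unwind the slice step for bin $i$ in round $t$ in both worlds and compare the two sets of evicted balls directly. Write $\mathcal{C}(w)$ for the set of balls in bin $i$ with round assignment $t$ and first choice $i$ in world $w$ (the set the paper denotes by \colorSet with indices $t,i$ and world $w$). Also write $\ell(w)$ for the load $\load{t}{i}(w)$ at the slice step of round $t$, and $\tau$ for the slicing threshold $\threshold{t}$. By line~\ref{line:slice} of Algorithm~\ref{alg:slice-and-spread}, world $w$ evicts exactly the first $e(w) \coloneqq \min(|\mathcal{C}(w)|, \max(\ell(w) - \tau,0))$ balls of $\mathcal{C}(w)$ in the canonical ordering. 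Here $\max(\cdot,0)$ is implicit in the algorithm, since a negative count means nothing is evicted.

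The first step is to show that $\mathcal{C}(0) = \mathcal{C}(1)$ whenever $(t,i) \neq (t^*,i^*)$. The hash functions $\hashOne,\hashTwo$ and the round assignment $\cmap$ (which is determined by $\random$ ball by ball) are identical in the two worlds. The only ball in which the two sets could differ is therefore $x^*$, and $x^*$ lies in $\mathcal{C}(1)$ only if $\hashOne[x^*] = i$ and $\cmap[x^*] = t$, that is, only if $(t,i) = (t^*,i^*)$. We should double-check the convention when $x^*$ has no round assignment ($t^* = 0$). In that case $x^*$ belongs to no set $\mathcal{C}$ with $t \in [T]$, so the sets agree for every round $t \in [T]$, and the claim holds with no exception.

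The second step uses the common set $\mathcal{C} \coloneqq \mathcal{C}(0) = \mathcal{C}(1)$. By the choice of $w$ we have $\ell(w) \ge \ell(\bar w)$. Both the map $x \mapsto \max(x - \tau, 0)$ and the map $y \mapsto \min(|\mathcal{C}|, y)$ are monotone nondecreasing, so $e(w) \ge e(\bar w)$. Both worlds take a prefix of the \emph{same} canonically ordered list $\mathcal{C}$, so the shorter prefix is contained in the longer one, which gives $\rethrow{t}[i][\bar w] \subseteq \rethrow{t}[i][w]$. As a byproduct we also get $|\rethrow{t}[i][w] \setminus \rethrow{t}[i][\bar w]| = e(w) - e(\bar w) \le \ell(w) - \ell(\bar w) = d_i^{(t)}$, because both maps are $1$-Lipschitz. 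This byproduct is exactly what Lemma~\ref{lem:red-balls} needs.

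There is no real obstacle here. The only delicate point is the bookkeeping in the first step, namely checking that the randomness determining membership in $\mathcal{C}$ is shared between the worlds. The differing loads from earlier rounds affect only \emph{how many} balls are evicted, never \emph{which} balls are candidates. One should also note that if $d_i^{(t)} = 0$, either world can serve as $w$ and the containment holds in both directions, which gives equality.
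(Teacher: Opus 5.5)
Your proof is correct and matches the paper's argument: for $(t,i)\neq(t^*,i^*)$ the candidate sets of round-$t$ balls with first choice $i$ are identical in the two worlds. Because the eviction count is nondecreasing in the load, both evicted sets are prefixes of the same canonically ordered list, so they are nested. Your extra bookkeeping (the implicit $\max(\cdot,0)$, the $t^*=0$ convention, and the $1$-Lipschitz bound used in Lemma~\ref{lem:red-balls}) just spells out what the paper's short proof leaves implicit.
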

\begin{proof}
        Since $(t,i) \ne (t^*,i^*)$, the sets $\colorSet[t][i][0]$ and $\colorSet[t][i][1]$ are equal and furthermore have the same ordering as determined by the canonical ordering. Since $\rethrow{t}[i][w]$ and $\rethrow{t}[i][\bar{w}]$ are prefixes of the same ordered set, and since $\load{t}{i}(w) \ge \load{t}{i}(\bar{w})$, it follows that $\rethrow{t}[i][\bar{w}] \subseteq \rethrow{t}[i][w]$.
\end{proof}

\begin{proof}[Proof of Lemma~\ref{lem:red-balls}]
By Claim~\ref{clm:when-a-subsest}, 
$$E_i^{(t)} = ||\rethrow{t}[i][0]| - |\rethrow{t}[i][1]||.$$
Recall that, in each world $w \in \{0, 1\}$, the number $|\rethrow{t}[i][w]|$ of balls that we evict from bin $i$ in round $t$ is $\min(|\colorSet[t][i](w)|,\load{t}{i}(w) - \threshold{t})$. It follows that 
$$||\rethrow{t}[i][w]| - |\rethrow{t}[i][\bar{w}]|| = |\min(|\colorSet[t][i](0)|,\load{t}{i}(0) - \threshold{t}) - \min(|\colorSet[t][i](1)|,\load{t}{i}(1) - \threshold{t})|.$$
Since $(t, i) \neq (t^*, i^*)$, we have $\colorSet[t][i](0) = \colorSet[t][i](1)$. Using the identity $|\min(a, b) - \min(a, c)| \le |b - c|$, we can conclude that 
$$|\min(|\colorSet[t][i](0)|,\load{t}{i}(0) - \threshold{t}) - \min(|\colorSet[t][i](1)|,\load{t}{i}(1) - \threshold{t})| \le |\load{t}{i}(0) - \load{t}{i}(1)| = d_i^{(t)}.$$
\end{proof}

We now get to the meat of the section, where, in the next two lemmas, we show how to bound the discrepancy $d^{(t)}$ at the beginning of each round $t$. 
\begin{lemma}\label{lem:bounding-discrepancy}
    For all $t \ne t^*$, $d^{(t+1)} \le d^{(t)}$.
\end{lemma}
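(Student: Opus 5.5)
The plan is to track, bin by bin, how the load discrepancy between the two worlds evolves through the slice stage and then the spread stage of a round $t \neq t^*$. I will show that slicing splits each bin's discrepancy exactly into a post-slice discrepancy plus red evictions, and that spreading can only relocate discrepancy, never create it.

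\emph{Step 1 (the round-$t$ color sets agree).} Since $t \neq t^*$, the extra ball $x^*$ is not assigned to round $t$. Hence $\colorSet[t][i][0] = \colorSet[t][i][1]$ for every $i \in [n]$, with the same canonical ordering. By Claim~\ref{clm:when-a-subsest}, applied to every bin $i$ (all pairs $(t,i)$ differ from $(t^*,i^*)$), the eviction sets are nested: $\rethrow{t}[i][\bar w] \subseteq \rethrow{t}[i][w]$, where $w$ is the world with the larger round-$t$ load in bin $i$. Moreover, $x^*$ is never evicted in round $t$, so in world $1$ it sits in the same bin before and after the slice. Its effect is therefore already recorded in $d^{(t)}$.

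\emph{Step 2 (slicing splits discrepancy exactly).} Write $c = |\colorSet[t][i]|$, the same in both worlds. The post-slice load of bin $i$ is $g(\ell) = \ell - \min(c, \max(\ell - \threshold{t}, 0))$, a function of the pre-slice load $\ell = \load{t}{i}(w)$ alone.
\begin{itemize}
\item $g$ is nondecreasing in $\ell$.
\item $g$ is $1$-Lipschitz in $\ell$.
\end{itemize}
So the world with the larger pre-slice load also has the larger post-slice load, and therefore
\[
d_i^{(t)} = |g(\load{t}{i}(0)) - g(\load{t}{i}(1))| + \bigl||\rethrow{t}[i][0]| - |\rethrow{t}[i][1]|\bigr| .
\]
By the nesting from Step 1, the second term equals $E_i^{(t)}$. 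Thus each bin's discrepancy splits into a post-slice discrepancy $p_i$ plus its red evictions $E_i^{(t)}$.

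\emph{Step 3 (spreading).} A ball evicted in both worlds goes to the same bin $\hashTwo[x]$ in both worlds, so it adds nothing to any discrepancy. A red ball is placed in exactly one world, and so changes the discrepancy of exactly one bin by at most $1$. By the triangle inequality,
\[
d^{(t+1)} \le \sum_i p_i + \sum_i E_i^{(t)} = \sum_i d_i^{(t)} = d^{(t)} .
\]
This relies on $d^{(t+1)}$ measuring loads at the slicing step of round $t+1$, which are exactly the loads after round $t$'s spread.

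The step I expect to be the main obstacle is Step 2. One has to check carefully that the discrepancy splits additively, with no cancellation. That is exactly where monotonicity of $g$ together with the nesting from Claim~\ref{clm:when-a-subsest} is needed. One also has to handle the edge case of bins below threshold, where $\max(\cdot,0)$ makes the eviction count zero.
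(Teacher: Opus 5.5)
Your proof is correct and follows essentially the same route as the paper. Slicing lowers each bin's discrepancy by exactly $E_i^{(t)}$: the paper gets this from Lemma~\ref{lem:red-balls} together with the nesting Claim~\ref{clm:when-a-subsest}, while you get it from monotonicity and $1$-Lipschitzness of the post-slice load map. Spreading then adds back at most one unit of discrepancy per red ball, and there are $\sum_i E_i^{(t)}$ red balls in total, so summing over bins gives $d^{(t+1)} \le d^{(t)}$.
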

\begin{proof}
    We can decompose any round $t$ into the slicing stage and the spreading stage. Recall that $E_i^{(t)}$ is the total number of balls that bin $i$ evicts in round $t$ in one world but not the other. Let $R^{(t)}_i \coloneqq \sum_{i \in [n]} |\{x \in E^{(t)}_i \mid h_2(x) = i\}|$ be the number of red balls that bin $i$ receives in round $t$ (in either world) during the spreading stage.
    We begin by showing the following inequality:
    \begin{equation}
        d^{(t+1)}_i \le d^{(t)}_i - E^{(t)}_i + R^{(t)}_i.
        \label{eq:dtjle}
    \end{equation} 

    We first consider the effect of the slicing stage in round $t$ on the discrepancy of bin $i$. Since $t \neq t^*$, we know by Lemma \ref{lem:red-balls} and Claim \ref{clm:when-a-subsest} that there is some $w \in \{0, 1\}$ such that $\load{t}{i}[w] = \load{t}{i}[\bar{w}] + d^{(t)}_i \ge \load{t}{i}[\bar{w}] + E^{(t)}_i$ and such that, during the slicing stage of round $t$, $E^{(t)}_i$ more balls are sliced from bin $i$ in world $w$ than in world $\bar{w}$. Thus, the slicing stage in round $t$ decreases the the discrepancy of bin $i$ by $E^{(t)}_i$. 

    We now show that the spreading stage increases the discrepancy of bin $i$ by at most $R^{(t)}_i$. This is because, during the spreading stage, the only balls that can be placed into bin $i$ in one world but not the other are red balls. Since the number of red balls placed into bin $i$ (in either world) during the spreading stage of round $t$ is $R^{(t)}_i$, it follows that the discrepancy of bin $i$ increases by at most $R^{(t)}_i$ during the spreading stage. 
    
    Having established \eqref{eq:dtjle}, we can sum across the bins to get 
    \begin{align*}
        d^{(t+1)} &= \sum_i d^{(t+1)}_i \\
        &\le \sum_i \left(d^{(t)}_i - E^{(t)}_i + R^{(t)}_i\right)\\
        &= d^{(t)} - \sum_i E^{(t)}_i + \sum_i R^{(t)}_i \\
        &= d^{(t)}. 
    \end{align*}
\end{proof}
\begin{lemma}\label{lem:special-red-balls}
    Suppose that $t^* \ne 0$. We have $E_{i^*}^{(t^*)} \le d_{i^*}^{(t^*)} + 2$. Furthermore, if $t^* < T$, then $d^{(t^* + 1)} \le d^{(t^*)} + 2$.
\end{lemma}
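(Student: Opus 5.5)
The plan is to redo the two arguments of Lemma~\ref{lem:red-balls} and Lemma~\ref{lem:bounding-discrepancy} for the single exceptional pair $(t^*, i^*)$, while keeping track of the extra ball $x^*$. That ball is the only thing that breaks those arguments, and it should cost at most an additive $2$.

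\textbf{Step 1: the bound on $E_{i^*}^{(t^*)}$.} Write $C_w = \colorSet[t^*][i^*][w]$. Since $x^*$ has round $t^*$ and $\hashOne[x^*] = i^*$, we have $C_1 = C_0 \cup \{x^*\}$, both ordered by the canonical ordering. In world $w$ the evicted set $\rethrow{t^*}[i^*][w]$ is the prefix of $C_w$ of length $k_w = \min(|C_w|, \load{t^*}{i^*}(w) - \threshold{t^*})$.

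Removing $x^*$ from a prefix of $C_1$ gives a prefix of $C_0$. So $\rethrow{t^*}[i^*][1] \setminus \{x^*\}$ and $\rethrow{t^*}[i^*][0]$ are prefixes of the same ordered set $C_0$, and one contains the other. Hence
$$E_{i^*}^{(t^*)} \le \bigl|\,|\rethrow{t^*}[i^*][1]\setminus\{x^*\}| - |\rethrow{t^*}[i^*][0]|\,\bigr| + 1 \le |k_1 - k_0| + 2.$$
For the last step, the size of the set with $x^*$ removed differs from $k_1$ by at most $1$.

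Finally, use $|\min(a,b) - \min(a',c)| \le |a - a'| + |b - c|$ with $|a - a'| = 1$ and $|b - c| = d_{i^*}^{(t^*)}$. A crude version of this gives $d+3$. To get exactly $d+2$, I would do a short case check on whether $x^*$ lies inside world 1's prefix:
\begin{itemize}
\item If $x^* \notin \rethrow{t^*}[i^*][1]$, then the two evicted sets are prefixes of the same ordered list, and the argument of Lemma~\ref{lem:red-balls} gives $E \le d + 1$.
\item If $x^* \in \rethrow{t^*}[i^*][1]$, then the symmetric difference equals $1$ plus the difference of two prefixes of $C_0$, of lengths $k_1 - 1$ and $k_0$. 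This gives $E \le 1 + |k_1 - 1 - k_0| \le 2 + d$, using $|k_1 - k_0| \le d + 1$.
\end{itemize}

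\textbf{Step 2: the bound on $d^{(t^*+1)}$.} Repeat the per-bin accounting of Lemma~\ref{lem:bounding-discrepancy}, proving $d^{(t^*+1)}_i \le d^{(t^*)}_i - E^{(t^*)}_i + R^{(t^*)}_i + (\text{correction})$.

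For bins $i \neq i^*$, Claim~\ref{clm:when-a-subsest} still applies, so slicing lowers the discrepancy by exactly $E_i^{(t^*)}$. Spreading raises it by at most the number of red balls landing in $i$; the ball $x^*$, if evicted, counts among these red balls.

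For bin $i^*$, the containment can fail. I would bound the post-slicing discrepancy by $d^{(t^*)}_{i^*} - E^{(t^*)}_{i^*} + 2$ through a direct case analysis. Post-slicing, the loads are $\load{}{}(w) - k_w$. The symmetric difference $E$ relates to $|k_1 - k_0|$ up to $\pm 2$: the shared-prefix part contributes $|k_1 - 1 - k_0|$ when $x^*$ is evicted. So the new discrepancy is
$$|(\ell_1 - \ell_0) - (k_1 - k_0)| \le d - E + 2,$$
since the evictions in each world move the loads toward each other, except for at most the one ball $x^*$, and at most one shared-prefix ball's shift caused by $x^*$.

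Summing over bins, and using $\sum_i R_i^{(t^*)} = \sum_i E_i^{(t^*)}$ as before, gives $d^{(t^*+1)} \le d^{(t^*)} + 2$.

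\textbf{Main obstacle.} The delicate step is the per-bin inequality for $i^*$ in Step 2: showing that slicing reduces $|\ell_1 - \ell_0|$ by $E$ up to an additive $2$ rather than $3$. I would handle this by splitting into the cases $x^* \in \rethrow{t^*}[i^*][1]$ or not, and within each case on which world evicts more. I would then check the sign relations directly, using that evictions never push a bin below $\threshold{t^*}$, so a world that evicts more must have had more load.
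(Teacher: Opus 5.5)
Your proposal is correct in substance, but it argues by direct case analysis at bin $i^*$. The paper avoids any case analysis by introducing a hybrid world~2, identical to world~1 except that $x^*$ receives no round.
\begin{itemize}
\item Worlds~0 and~2 differ by a round-less ball, so Lemmas~\ref{lem:red-balls} and~\ref{lem:bounding-discrepancy} apply verbatim to that pair. Since worlds~1 and~2 coincide at the start of round $t^*$, the relevant discrepancy is exactly $d^{(t^*)}$.
\item Worlds~1 and~2 differ only at bin $i^*$ in round $t^*$, by at most one eviction on each side. So both the red-ball count and the discrepancy between them are at most~$2$.
\item The triangle inequality then gives both claims.
\end{itemize}
The hybrid route reuses the earlier lemmas wholesale. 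Your route re-derives everything by hand and gives slightly sharper information: if $x^*$ is not evicted, then $E_{i^*}^{(t^*)} \le d_{i^*}^{(t^*)}$, and slicing lowers the discrepancy at $i^*$ by exactly $E_{i^*}^{(t^*)}$. The cost is sign bookkeeping.

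The cleanest way to do that bookkeeping is through the hybrid's eviction count $k_2 = \max(0,\min(|C_0|, \ell_1 - \tau))$. Here $\ell_w$ is the load of bin $i^*$ in world $w$ and $\tau$ is the round-$t^*$ threshold. Then $k_1 - k_2 \in \{0,1\}$, while $k_2 - k_0$ has the same sign as $\ell_1 - \ell_0$ (or is zero) and absolute value at most $d = d_{i^*}^{(t^*)}$. With this, your Step~2 inequality $|(\ell_1-\ell_0)-(k_1-k_0)| \le d - E + 2$ checks out in both of your cases.

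That inequality can be tight. Take $\ell_0 > \ell_1$, $k_0 = |C_0|$, $k_1 = |C_0|+1$: world~1 has less load but evicts more, $E = 1$, and the discrepancy rises to $d+1$. So your heuristic ``evictions move the loads toward each other'' is not literally true at $i^*$. It should be replaced by this explicit check.

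One step in Step~1 needs repair. In the second bullet, the bound $1 + |k_1 - 1 - k_0| \le d + 2$ does not follow from $|k_1 - k_0| \le d + 1$. That only gives $k_1 - 1 - k_0 \ge -d - 2$, i.e., $E \le d + 3$. What you need is the one-sided bound $k_1 \ge k_0 - d$. It holds because adding $x^*$ to the round-$t^*$ class can only increase the eviction count: $k_1 \ge k_2 \ge k_0 - d$. Together with $k_1 - k_0 \le k_2 + 1 - k_0 \le d + 1$, this gives $-d - 1 \le k_1 - 1 - k_0 \le d$. Hence $E_{i^*}^{(t^*)} \le d_{i^*}^{(t^*)} + 2$, as claimed.
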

\begin{proof}
    To prove this lemma, let us also introduce a hybrid world $2$, in which everything is the same as world $1$ except the extra ball is not assigned a round (i.e. $t^* = 0$).

    At the beginning of round $t^*$, worlds 1 and 2 share the same exact state (but with a difference in round assignment for the extra ball). During the slicing step in round $t^*$, all of the bins except for bin $i^*$ evict the same set of balls as each other in worlds 1 and 2. The only bin that behaves differently in the two worlds is bin $i^*$, which may evict up to one ball in each of worlds 1 and 2 that it doesn't evict in the other (call this the \defi{World-Difference Observation}).
    
    We now argue that $E_{i^*}^{(t^*)} \le d_{i^*}^{(t^*)} + 2$. Define $E_{0, 1}, E_{1, 2}, E_{0, 2}$ so that $E_{w, w'}$ is the number of balls $x$ such that $h_1(x) = i^*$, and such that $x$ is evicted in round $t^*$ in one of worlds $w, w'$ but not in the other (i.e., $x$ is a round-$t^*$ red ball when comparing worlds $w$ and $w'$). 
    
    By the triangle inequality, we have that $E_{0,1} \le E_{0,2} + E_{1,2}$. By Lemma~\ref{lem:red-balls}, we know that $E_{0,2} \le d^{(t^*)}_{i^*}$. Putting these together gives $E_{0,1} \le d^{(t^*)}_{i^*} + E_{1,2}$. Finally, we have by the World Difference Observation that, during round $t^*$, worlds $1$ and $2$ evict identical sets of balls except for up to one ball from bin $i^*$ in each world. It follows that $E_{1, 2} \le 2$, which implies that $E_{0, 1} \le d^{(t^*)}_{i^*} + 2$. 

    Next we argue that, if $t^* < T$, then $d^{(t^* + 1)} \le d^{(t^*)} + 2$. Define $D_{0, 1} = d^{(t^* + 1)}$ to be the discrepancy between worlds 0 and 1 at the beginning of round $t^* + 1$; define $D_{1, 2}$ to be the discrepancy between worlds 1 and 2 at the beginning of round $t^* + 1$; and define $D_{0, 2}$ to be the discrepancy between worlds $0$ and $2$ at the beginning of round $t^* + 1$. 

    By the triangle inequality, $D_{0, 1} \le D_{1, 2} + D_{0, 2}$. By Lemma \ref{lem:bounding-discrepancy}, we know that $D_{0, 2} \le d^{(t^*)}$. Therefore, to complete the proof that $d^{(t^* + 1)} \le d^{(t^*)} + 2$, it suffices to show that $D_{1, 2} \le 2$.  By the World Difference Observation, the states in worlds 1 and 2 at the beginning of round $t^{*} + 1$ differ in the positions of at most two balls. It follows that $D_{1, 2} \le 2$, as desired. 
\end{proof}

Finally, by putting the preceding lemmas together, we can prove Proposition \ref{prop:recourse}.
\proprecourse*
\begin{proof}
    We assume that $t^* \ne 0$. The proof follows by stringing together the previous lemmas as follows:
    \begin{align*}
        & 1 + \sum_{x \in \setBalls \cap \setBallsPrime} \ind\Paren{\phi_0(x) \ne \phi_1(x)}\\
        &= 1 + \sum_{(t,i)\in [T]\times[n]}\ind\Paren{\phi_0(x) \ne \phi_1(x)}\\
        &= 1 + \sum_{x \in \colorSet[t^*][i^*][0]}\ind\Paren{\phi_0(x) \ne \phi_1(x)} + \sum_{\substack{(t,i)\in [T]\times[n]\\(t,i) \ne (t^*,i^*)}}\sum_{x \in \colorSet[t][i][0]}\ind\Paren{\phi_0(x) \ne \phi_1(x)}\\
        &\le 1 + E_{i^*}^{(t^*)} + \sum_{\substack{(t,i)\in [T]\times[n]\\(t,i) \ne (t^*,i^*)}} E_i^{(t)} \tag{by Lemma~\ref{lem:recourse-to-red}}\\
        &\le 1 + d_{i^*}^{(t^*)} + 2 + \sum_{\substack{(t,i)\in [T]\times[n]\\(t,i) \ne (t^*,i^*)}} d_i^{(t)} \tag{by Lemma~\ref{lem:red-balls} and Lemma~\ref{lem:special-red-balls}}\\
        &= 3 + \sum_{t \in [T]} d^{(t)}\\
        &= 3 + \sum_{t \in [T]} (d^{(1)} + 2)  \tag{by Lemma~\ref{lem:bounding-discrepancy} and Lemma~\ref{lem:special-red-balls}} \\
        &= 3 + T \cdot 3 \\
        &= O(\log\log (m/n)).
    \end{align*}
    We can prove for the case of $t^* = 0$ similarly. In this case, note that we can reuse the same sequence of inequalities above, except where the $t^*$ terms disappear, since $(t^*,i^*) \notin [T]\times[n]$.
\end{proof}

We can now prove Theorem~\ref{thm:offline-slice-and-spread}
\thmofflinesliceandspread*
\begin{proof}
    Follows from Proposition~\ref{prop:recourse} and Proposition~\ref{prop:foo}.
\end{proof}
\section{Reducing the Overload to \texorpdfstring{$O(1)$}{O(1)}}\label{sec:reducing-overload}
In this section, we will show how to reduce the overload to $O(1)$, while still achieving expected recourse $O(\log \log (m/n))$. Our approach will be through a black-box construction, taking any algorithm with the properties from the previous section, and turning it into an algorithm with overload $O(1)$. Note that, throughout the section, we will continue with the WLOG assumption from Section~\ref{sec:preliminaries} that the number of balls is always either $m - 1$ or $m$.

Throughout the section, label every ball $x$ randomly one of three types: Type $1$, Type $2$, and Type $3$, where the probability $p_i$ of each Type $i$ is given by $p_1 = 0.89, p_2 = 0.1, p_3 = 0.01$.  

Consider a history-independent algorithm $\alg$ that is defined for sets of $m - 1$ and $m$ balls, and that assigns them to $n$ bins. Call $\alg$ a \defi{good pre-baking algorithm} if it has the following properties:
\begin{itemize}
    \item \textbf{First Choices for Types $1$ and $2$:}  For Type $1$ and Type $2$ balls $x$, $\alg$ assigns $x$ to $\hashOne[x]$ and does not evaluate $\hashTwo[x]$.
    \item \textbf{Small Cumulative Overload:} For a given set $\setBalls$ of $m$ balls, we have with high probability in $n$ that $\alg(S)$ has cumulative overload $O(n)$.
\end{itemize}

We will show that, given any good pre-baking algorithm, we can construct a new algorithm that has the same asymptotic expected recourse, but that has overload $O(1)$:
\begin{thm}
Given any good pre-baking algorithm $\alg$ with expected recourse $\recourse[\alg]$, we can construct a new history-independent algorithm with expected recourse $O(\recourse[\alg])$ and whose overload is $O(1)$ with high probability in $n$. 
\label{thm:postprocess}
\end{thm}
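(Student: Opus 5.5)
We construct the final algorithm $\alg'$ in two phases. First we run the good pre-baking algorithm $\alg$ on $\setBalls$, which fixes a placement of every ball. Then a history-independent \emph{post-processing} phase moves only Type $2$ balls.

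Two facts about $\alg$ drive the design. Type $1$ and Type $2$ balls always sit in their first choice $\hashOne$. Type $2$ balls have never had $\hashTwo$ evaluated, so their second choices are fresh randomness, independent of everything $\alg$ did.

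The post-processing phase is a history-independent greedy pass over the Type $2$ balls, in the spirit of HI Greedy from Section~\ref{sec:greedy}. We first compute the loads left by $\alg$ with all Type $2$ balls removed. We then reinsert the Type $2$ balls one at a time, in a canonical order. Each ball goes to whichever of $\hashOne[x]$, $\hashTwo[x]$ is less loaded. Since every step is a deterministic function of $\setBalls$, $\hashOne$, $\hashTwo$ and $\random$, the resulting algorithm is history independent.

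\textbf{Overload.} Fix $\setBalls$ with $|\setBalls|=m$. By the Small Cumulative Overload property, with high probability the base configuration has cumulative overload $O(n)$. We also need the base loads to be not too far \emph{below} the target. The Type $2$ balls are a $0.1$ fraction, so removing them drops each bin by about $0.1\mu$. Because they sit in first-choice bins, they form an independent thinning of a configuration whose excess above $\mu$ totals $O(n)$.

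We then compare the greedy reinsertion of about $0.1 m$ balls onto this base to the heavily-loaded two-choice analysis of \cite{BCSV00proc}. That analysis extends to a nonuniform starting configuration, with the overload governed by the total excess of the start. The target is to show that bins above $\mu + c$ for a large constant $c$ receive essentially no Type $2$ balls. The point is that each Type $2$ ball has about $0.1\mu$ alternative slots at lower levels, and the two-choice layered-induction argument needs only $O(n)$ excess. This should yield overload $O(1)$, possibly plus $\log\log n$.

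If the $\log\log n$ term survives, there is a fallback. We let Type $2$ balls move only when their first-choice bin lies above $\mu + c$, and reassign only to bins below $\mu$. The argument then becomes a matching/flow argument: $O(n)$ excess units must be absorbed by a $\Theta(n)$-size deficit pool, using fresh random second choices. The Type $3$ class ($p_3=0.01$) appears to be reserved for exactly such an extra cleanup layer, mirroring how $\alg$ itself uses round-assigned balls with probability $0.01$. Concretely, we would apply the procedure to Type $2$ balls and then once more to Type $3$ balls, making the final step handle only $O(n/\poly)$ residual excess.

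\textbf{Recourse.} We use the coupling technique from the proof of Proposition~\ref{prop:greedy}. Neighboring sets $\setBalls,\setBallsPrime$ induce base configurations after $\alg$ whose loads differ in $O(\recourse[\alg])$ bins in expectation. This follows because the number of moved balls bounds the total discrepancy $D$.

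Fact~\ref{fact:special} generalizes to the following: during greedy insertion, the total discrepancy never increases. A Type $2$ ball lands differently in the two worlds only if one of its two choices is a discrepant bin. Otherwise it lands identically, so discrepancy is preserved.

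The main obstacle is bounding how many Type $2$ balls land differently. In Proposition~\ref{prop:greedy}, HI Greedy paid $O(m/n)$ per unit of discrepancy, which is far too much. We would exploit that post-processing touches only a vanishing, structured set of balls: only Type $2$ balls whose choice bins are near the overloaded frontier ever move. A ball with discrepant choice bins moves only if a tie or a near-tie comparison at the top levels is affected. We expect each discrepancy unit to cause $O(1)$ expected extra moves, via a charging argument over the $O(1)$ levels above $\mu$ where balls are actually redirected. Summing gives expected recourse $O(\recourse[\alg]) + O(1) = O(\recourse[\alg])$.
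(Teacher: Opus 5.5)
Your plan has a genuine gap: the greedy reinsertion pass fails on both counts, and the fallback is not yet an algorithm.

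\textbf{Overload.} The heavily-loaded two-choice analysis of \cite{BCSV00proc} is tight. Already from a perfectly flat base, greedily placing the $\approx 0.1m$ Type~$2$ balls produces a bin $\log\log n - O(1)$ above average, and an uneven base does not help. So the $\log\log n$ term does survive.

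\textbf{Recourse.} Your pass re-places \emph{all} $\Theta(m)$ Type~$2$ balls, each by a load comparison at the current ``water level,'' which sweeps through every height. So it is false that only balls near the overloaded frontier move. A unit of discrepancy propagates through this pass exactly as through HI Greedy. That is the critical-tie mechanism Proposition~\ref{prop:greedylower} uses to show a cost of $\Omega(m/n)$ for HI Greedy when $m \le n^{2-\Omega(1)}$. You give no charging argument that beats it.

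\textbf{The fallback does not rescue this.} You do not say what happens when $h_2(x)$ is not a deficit bin, or when many balls hit the same deficit bin. One round of fresh second choices absorbs only a constant fraction of the excess. Bounding the \emph{cumulative} residual excess says nothing about the \emph{maximum} overload of a single bin. History independence and recourse of a ``matching/flow'' assignment are not addressed. Finally, Type~$3$ balls cannot serve as a fresh cleanup layer. They are exactly the balls the pre-baking algorithm is allowed to move (Slice and Spread's round-assigned balls), so $\alg$ may already have inspected their $h_2$ values.

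\textbf{The missing idea.} Post-process only $O(n)$ balls, with a mechanism whose sensitivity to a perturbation is bounded by a connected-component size rather than by a long insertion sequence. The paper does this in three steps:
\begin{enumerate}
\item \emph{Fresh-hash swapping.} Take the set $B$ of balls above height $\mu$, which has size $O(n)$ by the cumulative-overload guarantee. Replace each by a Type~$2$ ball from the same bin, thrown to its fresh $h_2$. Then replace each of those by a Type~$1$ ball from the receiving bin. The resulting set $B'$ has mutually independent, uniformly random hashes (Lemma~\ref{lem:randomG}), even though $B$ was chosen by $\alg$.
\item \emph{Sparse-component orientation.} Orient $B'$ with the Naor--Segev--Wieder canonical per-component orientation on $p^{-1}$ sparse random subgraphs. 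Components with $O(1)$ cycles give in-degree $O(1)$, hence maximum overload $O(1)$. An $O(1)$ expected component size means each change in $\alg$'s output costs $O(1)$ expected extra moves, giving $O(\recourse[\alg])$.
\item \emph{Failure superset.} Swap failures are covered by the algorithm-independent superset $F'$, defined from over- and underloaded bins. After swapping, the system is $F'$-safe (Lemma~\ref{lem:Fsafe}). The graphs induced by the pieces of $F'$ have exponentially tailed out-degrees, so the same orientation argument applies to them.
\end{enumerate}
Without these three ingredients, or substitutes for them, neither the $O(1)$ maximum load nor the $O(\recourse[\alg])$ recourse follows from your outline.
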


\subsection{Background: History Independent Cuckoo Hashing on \texorpdfstring{$O(n)$}{O(n)} Elements} One tool that we will need in this section is a simple but elegant paradigm due to Naor, Segev, and Wieder \cite{naor2008history} for how to construct an efficient history-independent cuckoo hash table. We will describe their paradigm in the context of our balls-and-bins problem, where $m = O(n)$. 

\paragraph{Orienting each component separately.} Given a set $B$ of balls, define the \defi{cuckoo graph} $G(B)$ to be the graph with vertices in $[n]$ and edges $\{(\hashOne[x], \hashTwo[x]) \mid x \in B\}$. We can think of an assignment of balls to bins as an orientation on the graph $G$, where the edge corresponding to a ball $x$ points to vertex $h_i(x)$ iff $x$ is assigned to bin $h_i(x)$. The maximum load across the bins corresponds to the maximum in-degree in the directed graph. 

Let $\orient$ be an arbitrary deterministic algorithm that, given a connected component $C$ of a graph $G$, produces a minimum-in-degree orientation of $C$. Define the \defi{canonical orientation} $\orient[G(B)]$ to be the orientation obtained by running $\orient$ on each connected component of $G(B)$. We will also define the \defi{Canonical Orientation Procedure} as the procedure for running $\orient$ on each connected component of $G(B)$.

Naor, Segev, and Wieder \cite{naor2008history} make two key observations:
\begin{enumerate}
    \item Fix a ball $x \in B$, and define $C$ to be the connected component in $G(B)$ containing edge $(\hashOne[x], \hashTwo[x])$. Let $|C|$ denote the number of edges in $C$. If we look at $\orient[G(B)]$, the number of edges whose orientations change when we remove $x$ is at most $|C|$. That is,
    $$|\orient[G(B)] \triangle \orient[G(B \setminus \{x\})]| \le |C|.$$
    \item If every connected component in $G(B)$ contains $O(1)$ cycles, then $\orient[G(B)]$ has maximum in-degree $O(1)$; and, thus, the corresponding balls-to-bins assignment has maximum load $O(1)$.
\end{enumerate}

Using these observations, Naor, Segev, and Wieder \cite{naor2008history} showed that, for sets $B$ of $n/2 - \Omega(n)$ balls, the canonical orientation $\orient[G(B)]$ achieves both expected recourse $O(1)$ and overload $O(1)$ (with high probability). This gave them a simple way of constructing efficient history-independent cuckoo hashing. 

\paragraph{Generalizing to our setting.} In this section, we will make repeated use of the Canonical Orientation Procedure. The main difficulty that we will need to overcome is that, a priori, the balls $B$ to which we wish to apply the Canonical Orientation Procedure do not necessarily produce a graph $G$ with small components (or even with few cycles per component). We will overcome this challenge through a mixture of algorithmic techniques (i.e., ways of changing which set of balls we need to apply the procedure to) and analytical techniques (i.e., analyses of graphs $G$ whose edges may be neither independent nor uniformly random). 

\paragraph{Extending to slightly larger sets of balls.} Before continuing, we remark that there is a simple way to extend the Canonical Orientation Procedure to allow for $m = O(n)$ balls, rather than requiring $m \le n / 2 - \Omega(n)$. We can simply (1) partition the balls at random into $O(1)$ subsets, each of size $n/2 - \Omega(n)$; and then (2) process each subset separately using the Canonical Orientation Procedure.

We refer to this procedure as the \defi{Extended Canonical Orientation (ECO) Procedure}. Formally, for a set of balls $B$, we define the extended orientation $\orient^*(G(B))$ as follows. Let $p \in (0, 1)$ be a sufficiently small positive constant satisfying the constraint that $p^{-1}$ is an integer. Partition the graph $G(B)$ into $p^{-1}$ random subgraphs $G_1(B), G_2(B), \ldots, G_{p^{-1}}(B)$, where each edge in $G(B)$ gets assigned (via a hash function) to a random subgraph. Then the orientation $\orient^*(G(B))$ is obtained by computing the canonical orientation of each subgraph, that is,
$$\orient^*(G(B)) \coloneqq \bigcup_{j = 1}^{p^{-1}} \orient(G_j(B)).$$

As a convention throughout the section, we will use the notations $G_j(B)$, $\orient(G_j(B))$ and $\orient^*(G(B))$ whenever we discuss an application of the ECO Procedure to a set $B$ of balls. We will also use $p$ to refer to the small positive constant used within the procedure.  

\subsection{The Algorithm}
To present the algorithm that we will use to prove Theorem \ref{thm:postprocess}, we begin motivating the basic ideas which the algorithm will use.

\paragraph{Idea 1: Apply the ECO Procedure to balls above height $m/n$.}
Let $B$ denote the set of balls that $\alg$ places above height $m/n$ (ties for which balls are in $B$ can be broken arbitrarily). The first idea one might try is to rearrange the balls in $B$ (i.e., reselect which hash function each ball uses) so that they are in an arrangement with overload $O(1)$. Specifically, we could attempt to simply apply the ECO Procedure to $B$. 

The problem with this idea is that the balls $B$ have \emph{spoiled randomness}. Because $B$ is constructed by an algorithm $\alg$ that gets to see the hashes of each ball, we cannot assume that the hashes $\{h_1(x), h_2(x) \mid x \in B\}$ are independent or uniformly random. This means that the graphs $G_1(B), G_2(B), \ldots, G_{p^{-1}}(B)$ do not necessarily have small connected components (or, for that matter, a small number of cycles in each connected component).

\paragraph{Idea 2: Swapping the balls in $B$ for balls with fresh randomness.} To rectify the issue of spoiled randomness, we can attempt the following procedure for swapping the balls in $B$ with new balls $B'$ that have fully random hashes. For this, we use what we call the \defi{Two-Phase Swapping Procedure:}
\begin{itemize}
    \item \textbf{Phase 1:} For each bin $i$ with $a_i$ Type $2$ balls and $k_i$ balls from $B$, remove $\min(a_i, k_i)$ Type $2$ balls from the bin and add them to an intermediate set $\bint$ (when choosing which Type $2$ balls to place in $\bint$, we may use an arbitrary deterministic tie-breaking rule). If $a_i < k_i$, then say that $k_i - a_i$ of the balls from $B$ in the bin experience \defi{Phase 1 failures}, and we place $k_i - a_i$ \defi{dummy balls} into $\bint$, each of which has $\hashOne[x] = i$ and $\hashTwo[x]$ generated independently and uniformly at random. The dummy balls take their second hashes from a random tape $R_i$ for bin $i$. 
    \item \textbf{Phase 2:} Place each $x \in \bint$ into its second-choice hash $\hashTwo[x]$. Now, for each bin $i$ with $b_i$ Type $1$ balls and $j_i$ balls from $\bint$, remove $\min(b_i, j_i)$ Type $1$ balls from the bin and add them to a set $\bafter$ (when choosing which Type $1$ balls to place in $\bafter$, we may use an arbitrary deterministic tie-breaking rule). If $b_i < j_i$, then say that $j_i - b_i$ of the Type $2$ balls in the bin experience \defi{Phase 2 failures}, and place $j_i - b_i$ \defi{dummy balls} into $\bafter$, each of which has $\hashOne[x] = i$ and $\hashTwo[x]$ generated independently and uniformly at random. The dummy balls take their second hashes from a second random tape $R_i'$ for bin $i$. Finally, define $B' \coloneqq \bafter$. 
\end{itemize}

The good news about the above procedure is that the balls $x \in B'$ have mutually independent and fully random hashes $\hashOne[x], \hashTwo[x]$. This will allow us to, later in the section, successfully apply the ECO Procedure to the graph $G(B')$.

\begin{lemma}[Random Hashes in $B'$]
Fix some set $B$ of $O(n)$ balls. Condition on fixed values of $\hashOne[x]$ for each ball $x \in B$, and on fixed values of $\hashTwo[x]$ for each Type 3 ball in $B$. Even with these conditions, the balls $x \in B' = \bafter$ (including the dummy balls) have mutually independent and fully random hashes $\hashOne[x], \hashTwo[x]$. 
\label{lem:randomG}
\end{lemma}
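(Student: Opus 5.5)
The plan is a deferred-decisions argument. The point is to identify which random variables each step of the Two-Phase Swapping Procedure depends on, and to check that the hashes of the balls that end up in $B' = \bafter$ are never among them. Condition on everything in the statement: $\hashOne[x]$ for all balls, $\hashTwo[x]$ for Type 3 balls of $B$, and, implicitly, the types and the set $B$. The set $B$ is produced by the good pre-baking algorithm $\alg$, which by the First Choices property never evaluates $\hashTwo$ of a Type 1 or Type 2 ball. So $B$ and everything $\alg$ sees are measurable with respect to the conditioned information together with the hashes of Type 3 balls. The second hashes $\hashTwo[x]$ of Type 1 and Type 2 balls therefore remain i.i.d.\ uniform on $[n]$, independent of all conditioned data. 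The random tapes $R_i, R_i'$ are also independent of everything.

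\textbf{Phase 1.} The numbers $a_i, k_i$ are determined by the conditioned data (types, first hashes, $B$). Hence the membership of $\bint$ is fixed: the Type 2 balls removed are chosen by a deterministic tie-breaking rule, and the dummy balls are fixed in number. Each $x \in \bint$ has a determined $\hashOne[x]$, but its $\hashTwo[x]$ is still a fresh uniform value. For a Type 2 ball this is because it was never inspected. For a dummy ball it is because it comes from the tape $R_i$. These second hashes are mutually independent across the balls of $\bint$.

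\textbf{Phase 2.} Placing each $x \in \bint$ into $\hashTwo[x]$ reveals exactly these Phase 1 second hashes. The quantities $j_i$, and hence which Type 1 balls (and how many dummies) land in $\bafter$, depend only on the conditioned data, the second hashes of $\bint$, and $R$. None of these involve the second hashes of Type 1 balls or the tapes $R_i'$.

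\textbf{The key step: the first hash $\hashOne[y]$ of $y \in \bafter$.} The multiset of first hashes of $\bafter$ is exactly the multiset of bins receiving Phase 1 throws: each ball in $\bint$ thrown to bin $i$ yields exactly one member of $\bafter$ with first hash $i$, either a Type 1 ball from bin $i$ or a dummy with $\hashOne = i$. So I will pair each $x \in \bint$ with the element $y(x) \in \bafter$ it displaces, using some fixed deterministic matching within each bin. This gives a bijection with $\hashOne[y(x)] = \hashTwo[x]$. The values $\hashTwo[x]$ for $x \in \bint$ are i.i.d.\ uniform, so the first hashes of $\bafter$ are i.i.d.\ uniform after relabelling by the bijection.

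\textbf{Second hashes, and assembling.} The second hashes of the elements of $\bafter$ are Type 1 second hashes or $R_i'$ entries. These were never revealed, so they are i.i.d.\ uniform and independent of all first hashes of $\bafter$. Combining gives full independence and uniformity of the pairs $(\hashOne[y], \hashTwo[y])$.

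\textbf{Main obstacle: identity versus labelling.} The set $\bafter$ itself, i.e.\ which Type 1 ball is selected, depends on the Phase 1 throws. So ``the hashes of $\bafter$'' only makes sense as a collection indexed by $\bint$ through the bijection. Indexed that way, each $y(x)$'s first hash is the fresh variable $\hashTwo[x]$. Its second hash is the untouched Type 1 $\hashTwo$ or a tape entry, and the choice of which variable is used depends only on already-revealed information. By a standard ``choose an unrevealed i.i.d.\ variable by a stopping-time-like rule'' argument, the selected variables are still i.i.d.\ uniform.

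I would state this carefully, checking that each selection is adapted to the revealed information. I would also note that the number $|\bafter| = |\bint| = |B|$ is fixed by the conditioning.
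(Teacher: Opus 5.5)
Your proposal is correct and uses the same deferred-decisions argument as the paper. The first hashes of $\bafter$ are exactly the never-inspected second hashes of $\bint$, which are Type 2 hashes or $R_i$ entries. The second hashes of $\bafter$ are never-inspected Type 1 hashes or $R_i'$ entries.

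Your explicit bijection $x \mapsto y(x)$, together with the check that the choice of which Type 1 balls and tape entries are used depends only on already-revealed data, makes rigorous the identity-versus-labelling point that the paper's short proof leaves implicit.
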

\begin{proof}
The balls $x \in \bint$ take their first hashes $\hashOne[x]$ from the first hashes of balls in $B$ (these hashes are spoiled), but have second hashes $\hashTwo[x]$ that are fully random. The balls $x \in \bafter$ take their first hashes $\hashOne[x]$ from the second hashes of balls in $\bint$ (these hashes are already random!) and have second hashes $\hashTwo[x]$ that are fully random. The result is that the balls $x \in \bafter$ have mutually independent and fully random hashes $\hashOne[x], \hashTwo[x]$. 
\end{proof}

The other good news is that there will likely not be too many balls that incur Phase 1 or Phase 2 failures. In fact, if $m/n = \omega(\log n)$, one can argue with high probability in $n$ that there are \emph{no} Phase 1 or Phase 2 failures. 

On the other hand, if $m/n = O(\log n)$, then there may be many balls that incur Phase 1 or Phase 2 failures. These balls may still sit above height $m/n$, and could cause a large overload. 

\paragraph{Idea 3: Over-approximating Phase 1 and Phase 2 failures with a well-behaved set.}
Let $\failSet$ denote the set of balls that incur Phase 1 and Phase 2 failures (within each bin, we may break ties arbitrarily when deciding which balls are considered to have experienced a failure). 

One challenge in reasoning about $\failSet$ is that the distribution of the hashes of the balls in $\failSet$ may be influenced by the behavior of the algorithm $\alg$. A critical step in this section is to define a larger set $\failSuperset$ that does not depend on the algorithm $\alg$, and that can be used as a proxy for $\failSet$. 
% \rose{if time, expand as definitions so it is readable.} 
Let $\mu = m/n$, let $\epsilon = 0.0001$, and recall that each ball has probability $p_j$ of being Type $j$, where $p_1 = 0.89, p_2 = 0.1, p_3 = 0.01$. We say that a bin $i$ is \defi{Type $j$ overloaded from hash $h_k$} if the number of balls $x$ satisfying $h_k(x) = i$ is at least $(p_j + \epsilon)\mu$. Likewise, we say that a bin $i$ is \defi{Type $j$ underloaded from hash $h_k$} if the number of balls $x$ satisfying $h_k(x) = i$ is at most $(p_j - \epsilon) \mu$. A bin $i$ is said to be \defi{Type $j$ overloaded} (resp.~\defi{Type $j$ underloaded}) if it is Type-$j$ overloaded (resp.~Type $j$ underloaded) from at least one of the hash functions $h_1, h_2$. 

With these definitions in mind, we construct $\failSuperset$ to consist of all balls $x$ such that:
\begin{itemize}
    \item $x$ is a Type $i$ ball for some $i$, and there is some $j \in \{1, 2\}$ such that $h_j(x)$ is a Type $i$ overloaded bin from $h_j$. 
    \item $x$ is a Type $i$ ball for some $i$, and at least one of $h_1(x), h_2(x)$ is a Type $j$ underloaded bin for some $j < i$. 
\end{itemize}

Say that the system is \defi{$\failSuperset$-safe} if removing the balls $\failSuperset$ from the system results in every bin having load at most $m/n$. A key step in the analysis will be to argue (in Lemma \ref{lem:Fsafe}) that, after we apply the Two-Phase Swapping Procedure, the state of the system is $\failSuperset$-safe.  (Note that, at this point in time, the balls $B'$ are also not in the system.) This captures the way in which $\failSuperset$ is a good ``proxy'' for $\failSet$. 

\noindent For describing our algorithm, it will be helpful to think of $\failSuperset$ as a union of sets defined as follows:
\begin{itemize}
\item For each $a, b \in [3]$ with $a > b$, and for each $c, d \in [2]$, define $X^{(a, b, c, d)}$ to be the set of Type $a$ balls $x$ for which $h_c(x)$ is a bin that is Type $b$ underloaded due to $h_d$. 
\item For each $a \in [3]$ and $c \in [2]$, define $Y^{(a,c)}$ to be the set of Type $a$ balls $x$ for which $h_c(x)$ is a bin that is Type $a$ overloaded due to $h_c$. 
\item Then $\failSuperset$ is the union of the $X^{(a, b, c, d)}$s and $Y^{(a, c)}$s.
\end{itemize}

The final step in our algorithm will be to apply the ECO Procedure to each $X^{(a, b, c, d)}$ and to each $Y^{(a, c)}$. This will be somewhat tricky because the edges in a given $X^{(a, b, c, d)}$ (or $Y^{(a, c)}$) are not independent or uniformly random. Nevertheless, the edges in a given $X^{(a, b, c, d)}$ or $Y^{(a, c)}$ will turn out to have a nice enough combinatorial structure that we will be able to succeed in applying the ECO Procedure to them. 

\paragraph{The full algorithm.}
We can now describe the full algorithm for proving Theorem \ref{thm:postprocess}. Let $p \in (0, 1)$ be a sufficiently small positive constant to be used in the ECO Procedure.
\begin{enumerate}
    \item First, apply $\alg$ to get an initial orientation $\alg(S)$ of the balls.
    \item Apply the Two-Phase Swapping Procedure to $\alg(B)$, and let $P_1, P_2, B'$ be as defined in the procedure. 
    \item Next, apply the ECO Procedure to the balls in $B'(S)$. 
    \item Finally, define $\failSuperset = (\bigcup_{a, b, c, d} X^{(a, b, c, d)}) \cup (\bigcup_{a, c} Y^{(a, c)})$ as above, and reorient the edges in $F'$ as follows: For each $X^{(a, b, c, d)}$ and for each $Y^{(a, c)}$, one after another, apply the ECO Procedure to the graph induced by those balls. If a ball is in multiple such sets, it ends up using the orientation given by the final of the sets that the ball is in.
\end{enumerate}

Note that $P_1$, $P_2$, $B'$, $F'$, $X^{(a, b, c, d)}$, and $Y^{(a, c)}$ are all implicit functions of $S$ and will sometimes also be written with $S$ as an argument (e.g., $P_1(S)$). 

\subsection{Bounding Overload}\label{sec:postprocessoverload}

To prove Theorem \ref{thm:postprocess}, the first step is to bound the overload of our final algorithm by $O(1)$ (with high probability). 

To do this, we begin by analyzing the set $\failSuperset$. Recall that the system is said to be $\failSuperset$-safe if removing the balls $\failSuperset$ from the system results in every bin having load at most $m/n$. The next lemma argues that, after the Two-Phase Swapping Procedure, the system is $\failSuperset$-safe. 
\begin{lemma}
After applying the Two-Phase Swapping Procedure to $\alg(B)$, the system is $\failSuperset$-safe. (Note that, at this point in time, the balls $B'$ are also not in the system.)
\label{lem:Fsafe}
\end{lemma}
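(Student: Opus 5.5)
The plan is a bin-by-bin accounting. Fix a bin $i$ and write $\ell_i$ for its load under $\alg(S)$, so that $k_i=\max(\ell_i-m/n,0)$ is the number of balls of $B$ in bin $i$. By the First Choices property, the Type $1$ and Type $2$ balls in bin $i$ are exactly the Type $1$/Type $2$ balls $x$ with $\hashOne[x]=i$. Consequently $a_i$ and $b_i$ are determined by the hashes alone, independently of $\alg$. After the Two-Phase Swapping Procedure the load of bin $i$ (with $B'$ removed) is
\[
\ell_i-\min(a_i,k_i)+j_i-\min(b_i,j_i)\;\le\; m/n+\max(k_i-a_i,0)+\max(j_i-b_i,0).
\]
So it suffices to show two things. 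First, if the bin has a Phase 1 or Phase 2 failure, then enough balls of $\failSuperset$ currently sit in bin $i$ to absorb the excess. Second, if the bin has no failure, there is nothing to absorb. I will split into a \emph{regular} case and an \emph{irregular} case.

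\textbf{Regular bins.} Call bin $i$ regular if it is neither Type $j$ overloaded from either hash for any $j$, nor Type $1$ or Type $2$ underloaded from either hash. For such a bin:
\begin{itemize}
\item The balls with $\hashOne=i$ of Types $1$ and $2$ number less than $(p_1+p_2+2\epsilon)\mu$.
\item The Type $3$ balls that can sit in bin $i$ number less than $2(p_3+\epsilon)\mu$.
\end{itemize}
Hence $\ell_i\le(1.01+4\epsilon)\mu$ and $k_i\le(0.01+4\epsilon)\mu<(p_2-\epsilon)\mu\le a_i$, so there is no Phase 1 failure. For Phase 2, $j_i$ counts the balls of $\bint$ with $\hashTwo=i$. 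These are Type $2$ balls with $\hashTwo=i$, of which there are fewer than $(p_2+\epsilon)\mu$ because the bin is not Type $2$ overloaded from $\hashTwo$, together with dummies. Meanwhile $b_i>(p_1-\epsilon)\mu$, so $j_i\le b_i$ comfortably, provided the number of dummies landing in bin $i$ is small.

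\textbf{Irregular bins.} Here I use the defining sets directly.
\begin{itemize}
\item If bin $i$ is Type $a$ overloaded from $h_c$, every Type $a$ ball $x$ with $h_c(x)=i$ lies in $Y^{(a,c)}$. Removing these caps the Type $a$ contribution through that hash at zero. The other types are then handled as in the regular case.
\item If bin $i$ is Type $b$ underloaded from some hash, every ball of Type $>b$ having $i$ as one of its choices lies in some $X^{(a,b,c,d)}$. Removing them leaves at most the Type $\le b$ first-choice balls plus dummies. In particular, if $b=1$, at most $(p_1-\epsilon)\mu$ Type $1$ balls remain, well below $\mu$. If $b=2$ and the bin is not Type-$1$-underloaded, the remaining Type $1$ and Type $2$ balls number at most $(p_1+\epsilon+p_2-\epsilon)\mu\le\mu$, unless the bin is also Type $1$ overloaded, which is absorbed by $Y^{(1,1)}$.
\end{itemize}
So in every irregular case, after deleting $\failSuperset$ the surviving non-dummy balls number at most $\mu$ minus a margin of order $\epsilon\mu$.

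\textbf{Expected obstacle: dummy balls.} A dummy created at bin $i'$ (through a Phase 1 failure) is thrown to a uniformly random bin $\hashTwo$. Its arrival can raise $j_i$ at an otherwise innocent bin $i$, and dummies are not members of $\failSuperset$. I would first try to show that a Phase 2 dummy arrival at a regular bin is still matched by an available Type $1$ ball, using the slack $b_i-j_i\ge(p_1-p_2-2\epsilon)\mu$. Dummies are only produced at irregular bins, so their number per target bin should be tiny compared with this slack. Dummies created in Phase 2 go into $B'$ and are therefore not in the system at all. If the slack argument does not close, the fallback is to charge each dummy landing in bin $i$ to a ball of bin $i'$ that lies in $\failSuperset$ and was displaced. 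The $X^{(a,b,c,d)}$ sets, defined over both $c$ and $d$, appear designed to make exactly this charging work.
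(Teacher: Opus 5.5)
Your proposal has two genuine gaps.

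\textbf{Dummy balls.} The lemma is deterministic: it must hold for every outcome of the hashes, the types, and the tapes $R_i, R_i'$. Your regular-bin argument closes only ``provided the number of dummies landing in bin $i$ is small,'' and nothing in the procedure guarantees this, since all Phase 1 dummies may hash to the same bin. Your fallback also fails. Charging a dummy that lands in bin $i$ to an $F'$ ball of its origin bin $i'$ cannot help, because $F'$-safety is a per-bin bound, and a ball in bin $i'$ does not lower the count in bin $i$.

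What you are missing is that dummies are fictitious placeholders whose only purpose is to keep the hashes of $B'$ fresh. They never occupy a bin, since the final state allocates only the balls of $S$. Their only effect in Phase 2 is to trigger extra Type 1 evictions, which can only lower loads. The paper's proof treats them this way: the only balls it counts as entering bin $i$ in Phase 2 are the real non-$F'$ Type 2 balls with $h_2(x)=i$.

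With this reading your regular case needs no probabilistic input. Let $u_i\le j_i$ be the number of real arrivals. If $j_i\le b_i$, the bin ends with at most $\min(\ell_i,\mu)+u_i-j_i\le\mu$ balls. If $j_i>b_i$, every Type 1 ball has left, and the remaining Type 2 and Type 3 balls number less than $2(p_2+\epsilon)\mu+2(p_3+\epsilon)\mu<\mu$.

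\textbf{The irregular case never uses the Phase 2 evictions.} Your sentence ``removing them leaves at most the Type $\le b$ first-choice balls'' is false for $b=2$. The sets $X^{(a,2,c,d)}$ contain only Type 3 balls, so a Type 2 ball entering bin $i$ through $h_2$ survives unless bin $i$ is Type 1 underloaded or Type 2 overloaded from $h_2$ (or the ball's own $h_1$-bin is bad). Adding up to $(p_2+\epsilon)\mu$ such arrivals to your count gives as much as $(p_1+2p_2+\epsilon)\mu>\mu$.

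Likewise, ``the other types are then handled as in the regular case'' does not carry over to overloaded bins. Your no-Phase-1-failure argument relied on $\ell_i\le(1.01+4\epsilon)\mu$, which fails there.

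The missing idea is the matching step at the core of the paper's Phase 2 argument. A non-$F'$ Type 2 ball can enter bin $i$ only if bin $i$ is not Type 1 underloaded, since otherwise it lies in some $X^{(2,1,2,d)}$. In that case the number $w_i$ of such arrivals satisfies $w_i\le(p_2+\epsilon)\mu<(p_1-\epsilon)\mu<b_i$. Because $j_i\ge w_i$, Phase 2 evicts at least $w_i$ Type 1 balls. Hence the surviving Type 1 balls plus the non-$F'$ arrivals number at most $b_i$.

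The paper first shows the system is already $F'$-safe after Phase 1 (Claim~\ref{clm:fsafe-phase1}, by a case analysis on the non-$F'$ Type 2 balls in the bin). A simpler way to see this: either bin $i$ has no Phase 1 failure and its load is at most $\mu$, or all its Type 2 balls have left and its non-$F'$ Type 1 and Type 3 balls number less than $(p_1+\epsilon)\mu+2(p_3+\epsilon)\mu<\mu$. The paper then shows Phase 2 preserves $F'$-safety via the matching above, with no regular/irregular split needed.
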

\begin{proof}
We begin by arguing the following claim:
\begin{clm}\label{clm:fsafe-phase1}
     The system is $\failSuperset$-safe after Phase 1. (Note that, at this point, the balls $\bint(\setBalls)$ have been removed from the system).
\end{clm}
To rephrase Claim~\ref{clm:fsafe-phase1}, we would like to show that, after removing the balls in $P_1$ from the system, and then removing the balls in $F'\setminus P_1$, each bin has load at most $\mu = m/n$. To prove this, it is helpful to think of the equivalent process of first removing $F'$ and then removing $P_1 \setminus F'$, where $F'$ and $P_1$ are known in advance.

Before proving the claim, we introduce some notation. We fix a bin $i$ and define the variables $x_i$, $y_i$, $z_i$, respectively, to be the number of Type 3, Type 2, and Type 1 balls in the bin after the $F'$ balls are removed but before the $P_1 \setminus F'$ balls are removed:
\begin{itemize}
    \item $x_i$ is the number of non-$F'$ Type $3$ balls in bin $i$ before Phase 1.
    \item $y_i$ is the number of non-$F'$ Type $2$ balls in bin $i$ before Phase 1.
    \item $z_i$ is the number of non-$F'$ Type $1$ balls in bin $i$ before Phase 1.
\end{itemize}
Let $o_i$ be the number of balls in bin $i$ above height $\mu$ after removing $F'$ from the system, i.e.
\begin{equation}\label{eq:oi}
    o_i = \max(0,x_i + y_i + z_i - \mu).
\end{equation}
\begin{proof}[Proof of Claim~\ref{clm:fsafe-phase1}]
We prove the claim for each bin $i$. We divide the proof of the claim into two cases based on $y_i$: (1) $y_i = 0$, and (2) $y_i \ne 0$.

\paragraph{Case $1$:} It suffices to show that $o_i = 0$. To see why $o_i = 0$ in Case $1$, it is helpful to note that 
\begin{equation}x_i \le 2(p_3 + \epsilon)\mu.
\label{eq:xi}
\end{equation}
and
\begin{equation}z_i \le (p_1 + \epsilon) \mu,
\label{eq:zi}
\end{equation}
where \eqref{eq:xi} comes from the fact that, for each $k \in \{1, 2\}$, $\failSuperset$ contains Type 3 balls $x$ for which $h_k(x)$ is in a Type 3 overloaded bin from $h_k$; and \eqref{eq:zi} comes from the fact that $\failSuperset$ contains all Type 1 balls $x$ for which $h_1(x)$ is a bin that is Type 1 overloaded from $h_1$. 

Since $y_i = 0$, it follows that $x_i + y_i + z_i \le (p_1 + 2p_3 + 2\epsilon) \mu < \mu$, which implies by \eqref{eq:oi} that $o_i = 0$.

\paragraph{Case $2$:} It's helpful to first establish two inequalities. First, since $\failSuperset$ contains every Type 2 ball $x$ for which $h_1(x)$ is a Type 2 overloaded bin from $h_1$, we know that
\begin{equation}\label{eq:yii}
    y_i \le (p_2 + \epsilon)\mu
\end{equation}
Combining \eqref{eq:yii} and \eqref{eq:zi} gives $y_i + z_i \le p_1\mu + p_2\mu + 2\epsilon \mu \le \mu$, which implies by \eqref{eq:oi} that
\begin{equation}\label{eq:oilexi}
    o_i \le x_i.
\end{equation}

We now subdivide Case $2$ into two more cases: (2a) $y_i < p_2\mu - \epsilon\mu$ and (2b) $y_i \ge p_2\mu - \epsilon\mu$. In Case $2a$, the fact that $y_i < p_2\mu - \epsilon\mu$ causes all balls of Type $3$ to be added to $F'$ and therefore removed from the bin, i.e. $x_i = 0$ (recall that $F'$ contains every Type 3 ball in a Type 2 overloaded bin). Thus, by \eqref{eq:oilexi}, $o_i = 0$, completing the proof in this case. In Case $2b$, the fact that $y_i \ge p_2\mu - \epsilon\mu$ combined with \eqref{eq:xi} implies $y_i \ge x_i$, which combined with \eqref{eq:oilexi} shows $y_i \ge o_i$. This means that at least $o_i$ Type $2$ balls from bin $i$ are removed from the bin, completing the proof.
\end{proof}

Using Claim~\ref{clm:fsafe-phase1}, we can now argue that each bin $i$ contains at most $\mu$ non-$\failSuperset$ balls at the end of Phase $2$.

Since, for each $k \in \{1, 2\}$ and $j \in \{1, 2, 3\}$, $\failSuperset$ contains all Type $j$ balls $x$ for which $h_k(x)$ is a bin that is Type $j$ overloaded from $h_k$, the number of non-$\failSuperset$ balls of Types 3 and 2 that are in bin $i$ after the second phase (or at any point during the swapping process) can be at most
$$2 (p_3 + \epsilon)\mu + 2 (p_2 + \epsilon)\mu  < \mu.$$
Therefore, the only way for bin $i$ to have more than $\mu$ non-$\failSuperset$ balls at the end of the second phase is if its Type $1$ balls (all of which are using hash $h_1$) are not in $\failSuperset$. We will therefore take as given for the rest of the proof that the Type $1$ balls in bin $i$ are not in $\failSuperset$, meaning that $z_i$ is just the number of Type 1 ball in bin $i$. 

Let $w_i$ be the number of non-$\failSuperset$ Type 2 balls $x$ satisfying $h_2(x) = i$. Since $\failSuperset$ contains all Type 2 balls $x$ for which bin $h_2(x)$ is Type 2 overloaded from $h_2$, we know that
$$w_i \le (p_2 + \epsilon) \mu.$$
Since $\failSuperset$ contains all Type 2 balls that hash to any Type 1 underloaded bins, the only way for $w_i$ to be non-zero is if 
$$z_i \ge (p_1 - \epsilon) \mu.$$
Therefore, $z_i \ge w_i$. This implies that at least $w_i$ Type 1 balls are removed from bin $i$ during the second phase. Since the number of non-$\failSuperset$ balls at the beginning of the phase was at most $\mu$, since $w_i$ non-$\failSuperset$ balls were added to the bin, and since at least $w_i$ non-$\failSuperset$ balls were removed from the bin, we can conclude that the number of non-$\failSuperset$ balls in the bin remains at most $\mu$ at the end of the phase. 
\end{proof}

Next, we state two lemmas to analyze the various graphs to which we apply the Canonical Orientation Procedure -- roughly speaking, these lemmas tell us that the graphs at hand are ``valid candidates'' for the procedure. (The same lemmas will also be useful later on when we are bounding recourse.) Both lemmas require nontrivial machinery to prove, and we defer this machinery to Section \ref{sec:graphs}.

For the next lemma, recall that for a graph $G(B)$, we define the random subgraphs $G_1(B),\ldots,G_{p^{-1}}(B)$ which partition the edges in $G(B)$ (each edge in $G(B)$ is assigned to one subgraph at random).

\begin{lemma}
For each $j \in [p^{-1}]$, we have with high probability in $n$ that the graph $G_j(\bafter(S))$ satisfies the following properties:
\begin{itemize}
    \item each connected component contains at most $O(1)$ cycles;
    \item the expected size of the connected component containing a given vertex $v \in [n]$ is $O(1)$.
\end{itemize}
\label{lem:goodgraphsa}
\end{lemma}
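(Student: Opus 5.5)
The plan is to reduce Lemma~\ref{lem:goodgraphsa} to the classical analysis of sparse random graphs. The two ingredients are Lemma~\ref{lem:randomG}, which says the edges of $G(\bafter(S))$ are independent and uniform, and a bound of $O(n)$ on the number of edges. Once the edge set of $G_j(\bafter(S))$ is shown to be a subcritical Erd\H{o}s--R\'enyi-type multigraph, say at most $cn$ independent uniform edges with $c<1/2$ a constant, both bullet points follow from standard branching-process and first-moment arguments.

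\textbf{Step 1: bound $|\bafter(S)|$.} Every ball of $\bafter$ is either a Type~1 ball removed to match a ball of $\bint$ or a dummy ball. In both cases $|\bafter| = \sum_i j_i = |\bint|$. By the same argument, $|\bint| = \sum_i k_i = |B|$. The set $B$ is exactly the balls above height $m/n$ in $\alg(S)$, so its size equals the cumulative overload. This is $O(n)$ with high probability, by the \textbf{Small Cumulative Overload} property of a good pre-baking algorithm. Hence $|\bafter(S)| \le Cn$ w.h.p.\ for some constant $C$.

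\textbf{Step 2: the subsampled graph is subcritical.} Each edge goes independently to one of the $p^{-1}$ subgraphs via a hash that is independent of everything else. So, conditioned on $|\bafter| \le Cn$, the edge set of $G_j$ is a random subset in which each edge survives independently with probability $p$. A Chernoff bound gives $|E(G_j)| \le 2pCn \le n/4$ w.h.p., once $p$ is chosen small relative to $C$.

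The subtle point is that the size of $\bafter$ depends on the hashes of the balls. I would handle this by conditioning on all first hashes and on the Type~3 second hashes, exactly as in Lemma~\ref{lem:randomG}. Under that conditioning, $B$ and the counts $k_i$, $a_i$ (and therefore $\bint$) are determined. Whether each $j_i$ exceeds $b_i$ still depends on the fresh second hashes of $\bint$, but the total count $|\bafter|=|\bint|$ does not. Thus $|\bafter|$ is fixed given the conditioning, and its elements carry fresh independent uniform endpoints.

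I need to double-check one thing here. The identities of the Type~1 balls chosen for $\bafter$ depend on where the $\bint$ balls land. However, the endpoints of a chosen ball are its first hash (which equals the landing bin, a fresh uniform value) and its second hash (never inspected). This is what Lemma~\ref{lem:randomG} asserts, so I will invoke it directly.

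\textbf{Step 3: random graph facts.} Given a graph with $M \le n/4$ independent uniform edges on $[n]$, I would use the following.
\begin{itemize}
\item Via an exploration or branching-process argument with offspring mean $2M/n \le 1/2$, the component of a fixed vertex $v$ has size exceeding $k$ with probability $e^{-\Omega(k)}$. This gives expected component size $O(1)$, and maximum component size $O(\log n)$ w.h.p.
\item By a first-moment count over connected subgraphs on $k$ vertices with $k+s$ edges, the expected number of such subgraphs is at most $n^{-s+o(1)}\cdot e^{O(k)}$ times a subcritical factor. So w.h.p.\ no component has more than a constant number $s_0$ of cycles (excess). This is the standard cuckoo-hashing computation of Naor--Segev--Wieder.
\end{itemize}
A union bound over the $p^{-1}=O(1)$ indices $j$ finishes the proof.

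\textbf{Main obstacle.} The delicate step is Step 2: making rigorous that conditioning on the high-probability event $|\bafter|\le Cn$ does not bias the endpoints of the edges. A second concern is that the "expected component size" claim is an expectation over fresh randomness given the high-probability event, rather than the trivial statement. I would try first to establish the conditional-independence statement carefully along the lines of Lemma~\ref{lem:randomG}, and then to treat the edge count as fixed.
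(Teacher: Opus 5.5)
Your proposal is correct and follows essentially the same route as the paper. You bound $|B'| = |B|$ by the $O(n)$ cumulative overload, use Lemma~\ref{lem:randomG} to get independent uniform edges, and then apply a subcritical sparse-random-graph result; the only differences are that you spell out the conditioning argument (that $|B'|$ is fixed by the conditioned-on information, so the high-probability event does not bias the fresh endpoints) where the paper glosses over it, and that you cite standard Erd\H{o}s--R\'enyi exploration and first-moment facts where the paper uses Lemma~\ref{lem:G1} (Poissonization plus Corollary~\ref{cor:cuckooproperties}).
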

\begin{proof}
Since $|\bafter(S)| = |B(S)|$ is $O(n)$ with high probability, we have by Lemma \ref{lem:randomG} that, with high probability, $G$ is a graph with $O(n)$ independent and uniformly random edges from $[n] \times [n]$. The result therefore follows from Lemma \ref{lem:G1}. 
\end{proof}

\begin{lemma}
For each $j \in [p^{-1}]$, for each graph $G_j(X^{(a, b, c, d)})$ (where $a,b \in [3]$, $a > b$, and $c, d \in [2]$), and for each graph $G_j(Y^{(a, c)})$ (where $a \in [3]$ and $c \in [2]$), we have with high probability in $n$ that the graph satisfies the following properties:
\begin{itemize}
    \item each connected component contains at most $O(1)$ cycles;
    \item the expected size of the connected component containing a given vertex $v \in [n]$ is $O(1)$; 
    \item for any ball $s \in S$, the expected size of the connected component containing vertex $h_1(s)$ is also $O(1)$. 
\end{itemize}
\label{lem:goodgraphsb}
\end{lemma}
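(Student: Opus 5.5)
The plan is to reduce Lemma~\ref{lem:goodgraphsb} to a structural statement about random graphs whose edges are "locally thinned" by events depending only on vertex loads. Fix $j$ and one of the sets, say $X^{(a,b,c,d)}$; the $Y^{(a,c)}$ case is analogous and simpler. Membership of a ball $x$ in $X^{(a,b,c,d)}$ depends on four things: its type, which is independent; the hash $h_c(x)$; and whether the bin $h_c(x)$ is Type $b$ underloaded from $h_d$. That underload condition is a function of the load vector $L^{(d)}=(|\{y: h_d(y)=v\}|)_{v\in[n]}$. Once we condition on $L^{(1)},L^{(2)}$, the set $U$ of "bad" bins is fixed. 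The edges of $G_j(X^{(a,b,c,d)})$ are then those Type $a$ balls with one endpoint in $U$ (in the $c$-coordinate), further thinned by the random subgraph assignment with probability $p$.

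The key step is a first-moment bound on bad bins. By a Chernoff bound, each bin is Type $b$ underloaded or overloaded from $h_d$ with probability $e^{-\Omega(\epsilon^2\mu)}$. So $|U|\le n e^{-\Omega(\mu)}+\tilde O(\sqrt n)$ whp, by McDiarmid as in Lemma~\ref{lem:imbalance}, and in particular $|U|\le \delta n$ for $\mu$ large. For $\mu=O(1)$ the $\epsilon\mu$ margin is below $1$, so I would not rely on $U$ being small. Instead I would use only that the total number of edges is at most $p|S|=p\mu n$, which is at most $p' n$ for any constant $\mu$, so a small enough $p$ makes the graph sparse.

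The main obstacle is the dependence between the edges and $U$. Balls landing in $U$ are exactly what make $U$ bad, and they bias the other endpoints. My approach is to dominate the edge multiset by a subcritical random multigraph. I would expose the balls one at a time and use the fact that, conditional on the load profile, the assignment of balls to bins is exchangeable. The endpoint $h_{3-c}(x)$ of each edge is an independent uniform bin, except when $c=d$, where it is $h_{3-d}(x)$. In the worst case, the $U$-side endpoints are concentrated in few vertices with degree $O(\mu)$, while the other endpoints are uniform.

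I would then bound the expected number of connected subgraphs of size $k$ that contain a given vertex (or $h_1(s)$) and have at least $k+r$ edges. I would count trees by an exploration or branching-process argument whose offspring mean is at most $p\cdot O(\mu)\cdot |U|/n + p\mu < 1/2$, after choosing $p$ small relative to $\mu$'s regime; large $\mu$ is handled by $|U|/n=e^{-\Omega(\mu)}$ beating the factor $\mu$. This gives geometric tails $2^{-k}$ for component sizes, hence expected size $O(1)$. A standard union bound over $k\le O(\log n)$ and excess $r\ge C$ shows that every component has $O(1)$ cycles whp, with larger components ruled out whp.

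For the third bullet, the vertex $h_1(s)$ is correlated with the graph through $s$ itself. I would handle this by deleting $s$: its removal changes the load vector by one, which shifts $U$ by at most two bins, so I would reapply the exploration bound with the vertex $h_1(s)$ independent of the remaining edges. I defer the detailed exploration estimates to Section~\ref{sec:graphs}, with Lemma~\ref{lem:G1} as the base case for uniform edges.
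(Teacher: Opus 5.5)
Your overall plan matches the spirit of the paper, which reduces the lemma to Lemmas~\ref{lem:G2} and~\ref{lem:G3} plus Corollary~\ref{cor:G}. That plan is: condition on the loads that determine the bad set $U$, treat the $p$-thinned edges as a sparse graph with independent uniform far endpoints, bound components by a first-moment or exploration count, and handle $h_1(s)$ by removing $s$. However, the step that carries the argument fails as written. Your offspring-mean bound $p\cdot O(\mu)\cdot|U|/n+p\mu<1/2$ contains the term $p\mu$, and you propose to make it small by ``choosing $p$ small relative to $\mu$'s regime.'' That option is not available. $p$ is the fixed constant of the ECO Procedure and $p^{-1}$ is the number of subgraphs. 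Both the $O(1)$ overload (each $G_j$ adds $O(1)$ per bin) and the recourse bound (a sum over $j\in[p^{-1}]$) need $p^{-1}=O(1)$, independent of $\mu$. Nor can the term just be dropped: a bad vertex reached through one of its edges really does bring $\Theta(p\mu)$ new neighbours. So no one-step offspring mean is small uniformly in $\mu$, and your remark that $|U|/n=e^{-\Omega(\mu)}$ ``beats the factor $\mu$'' only addresses the first summand.

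The missing idea is that rarity beats degree exponentially, vertex by vertex. A bin is bad with probability $e^{-\Omega(\epsilon^2\mu)}$, and then it carries about $p\mu$ thinned edges. Since $p$ is small relative to $\epsilon$ (not $\mu$), we get $e^{-\Omega(\epsilon^2\mu)}\le c^{\Theta(p\mu)}$, which yields $\Pr[Z_i\ge j]\le c^j$ for all $j$, uniformly in $\mu$. This is exactly what Lemmas~\ref{lem:G2}/\ref{lem:G3} prove, using Poissonization for independence and then Lemma~\ref{lem:downsample}. With such tails, the count behind Corollary~\ref{cor:cuckooproperties} works for every $\mu$ with $p$ fixed: $j$ out-degrees sum to $j-1$ with probability at most $8^{-(j-1)}$, which beats the $e^{j-1}$ factor from choosing $C$. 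In exploration terms, the right growth rate is the two-step quantity, roughly $p\mu\sqrt{|U|/n}$, not $p\mu$.

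Knowing only $|U|$ and ``degree $O(\mu)$'' would not suffice anyway. Bad bins need not have degree $O(\mu)$; for $\mu=O(1)$ some have degree $\Theta(\log n/\log\log n)$, so you need a tail on the degree. Also, $|U|\le ne^{-\Omega(\mu)}+\tilde O(\sqrt n)$ does not give $|U|/n=e^{-\Omega(\mu)}$ once $\mu\gtrsim\log n$.

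A few smaller corrections:
\begin{itemize}
\item The bad-bin conditions concern per-type loads, not the total load vector $L^{(d)}$; with total loads your Chernoff claim is false.
\item With per-type loads, $U$ for $X^{(a,b,c,d)}$ depends only on Type $b$ balls, which are disjoint from the Type $a$ balls forming the edges. So the dependence you identify as the main obstacle arises only for $Y^{(a,c)}$. There, conditioning on the Type-$a$ load vector, which gives independent $\mathrm{Bin}(L_w,p)$ out-degrees, is a legitimate substitute for Poissonization.
\item For the third bullet, deleting $s$ can flip $h_1(s)$ itself into $U$ and attach a whole star there. You still must bound the chance that $h_1(s)$ is bad given that it contains $s$; this is the extra care taken in Corollary~\ref{cor:G}.
\end{itemize}
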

\begin{proof}
Let $m_1, m_2, m_3$ be the number of Type 1, Type 2, and Type 3 balls in the system, respectively. First note that, by Chernoff bounds, we have with high probability in $n$ that $m_1 = (1 \pm o(1)) p_1 m$, that $m_2 = (1 \pm o(1))p_2m$, and that $m_3 = (1 \pm o(1)) p_3 m$. 

It follows that each of the graphs in the lemma fits the assumptions of one of Lemmas \ref{lem:G2} (for the $X^{(a, b, c, d)}$s) or \ref{lem:G3} (for the $Y^{(a, c)}$s). The claimed result therefore follows from those lemmas (for the first two claimed bullet points) and Corollary \ref{cor:G} (for the third claimed bullet point).

\end{proof}

Putting the previous lemmas together, we can now bound the overload of our algorithm:
\begin{proposition}
    With high probability in $n$, the final state produced by our algorithm has overload $O(1)$.
    \label{prop:finaloverload}
\end{proposition}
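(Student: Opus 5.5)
The plan is to split the final load of each bin $i$ into three groups and bound each separately. The first group is the balls that never enter $B'$ or $\failSuperset$. The second is the balls of $B' = \bafter$, which are placed by the ECO Procedure in step 3. The third is the balls of $\failSuperset$, which are placed by the final reorientation in step 4.

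For the first group, I will invoke Lemma~\ref{lem:Fsafe}. After the Two-Phase Swapping Procedure, and with $B'$ out of the system, the balls not in $\failSuperset$ contribute at most $\mu = m/n$ to each bin. Steps 3 and 4 move only balls of $B'$ and $\failSuperset$. So in the final state, every bin receives at most $\mu$ balls from outside $B' \cup \failSuperset$.

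One subtlety concerns balls lying in both $B'$ and $\failSuperset$. Such a ball is counted once, by whichever orientation is applied last. Dummy balls also need care; I will treat them as placeholders that are discarded, or as additional real occupants. The latter only increases loads, so bounding it suffices. The real balls of $B'$ (Type 1 balls) are then assigned according to $\orient^*(G(\bafter))$.

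For the second group, I will use Lemma~\ref{lem:goodgraphsa}. With high probability in $n$, each subgraph $G_j(\bafter(S))$ has at most $O(1)$ cycles per connected component. By the second observation of Naor, Segev, and Wieder, the canonical orientation of each $G_j$ therefore has maximum in-degree $O(1)$. Summing over the $p^{-1} = O(1)$ subgraphs, every bin receives $O(1)$ balls from $B'$.

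Applying Lemma~\ref{lem:goodgraphsa} needs $|\bafter(S)| = O(n)$. This follows because $|\bafter| \le |\bint| = |B|$ (equality holds counting dummies). Moreover, $|B|$ equals the cumulative overload of $\alg(S)$, which is $O(n)$ with high probability by the Small Cumulative Overload property of a good pre-baking algorithm.

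For the third group, I will use Lemma~\ref{lem:goodgraphsb}. There are $O(1)$ sets $X^{(a,b,c,d)}$ and $Y^{(a,c)}$, namely at most $3\cdot 2\cdot 2\cdot 2 + 3\cdot 2$. For each set and each $j \in [p^{-1}]$, the graph $G_j$ has $O(1)$ cycles per component with high probability. Hence each such orientation gives every bin in-degree $O(1)$.

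A ball lying in several sets ends up oriented by only one of them. So the final number of $\failSuperset$ balls in bin $i$ is at most the sum, over all $O(1)$ sets and subgraphs, of the in-degree of $i$ in that orientation. This sum is $O(1)$. A union bound over the constantly many high-probability events, together with the cumulative-overload event, gives that every bin has load at most $\mu + O(1) + O(1)$. That is, the overload is $O(1)$ with high probability in $n$.

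The main obstacle is the accounting at the interfaces between the steps. First, one must be sure that the set to which Lemma~\ref{lem:Fsafe} applies really consists of exactly the balls that steps 3 and 4 leave untouched. Second, the override rule for balls in multiple sets must not create double counting. Most importantly, the orientations computed on spoiled-randomness sets must still be low in-degree. This last point is exactly what Lemma~\ref{lem:goodgraphsb} supplies, so given the earlier results it reduces to careful bookkeeping.
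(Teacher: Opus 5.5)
Your proposal is correct and matches the paper's proof. Lemma~\ref{lem:Fsafe} caps the balls outside $B' \cup F'$ at $m/n$ per bin, and Lemmas~\ref{lem:goodgraphsa} and~\ref{lem:goodgraphsb}, combined with the Naor--Segev--Wieder observation that $O(1)$ cycles per component yields $O(1)$ in-degree, give $O(1)$ balls per bin from $B' \cup F'$ across the constantly many ECO applications; your extra bookkeeping (dummies, the last-orientation override rule, $|B'| = O(n)$, and the union bound) only makes explicit what the paper's short proof leaves implicit.
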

\begin{proof}
By Lemmas \ref{lem:goodgraphsa} and \ref{lem:goodgraphsb}, with high probability, for each $j \in [p^{-1}]$, the graphs $G_j(B')$, the graphs $G_j(X^{(a, b, c,d)})$, and the graphs $G_j(Y^{(a, c)})$ all have the property that each connected component contains $O(1)$ cycles. It follows that our applications of the ECO Procedure result in an orientation in which each bin has at most $O(1)$ balls from $\failSuperset \cup B'$. By Lemma \ref{lem:Fsafe}, each bin contains at most $m/n$ balls that are neither in $\failSuperset$ nor in $B'$. Thus, we have with high probability that each bin contains a total of $m/n + O(1)$ balls. 
\end{proof}

\subsection{Bounding Recourse}

To complete the proof of Theorem \ref{thm:postprocess}, we must also prove a bound on recourse. We begin by analyzing the recourse within a given graph $G_j(X^{(a, b, c, d)})$.
\begin{lemma}
Let $j \in [p^{-1}]$ and consider one of the sets $X^{(a, b, c, d)}(S)$. Let $S$ and $S'$ be neighboring sets of $m$ balls, with $S = S' \setminus \{x_1\} \cup \{x_2\}$. Then, 
$$\E[|\overline{G}_j(X^{(a, b, c, d)}(S)) \triangle \overline{G}_j(X^{(a, b, c, d)}(S'))|] \le O(1).$$
That is, the expected number of edges on which orientations $\overline{G}_j(X^{(a, b, c, d)}(S))$ and $\overline{G}_j(X^{(a, b, c, d)}(S'))$ differ is $O(1)$. 
\label{lem:swappingrecourse0}
\end{lemma}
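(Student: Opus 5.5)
The plan is to decompose the change from $S$ to $S'$ into the edges that enter or leave the set $X := X^{(a,b,c,d)}$, and then charge each such change to the connected components that contain them, using the Naor--Segev--Wieder observation. Recall that the canonical orientation is computed separately on each connected component of $G_j(X)$. So the orientations $\overline{G}_j(X(S))$ and $\overline{G}_j(X(S'))$ can differ only on components of either graph that contain an edge of the symmetric difference $X(S) \triangle X(S')$. Each such component contributes at most its number of edges.

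\textbf{Step 1: which edges change membership.} Membership of a ball $x$ in $X$ depends only on its own type and hashes, together with whether the bin $h_c(x)$ is Type $b$ underloaded due to $h_d$. That status is determined by the count of balls $y$ with $h_d(y) = h_c(x)$. Replacing $x_2$ by $x_1$ therefore changes membership only for three kinds of balls:
\begin{itemize}
\item[(i)] $x_1$ and $x_2$ themselves;
\item[(ii)] Type $a$ balls $y$ with $h_c(y) \in \{h_d(x_1), h_d(x_2)\}$;
\item[(iii)] within (ii), only those bins whose $h_d$-load sits exactly at the threshold $\lfloor (p_b-\epsilon)\mu \rfloor$, since only there can a change of one flip the underloaded status.
\end{itemize}
If a bin flips status, all of its Type $a$ balls under $h_c$ (about $p_a\mu$ of them) enter or leave $X$ at once. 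The membership change is therefore not $O(1)$ edges in the worst case. However, a flipped bin $v$ then lies in a component of $G_j(X(S))$ or $G_j(X(S'))$ that already contains all those edges. So I will bound the recourse by the number of edges in the components containing the vertices $h_1(x_1), h_2(x_1), h_1(x_2), h_2(x_2)$, plus the components containing the (at most two) flipped bins. These flipped bins are themselves among $h_d(x_1), h_d(x_2)$.

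\textbf{Step 2: bound the component sizes in expectation.} The recourse is thus at most the sum, over $w \in \{h_1(x_i), h_2(x_i)\}$ and over both sets $S$ and $S'$, of the number of edges in the component containing $w$ in $G_j(X(\cdot))$. Each endpoint $h_k(x_i)$ with $k\in\{1,2\}$ is distributed like $h_1(s)$ for some ball $s$: by symmetry of $h_1$ and $h_2$ in the definition of the $X$ sets, $h_2(s)$ can be treated in the same way. Lemma~\ref{lem:goodgraphsb}, third bullet, then gives expected component size $O(1)$ containing $h_1(s)$, in both $G_j(X(S))$ and $G_j(X(S'))$. This holds with high probability, and on the failure event I bound the recourse trivially by $m$. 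Since that event has probability $1/\poly(n)$, its contribution is $o(1)$, provided the polynomial exponent is large enough, which I take as given.

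\textbf{Main obstacle.} The delicate point is that the relevant vertex depends on the ball $x_i$ whose presence influences the graph itself. So ``expected component size at $h_1(s)$'' must be interpreted for a ball $s$ that is included in the set: for $S'$ we use $s = x_1 \in S'$, and for $S$ we use $s = x_2 \in S$. This is exactly why the third bullet of Lemma~\ref{lem:goodgraphsb} (via Corollary~\ref{cor:G}) is phrased for balls $s\in S$. I expect some care in matching the flipped-bin vertex $h_d(x_i)$ to that bullet. If $d$ refers to $h_2$, I would invoke the symmetric version of the corollary.
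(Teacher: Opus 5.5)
Your proposal is correct and uses the same skeleton as the paper. Orientation differences are confined to connected components that contain an edge of $X^{(a,b,c,d)}(S)\triangle X^{(a,b,c,d)}(S')$, and each such component has $O(1)$ expected size by Lemma~\ref{lem:goodgraphsb}.

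It is, however, more careful than the paper's two-line proof. That proof only charges to the components ``containing $x_1$'' and ``containing $x_2$''. It never addresses the fact that membership in $X^{(a,b,c,d)}$ is not local to the swapped balls. Your Step~1 catches this: if $x_i$ is Type $b$ and the Type-$b$ $h_d$-count at $v=h_d(x_i)$ sits at the threshold, then all Type-$a$ balls $y$ with $h_c(y)=v$ enter or leave the set at once. Your fix is exactly what is needed. All those edges share the vertex $v$, so they lie in the single component of $v$ in whichever graph contains them.

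The extra care you anticipate for the flipped bins is unnecessary. A bin's Type-$b$ $h_d$-count can rise only at $h_d(x_1)$ and fall only at $h_d(x_2)$.
\begin{itemize}
\item A rise can only end underloadedness. It therefore only removes edges, which live in $G_j(X^{(a,b,c,d)}(S))$, a graph built from balls other than $x_1$.
\item A fall can only create underloadedness. It therefore only adds edges, which live in $G_j(X^{(a,b,c,d)}(S'))$, a graph not involving $x_2$.
\end{itemize}
In both cases the flipped vertex is uniform and independent of the relevant graph. So the \emph{second} bullet of Lemma~\ref{lem:goodgraphsb} applies, not the third. These cross terms (vertices of $x_1$ in the $S$-graph, vertices of $x_2$ in the $S'$-graph) are included in your Step~2 sum but not explicitly justified there.

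The third bullet is needed only for the component of $x_i$'s own edge in the world containing $x_i$. That component contains $h_1(x_i)$, so you do not need the $h_1\leftrightarrow h_2$ symmetry either.

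Finally, you can drop the failure-event detour with the trivial bound $m$. The expected-size statements in Lemmas~\ref{lem:G2}, \ref{lem:G3} and Corollary~\ref{cor:G} are unconditional expectations, so linearity applies directly. By contrast, $m\cdot n^{-O(1)}$ need not be $o(1)$ when $m$ is superpolynomial in $n$.
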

\begin{proof}
The orientations $\overline{G}_j(X^{(a, b, c, d)}(S))$ and $\overline{G}_j(X^{(a, b, c, d)}(S'))$ differ at most on the edges in the following two connected components:
\begin{itemize}
    \item the component in $\overline{G}_j(X^{(a, b, c, d)}(S))$ containing $x_1$;
    \item and the component in $\overline{G}_j(X^{(a, b, c, d)}(S'))$ containing $x_2$.
\end{itemize}
By Lemma \ref{lem:goodgraphsb}, these components each have expected size $O(1)$.
\end{proof}

By the same argument, we can analyze the recourse within a given graph $G_j(Y^{(a, c)})$. 
\begin{lemma}
Let $j \in [p^{-1}]$ and consider one of the sets $Y^{(a, c)}(S)$. Let $S$ and $S'$ be neighboring sets of $m$ balls, with $S = S' \setminus \{x_1\} \cup \{x_2\}$. Then, 
$$\E[|\overline{G}_j(Y^{(a, c)}(S)) \triangle \overline{G}_j(Y^{(a, c)}(S'))|] \le O(1).$$
That is, the expected number of edges on which orientations $\overline{G}_j(Y^{(a, c)}(S))$ and $\overline{G}_j(Y^{(a, c)}(S'))$ differ is $O(1)$. 
\label{lem:swappingrecourse1}
\end{lemma}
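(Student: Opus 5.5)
The argument mirrors the proof of Lemma~\ref{lem:swappingrecourse0}, with Lemma~\ref{lem:goodgraphsb} applied to the graphs $G_j(Y^{(a,c)})$ instead of $G_j(X^{(a,b,c,d)})$. The orientation of a fixed graph is computed component by component, and the procedure $\orient$ is deterministic. So every connected component of $G_j(Y^{(a,c)}(S))$ whose edge set also appears as a component of $G_j(Y^{(a,c)}(S'))$ receives the identical orientation in both. The edges on which $\overline{G}_j(Y^{(a,c)}(S))$ and $\overline{G}_j(Y^{(a,c)}(S'))$ differ therefore lie in components whose edge sets change when we pass from $S$ to $S'$. The plan is to identify these affected components and bound their expected total size by $O(1)$.

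First, I would identify which balls can enter or leave $Y^{(a,c)}$. Membership of a ball $x$ in $Y^{(a,c)}$ depends only on three things: its own type, its own hash $h_c(x)$, and whether bin $h_c(x)$ is Type $a$ overloaded from $h_c$. That last condition depends only on the count of balls hashing to that bin under $h_c$. Swapping $x_1$ for $x_2$ directly adds or removes the edges of $x_1$ and $x_2$. It can also change the overload status of at most the bins $h_c(x_1)$ and $h_c(x_2)$, since those are the only bins whose $h_c$-counts move (each by one).

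Second, I would bound the resulting change. The affected components are of two kinds:
\begin{itemize}
\item those containing the edges of $x_1$ and $x_2$;
\item those touching the bins $h_c(x_1)$ and $h_c(x_2)$, in either $S$ or $S'$.
\end{itemize}
Every edge whose membership flips because of a status change is incident to one of these two bins, so it lies in a component containing vertex $h_c(x_1)$ or $h_c(x_2)$. The symmetric difference of the two orientations is therefore contained in a constant number of components, each containing one of $h_1(x_1),h_2(x_1),h_1(x_2),h_2(x_2)$, in either world.

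Third, I would bound the expected sizes of these components by Lemma~\ref{lem:goodgraphsb}: its second bullet covers a given vertex, and its third bullet covers $h_1(s)$ for a ball $s$. Summing over the constantly many components gives the claimed $O(1)$. The same bound also absorbs the edges of balls whose membership flipped, since each such edge sits inside one of these components. This is a change I would also make in Lemma~\ref{lem:swappingrecourse0}, which similarly needs to handle status flips.

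The main obstacle is that the vertices $h_c(x_i)$ are correlated with the graph: conditioning on $x_i$'s hashes biases which bins are overloaded. Lemma~\ref{lem:goodgraphsb} is stated for a fixed vertex and for $h_1(s)$, and I would rely on its third bullet together with the symmetry between $h_1$ and $h_2$ to cover $h_2(s)$. If that symmetry is not directly available, the fallback is to note that one extra ball changes a vertex's component size by at most the sum of two component sizes in the graph without it. Those can then be bounded by the fixed-vertex bound, because the ball's hashes are independent of the remaining balls.
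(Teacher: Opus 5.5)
Your proposal is correct and takes the same route as the paper: the canonical orientation is computed component by component, so the two orientations can differ only on a constant number of components near the swapped balls, and Lemma~\ref{lem:goodgraphsb} bounds the expected size of each by $O(1)$. Your version is more careful than the paper's two-line proof, which names only the components containing $x_1$ and $x_2$. That proof does not explicitly account for balls entering or leaving $Y^{(a,c)}$ when bin $h_c(x_1)$ or $h_c(x_2)$ changes overload status, nor for the correlation between those vertices and the graph, which is what the third bullet of Lemma~\ref{lem:goodgraphsb} is for.
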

\begin{proof}
The orientations $\overline{G}_j(Y^{(a, c)}(S))$ and $\overline{G}_j(Y^{(a, c)}(S'))$ differ at most on the edges in the following two connected components:
\begin{itemize}
    \item the component in $\overline{G}_j(Y^{(a, c)}(S))$ containing $x_1$;
    \item and the component in $\overline{G}_j(Y^{(a, c)}(S'))$ containing $x_2$.
\end{itemize}
By Lemma \ref{lem:goodgraphsb}, these components each have expected size $O(1)$.
\end{proof}
Finally, by a slightly more intricate argument, we can bound the expected recourse from changes to the set $\bint(S)$ and from changes to the orientation $\overline{G}(B'(S))$. 
\begin{lemma}
Consider two neighboring sets $S$ and $S'$ of $m$ balls. Then, $$\E[|\bint(S) \triangle \bint(S')|] \le O(\recourse[\alg]),$$
and the expected number of edges on which the orientations $\overline{G}_1(\bafter(S))$ and $\overline{G}_2(\bafter(S'))$ disagree is \\$O(\recourse[\alg])$. 
\label{lem:swappingrecourse2}
\end{lemma}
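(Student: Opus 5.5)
The plan is to trace how a single change in $\alg$'s output propagates through the Two-Phase Swapping Procedure. Fix neighboring $S, S'$ and couple the two worlds on the same $h_1, h_2$, types, random tapes $R_i, R_i'$, and subgraph-assignment hash. Let $\Delta$ be the number of balls whose placement differs between $\alg(S)$ and $\alg(S')$, together with the inserted or deleted ball, so that $\E[\Delta] = O(\recourse[\alg])$. Only Type 3 balls can move, by the First Choices property. The first step is to bound $\sum_i |k_i(S) - k_i(S')|$, where $k_i$ is the number of balls from $B$ in bin $i$. I would fix the tie-breaking that defines $B$ canonically, as the balls in bin $i$ above height $\mu$ in a fixed order. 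Then $k_i = \max(\ell_i - \mu, 0)$, so $\sum_i |k_i(S) - k_i(S')| \le \sum_i |\ell_i(S) - \ell_i(S')| \le 2\Delta$.

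Next I bound $|\bint(S) \triangle \bint(S')|$. Within bin $i$, $\bint$ consists of the first $\min(a_i, k_i)$ Type 2 balls in canonical order, plus $\max(k_i - a_i, 0)$ dummy balls drawn as a prefix of tape $R_i$. The sets $a_i$ agree in the two worlds except possibly at the bin of the changed ball, where they differ by at most one. Both the Type 2 part and the dummy part are therefore nested prefixes, so the contribution of bin $i$ to the symmetric difference is at most $|k_i(S) - k_i(S')| + O(1)$, with the $O(1)$ term appearing in only $O(1)$ bins. Summing gives $|\bint(S) \triangle \bint(S')| \le O(\Delta + 1)$. Taking expectations proves the first claim, since $\recourse[\alg] \ge 1$.

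Phase 2 is handled the same way. Let $j_i$ be the number of $\bint$ balls landing in bin $i$ via $h_2$. Then $\sum_i |j_i(S) - j_i(S')| \le |\bint(S) \triangle \bint(S')|$. The set $\bafter$ in bin $i$ is again a prefix, of the Type 1 balls and then of tape $R_i'$, of length $j_i$. Hence $|\bafter(S) \triangle \bafter(S')| \le O(\Delta + 1)$.

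For the orientation claim, I read the statement as comparing $\overline{G}_j(\bafter(S))$ with $\overline{G}_j(\bafter(S'))$. It suffices to show that each added or removed edge of $G_j(\bafter)$ changes the canonical orientation only on the component containing it, of expected size $O(1)$. This follows from the first Naor--Segev--Wieder observation together with Lemma~\ref{lem:goodgraphsa}.

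The main obstacle is that the changed edges are not independent of the graph. Both $\Delta$ and the identities of the differing edges depend on $h_1, h_2$, and so they correlate with component sizes. My first attempt is to use Lemma~\ref{lem:randomG}, which says that $\bafter$ has fully random hashes even conditioned on $h_1$ of $B$ and $h_2$ of Type 3 balls. Since $\Delta$ is determined by data covered by that conditioning, each differing edge has endpoints that are fresh uniform vertices. I would charge each differing edge to the components of its endpoints, which in a random graph of $O(n)$ edges have $O(1)$ expected size independently of the conditioning. To handle the fact that the component is in a graph containing the edge itself, I would compare the graphs one edge at a time along a hybrid sequence, using that adding one random edge merges two components of $O(1)$ expected size. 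Linearity of expectation over the $O(\Delta+1)$ differing edges, then averaging over $\Delta$, yields $O(\recourse[\alg])$. If the conditioning turns out not to cover exactly the information determining which edges differ, I would fall back to a Cauchy–Schwarz bound, combining second moments of component sizes, which are $O(1)$ in the subcritical regime, with a second-moment bound on $\Delta$.
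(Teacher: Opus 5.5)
Your proof is correct in substance and rests on the same two facts as the paper's. First, everything $\alg$ sees is independent of the shuffle randomness ($h_2$ of Type 1 and Type 2 balls, and the tapes $R_i, R_i'$). Second, each single change to $\bafter$ costs $O(1)$ in expected component size. The difference is in the bookkeeping. The paper abstracts $\alg$'s output into a core input $(S_1,S_2,h_1(S_1),h_1(S_2),\{x_i\})$ and walks from $\alg(S)$ to $\alg(S')$ through $O(\Delta)$ unit changes. It then checks case by case that each unit change adds one ball to $\bint$ and one edge $(j,k)$ to $\bafter$ (or swaps one Type 1 ball), with $j,k$ not revealed by the current graph. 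You instead get $|\bint(S)\triangle\bint(S')|$ and $|\bafter(S)\triangle\bafter(S')|$ as $O(\Delta+1)$ pointwise, in one stroke, from the 1-Lipschitz map $\ell_i\mapsto\max(\ell_i-\mu,0)$ and the prefix structure of the selections. This is cleaner, and it makes explicit the nestedness of the ``arbitrary deterministic tie-breaking rule'' that the paper's cases silently use. What the paper's chain buys in return is that the hybrid for the orientation claim comes pre-ordered, so each new edge is independent of the graph it joins.

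In your version that step needs to be pinned down.
\begin{itemize}
\item \emph{Conditioning.} Condition on all of $\alg$'s inputs, which determine the core input and are independent of the shuffle randomness by the First Choices property. Lemma~\ref{lem:randomG} as stated conditions only on hashes of balls in a fixed $B$, so it does not by itself fix $\Delta$ or the Type 2 lists.
\item \emph{Independence.} ``Each differing edge has fresh uniform endpoints'' is only a marginal statement. A hybrid routed through $\bafter(S)\cap\bafter(S')$ does not give independence for free. A ball of $\bint(S)\setminus\bint(S')$ and one of $\bint(S')\setminus\bint(S)$ that land in the same bin cancel into a common edge, so the intersection is correlated with where the differing edges sit.
\item \emph{Two fixes.} Either order the hybrid by single-ball changes to $\bint$, as the paper does. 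Or observe that every edge whose orientation differs lies in a component of $G_j(\bafter(S))$ containing an edge of $\bafter(S)\setminus\bafter(S')$, or in a component of $G_j(\bafter(S'))$ containing an edge of $\bafter(S')\setminus\bafter(S)$. If you attribute list positions in each bin with the common $\bint$ balls first, these edges (up to $O(1)$ in the changed ball's bin) are generated by distinct balls of $\bint(S)\triangle\bint(S')$. Each is one of the i.i.d.\ uniform edges of its graph, so Lemma~\ref{lem:goodgraphsa} and Corollary~\ref{cor:G} give $O(1)$ expected component size per edge.
\item \emph{Fallback.} Drop the Cauchy--Schwarz fallback. The hypothesis controls only $\E[\Delta]\le\recourse[\alg]$, not $\E[\Delta^2]$, so it would not give $O(\recourse[\alg])$.
\end{itemize}
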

\begin{proof}
Let us imagine that $\alg(S)$ is determined by an adversary as follows: 

\begin{enumerate}
    \item The adversary selects sets $S_1, S_2$ of Type 1 and Type 2 balls, and chooses $h_1(s)$ for each ball $s \in S_1 \cup S_2$. Each ball $s \in S_1 \cup S_2$ is then placed into bin $h_1(s)$. Next, the adversary selects numbers $x_i$, $i \in [n]$, where each $x_i$ represents the number of Type 3 balls that go in bin $i$. Call $S_1, S_2, h_1(S_1), h_1(S_2), \{x_i\}$ the \defi{core input} $C$. 
    \item Once the core input is determined, the hashes $h_2(S_1)$ and $h_2(S_2)$ are determined, and the random tapes used in the Two-Phase Swapping Procedure are generated. We refer to the randomness determined in this step as the \defi{Shuffle Randomness}.
    \item Finally, the Two-Phase Swapping Procedure is performed, generating $\bint, \bafter$. We will now think of these as being a function of the core input $C$, rather than a set $S$ of balls.
\end{enumerate}

By Lemma \ref{lem:randomG}, if we fix any core input $C$, the remaining randomness (i.e., the shuffle randomness) is enough to ensure that the hashes of the balls in $\bafter(C)$ are independent and uniformly random. 

To complete the proof, we will consider what happens if two core inputs $C$ and $C'$ differ from each other by only a small amount. Say that two core inputs $C$ and $C'$ differ by a \defi{unit change} if we can get from $C$ to $C'$ (or possibly from $C'$ to $C$) by one of the following three basic changes:
\begin{itemize}
    \item Incrementing/decrementing $x_i$ for some bin $i$.
    \item Adding/removing some ball $b$ to $S_2$ with $h_1(b) = i$ for some $i$.
    \item Adding/removing some ball $b$ to $S_1$ with $h_1(b) = i$ for some $i$.
\end{itemize}
To complete the proof, it suffices to show that, when $C$ and $C'$ differ by a unit change, 
$$\E[|\bafter(C) \triangle \bafter(C')|] \le O(1)$$
and the expected number of edges on which the orientations $\overline{G}_1(\bafter(C))$ and $\overline{G}_1(\bafter(C'))$ disagree is $O(1)$. 

We now prove this for each type of unit change, where the change takes us from $C$ to $C'$:
\begin{itemize}
    \item \textbf{Case 1: Some $x_i$ is incremented/decremented.} Without loss of generality, $x_i$ is incremented. If bin $i$ has $\le h$ balls in $C'$, then $\bint(C) = \bint(C')$ and $\bafter(C) = \bafter(C')$. Otherwise, $\bint(C') = \bint(C) \cup \{x_1\}$ for some ball $x_1$, where $h_1(x_1) = i$ and where $h_2(x_1)$ is determined by the Shuffle Randomness; and $\bafter(C') = \bafter(C) \cup \{x_2\}$ for some ball $x_2$ where $h_1(x_2) = h_2(x_1)$ and where $h_2(x_2)$ is determined by the Shuffle Randomness. Note that the values $j \coloneqq h_1(x_2) = h_2(x_1)$ and  $k \coloneqq h_2(x_2)$ are independent of the randomness used to construct graph $G_1(\bafter(C))$.

    By Lemma \ref{lem:goodgraphsb}, the expected size of the connected components in $G_1(\bafter(C))$ containing nodes $j$ and $k$ are each $O(1)$. It follows that the connected component in $G_1(\bafter(C'))$ containing edge $(j, k)$ is $O(1)$. As this component contains all edges on which the orientations $\overline{G}_1(\bafter(C))$ and $\overline{G}_1(\bafter(C'))$ disagree, the expected number of such edges is $O(1)$. 
    
    \item \textbf{Case 2: Some ball $b$ is added/removed to $S_2$ with hash $h_1(i)$ for some $i$: } Without loss of generality the ball $b$ is added to $C$ (rather than removed). If bin $h_1(b)$ has $\le h$ balls in $C'$, then $\bint(C) = \bint(C')$ and $\bafter(C) = \bafter(C')$. Otherwise, $\bint(C') = \bint(C) \cup \{x_1\}$ for some ball $x_1$ (possibly $b$), where $h_1(x_1) = h_1(b)$ and where $h_2(x_1)$ is determined by the Shuffle Randomness; and $\bafter(C') = \bafter(C) \cup \{x_2\}$ for some ball $x_2$ where $h_1(x_2) = h_2(x_1)$ and where $h_2(x_2)$ is determined by the Shuffle Randomness. Note that the values $j \coloneqq h_1(x_2) = h_2(x_1)$ and  $k \coloneqq h_2(x_2)$ are independent of the randomness used to construct graph $G_1(\bafter(C))$.

    By Lemma \ref{lem:goodgraphsb}, the expected size of the connected components in $G_1(\bafter(C))$ containing nodes $j$ and $k$ are each $O(1)$. It follows that the connected component in $G_1(\bafter(C'))$ containing edge $(j, k)$ is $O(1)$. As this component contains all edges on which the orientations $\overline{G}_1(\bafter(C))$ and $\overline{G}_1(\bafter(C'))$ disagree, the expected number of such edges is $O(1)$.

    \item \textbf{Case 3: Some ball $b$ is added to $S_1$ with $h_1(b) \coloneqq i$ for some $i$:} To simplify the analysis of this option, let us decompose this modifiction into two sub-modifications. Submodification 1 increments $x_i$. Then, Submodification 2 inserts the ball $b$ and decrements $x_i$. Since Submodification 1 has already been analyzed in Case 1, we can focus here on analyzing Submodification 2. That is, we redefine $C$ to be the core input before Submodification 2, and $C'$ to be the core input after. 
    
    Because Submodification 2 does not change the number of balls in any given bin, we have $\bint(C) = \bint(C')$. Moreover, either $\bafter(C) = \bafter(C')$, or $\bafter(C') = \bafter(C) \setminus \{b'\} \cup \{b\}$ for some ball $b' \in S_1$ such that $h_1(b) = i$. 

    By Lemma \ref{lem:goodgraphsb}, the expected size of the connected component in $G_1(\bafter(C))$ containing vertex $i$ is $O(1)$, and the expected size of the connected component $G_1(\bafter(C'))$ containing vertex $i$ is also $O(1)$. Since every edge on which the orientations $\overline{G}_1(\bafter(C))$ and $\overline{G}_1(\bafter(C'))$ disagree is in at least one of those two components, the expected number of disagreements between the two orientations is $O(1)$. 
\end{itemize}
\end{proof}
Putting the pieces together, we can establish our desired bound on recourse:
\begin{proposition}
The final algorithm has expected recourse $O(\recourse[\alg])$.
\label{prop:finalrecourse}
\end{proposition}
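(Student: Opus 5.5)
We bound the recourse by splitting the final allocation into the four stages of the algorithm and charging every ball that moves between $S$ and $S'$ to one of them. Fix neighboring sets $S, S'$. By the WLOG assumption of Section~\ref{sec:preliminaries} we may take one of size $m-1$ and one of size $m$; chaining two such steps handles the swap form $S = S' \setminus \{x_1\} \cup \{x_2\}$ used in Lemmas~\ref{lem:swappingrecourse0}--\ref{lem:swappingrecourse2}. Take a ball $x \in S \cap S'$ whose final bin differs. Its final bin is determined as follows:
\begin{itemize}
\item if $x$ lies in $\failSuperset$, by the orientation of the last set $X^{(a,b,c,d)}$ or $Y^{(a,c)}$ containing it;
\item otherwise, if $x \in \bafter$, by the orientation $\orient^*(G(\bafter))$;
\item otherwise, if $x \in \bint$, by $h_2(x)$;
\item otherwise, by its position under $\alg$.
\end{itemize}
So a moved ball must satisfy at least one of the following. (i) It is placed differently by $\alg$ on $S$ and $S'$. (ii) It lies in $\bint(S) \triangle \bint(S')$ or in $\bafter(S) \triangle \bafter(S')$. (iii) It is oriented differently in some $\orient(G_j(\bafter(\cdot)))$. (iv) Its membership in some $X^{(a,b,c,d)}$ or $Y^{(a,c)}$ changes, or its orientation inside one of those graphs changes. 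We bound the expected count of each kind and sum.

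Kind (i) is at most $\recourse[\alg]$ in expectation, by definition. For kinds (ii) and (iii), Lemma~\ref{lem:swappingrecourse2} gives $O(\recourse[\alg])$ for the $\bint$ changes and for the $\orient(G_j(\bafter))$ disagreements. We apply it once for each of the $p^{-1} = O(1)$ subgraphs. We also need $|\bafter(S) \triangle \bafter(S')| = O(\recourse[\alg])$. This follows from the same unit-change argument in that proof: each unit change alters $\bafter$ by $O(1)$ balls.

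Kind (iv) is handled by Lemmas~\ref{lem:swappingrecourse0} and~\ref{lem:swappingrecourse1}, each giving $O(1)$ per graph. There are $O(1)$ choices of $(a,b,c,d)$, $(a,c)$ and $j$, so the total is $O(1)$. One extra point is needed here. Membership in $\failSuperset$ depends on which bins are Type-$j$ over- or underloaded from $h_k$, and inserting or deleting one ball can flip the status of $h_1$ and $h_2$ of that ball. This can move a whole bin's worth of balls into or out of some $X$ or $Y$. I would bound this by the probability that a given bin sits exactly at a threshold $(p_j \pm \epsilon)\mu$, which is $O(1/\sqrt{\mu})$ by anti-concentration of the binomial, times the $O(\mu)$ balls affected. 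Because $\epsilon\mu$ is far from the mean $p_j\mu$, Chernoff bounds make the threshold probability actually $e^{-\Omega(\mu)}$, so the expected contribution is $O(\mu e^{-\Omega(\mu)}) = O(1)$. If the lemmas' component-size guarantees (the third bullet of Lemma~\ref{lem:goodgraphsb}) already account for this, that bullet absorbs the cost.

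Summing gives $O(\recourse[\alg]) + O(1) = O(\recourse[\alg])$, since $\recourse[\alg] \ge 1$. The lemmas are proved only on the high-probability event of Lemmas~\ref{lem:goodgraphsa} and~\ref{lem:goodgraphsb}. On the complementary event, which has probability $n^{-\omega(1)}$, recourse is at most $m$, so it contributes negligibly when $m \le \poly(n)$. The main obstacle is the threshold-flip accounting in kind (iv). The rest is a union of earlier lemmas over $O(1)$ graphs.
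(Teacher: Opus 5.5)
Your proof is correct and follows the paper's argument: bound the recourse by that of the pre-baking algorithm, plus the changes to the intermediate set, to the orientations of the $G_j(B')$, and to the orientations of the $G_j(X^{(a,b,c,d)})$ and $G_j(Y^{(a,c)})$, then sum Lemmas~\ref{lem:swappingrecourse0}--\ref{lem:swappingrecourse2} over the $O(1)$ graphs. Your extra term for a bin crossing a Type-$j$ over/underload threshold is a real point that the paper's proofs of Lemmas~\ref{lem:swappingrecourse0} and~\ref{lem:swappingrecourse1} pass over, since they assume only the inserted or deleted ball changes membership. Your Chernoff estimate of $O(\mu e^{-\Omega(\mu)}) = O(1)$ is the right way to close it; the third bullet of Lemma~\ref{lem:goodgraphsb} would not absorb it, because that bullet only handles the changed ball's own vertex.
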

\begin{proof}
Consider two neighboring sets $S$ and $S'$. The expected recourse of our algorithm is at most the expected recourse of $\recourse[\alg]$ from running $\alg$, plus the following terms:
\begin{enumerate}
    \item $\E[|\bint(S) \triangle \bint(S')|]$;
    \item $\E[|\overline{G}_i(\bafter(S)) \triangle \overline{G}_i(\bafter(S'))|]$ for each $i \in [p^{-1}]$;
    \item $\E[|\overline{G}_i(X^{(a, b, c, d)}(S)) \triangle \overline{G}_i(X^{(a, b, c, d)}(S'))|]$ for each $i \in [p^{-1}]$ and each $a, b \in [3]$ with $a > b$ and $c, d \in [2]$;
    \item $\E[|\overline{G}_i(Y^{(a, c)}(S)) \triangle \overline{G}_i(Y^{(a, c)}(S'))|]$ for each $i \in [p^{-1}]$ and $a \in [3]$ and $c \in [2]$. 
\end{enumerate}
By Lemmas \ref{lem:swappingrecourse0}, \ref{lem:swappingrecourse1}, and \ref{lem:swappingrecourse2}, along with the fact that $p^{-1} = O(1)$, we have that these terms sum to $O(\recourse[\alg])$, as desired.
\end{proof}

Combining our analyses of overload (Proposition \ref{prop:finaloverload}) and recourse (Proposition \ref{prop:finalrecourse}), the proof of Theorem \ref{thm:postprocess} is complete.

\subsection{Machinery for Analyzing Graphs}\label{sec:graphs}

Finally, in this section, we develop the machinery needed to analyze the graphs constructed in the previous section. Specifically, the results that we develop in this section are what allow us to prove Lemmas \ref{lem:goodgraphsa} and \ref{lem:goodgraphsb} in Section \ref{sec:postprocessoverload}.

We begin with the following lemma:

\begin{lemma}
Let $c$ be a sufficiently small positive constant, and let $0 \le k \le O(1)$. Consider a directed graph $G$ on $n$ vertices, where each vertex $v \in [n]$ has $X_v$ outgoing edges, where each $X_v$ independently satisfies $\Pr[X_v \ge j] \le c^j$ for all $j \ge 0$, and where each edge $(v, u)$ has an independent and uniformly random endpoint $u \in [n]$. Consider the (undirected) connected components of $G$. For any given vertex $v$ and for $j \ge 1$ and $k \ge 0$, the probability that $v$ is part of a connected component $C$ of size $|C| = j$ containing at least $j - 1 + k$ edges is at most 
$$e^{-\Omega(j)} / n^{k / 2} + 1/n^{\omega(1)}.$$
\end{lemma}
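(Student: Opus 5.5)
We use a first-moment (union-bound) argument over the possible shapes of the component: a spanning tree plus $k$ extra edges. Fix $v$, $j\ge 1$ and $k\ge 0$.

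\emph{Witness structure.} If $v$ lies in a component $C$ with $|C|=j$ vertices and at least $j-1+k$ edges, then $C$ contains a spanning tree $\tau$ on $j$ vertices including $v$, together with $k$ further edges whose endpoints lie in $V(\tau)$. We enumerate such witnesses $(\tau, e_1,\dots,e_k)$. The tree is explored from $v$ in a canonical BFS order. Each tree edge is an out-edge $(w,u)$ of some already-discovered vertex $w$, and is either ``forward'' ($w$ old, $u$ new) or ``backward'' ($u$ old, $w$ new).

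\emph{Encoding the tree.} We record, for each discovered vertex, how many of its out-edges are used and which slot in the order they occupy. Each vertex $w$ in the tree uses some number $r_w$ of its out-edges, and the event $X_w\ge r_w$ has probability at most $c^{r_w}$. These events are independent across vertices, and the out-edge counts are independent of the endpoints. Each tree edge then fixes one random endpoint to a specific vertex:
\begin{itemize}
\item for a forward edge, the new endpoint $u$ is a new vertex, so summing over the $n$ possible identities cancels the $1/n$ probability;
\item for a backward edge $(w,u)$ with $u$ already discovered, the new vertex is $w$, so we sum over the $n$ choices of $w$ and pay a factor $1/n$ for its edge landing on the specified $u$.
\end{itemize}
Either way each tree edge costs a factor $O(1)$ in total. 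We group by the out-degree profile $(r_w)$ with $\sum r_w\ge j-1$. The number of ordered-tree shapes together with the forward/backward labels is $2^{O(j)}$, and the cost is $\prod_w c^{r_w}\le c^{\,j-1}$ times the combinatorial factor. Choosing $c$ small enough beats the $2^{O(j)}$ counting and gives $e^{-\Omega(j)}$ for the tree part. Summing over all profiles is a convergent geometric series once $c$ is small.

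\emph{Extra edges.} Each of the $k$ additional edges is an out-edge of some vertex of $C$ whose endpoint must land in $V(C)$. This costs an extra factor of $c$ from the out-degree tail of its source, a choice among $j$ sources and $j$ targets, and a probability $1/n$ for the endpoint. The total is at most $(c\,j^2/n)^k e^{-\Omega(j)}$.

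\emph{The two regimes of $j$.} For $j\le n^{1/4}$ we have $(j^2/n)^k\le n^{-k/2}$, which gives the bound $e^{-\Omega(j)}/n^{k/2}$. For $j> n^{1/4}$ we drop the extra-edge requirement entirely: the tree bound alone gives $e^{-\Omega(n^{1/4})}=1/n^{\omega(1)}$. This is where the additive $1/n^{\omega(1)}$ term comes from.

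\emph{Main obstacle.} The delicate step is the counting. Edges are directed and originate at specific vertices, so the ``who owns the edge'' choices, the out-degree tails $c^{r}$, and the multiplicities from choosing which out-edge slot is used must be bookkept together so that everything is absorbed into $(O(1)\cdot c)^{j}$ per component. I would handle this with a generating-function or exploration-process formulation. Expose the out-edges of each vertex as it is discovered; the number of new vertices found is then stochastically dominated by a branching process whose offspring distribution has tail $c^j$, hence mean below $1/2$ and exponentially small total progeny. Backward edges and extra edges are counted as ``hits'' on previously discovered vertices, each occurring with probability at most $j/n$ per exposed edge. This yields the same $e^{-\Omega(j)}(j^2/n)^k$ bound more cleanly, and is the route I would try first.
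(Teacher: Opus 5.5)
Your main argument is correct: a first-moment bound over spanning-tree witnesses plus $k$ extra edges, then a split at $j=n^{1/4}$. It is, however, a heavier route than the paper's.

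\textbf{The paper's route.} It never enumerates trees. It union-bounds over the $\binom{n-1}{j-1}$ vertex sets $C\ni v$ and uses only two necessary conditions for $C$ to be a component with at least $j-1+k$ edges:
\begin{itemize}
\item the total out-degree $Y=\sum_{w\in C}X_w$ is at least $j-1+k$, since every edge of $C$ is an out-edge of a vertex of $C$;
\item all $Y$ of these out-edges land inside $C$.
\end{itemize}
This gives the bound $\binom{n-1}{j-1}\cdot\Pr[Y\ge j-1+k]\cdot (j/n)^{j-1+k}$. The factor $\binom{n-1}{j-1}(j/n)^{j-1}\le e^{O(j)}$ is beaten by the tail bound $\Pr[Y\ge j-1+k]\le 8^{-(j-1+k)}$ for a sum of $j$ i.i.d.\ variables with tails $c^i$. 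What remains is $e^{-\Omega(j)}(j/n)^k$, and the regime split is at $j=\sqrt{n}$.

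\textbf{Comparison.} All the bookkeeping you flag as the main obstacle is absorbed into that single tail bound on $Y$: tree shapes, forward/backward labels, and which out-edge slots are used. The paper also gets the sharper per-extra-edge factor $j/n$ rather than your $c\,j^2/n$, because it never pays for choosing the source of an extra edge. Your tree-witness count is less wasteful in the constant, since it counts trees rather than all maps of out-edges into $C$, so it would tolerate a larger $c$. The lemma only asks for $c$ sufficiently small, so that gain is not needed here.

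\textbf{Caution on your fallback.} You say you would try the exploration-process route first, but as described it mis-handles backward edges. In an exploration that exposes only out-edges of discovered vertices, backward edges are not rare ``hits'' costing $j/n$ each.
\begin{itemize}
\item A vertex $w$ that points into the explored set is found through one of the $\Theta(n)$ unexposed out-edges of undiscovered vertices.
\item Such discoveries therefore occur at a constant rate, roughly the mean out-degree, per explored vertex.
\item They must go into the offspring distribution, which becomes out-degree plus in-degree from fresh vertices. This still has small mean and a geometric-type tail when $c$ is small.
\item Only the $k$ cycle-closing edges should be charged the $j/n$ factor.
\end{itemize}
Your first-moment version already handles this correctly, by summing over the $n$ identities of the new vertex $w$ and paying $1/n$ for its edge.
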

\begin{proof}
Since $C$ must contain $v$, there are $\binom{n - 1}{j - 1} \le n^{j - 1} e^{j - 1} / j^{j - 1}$ options for $C$. For a given choice of $C$, let $Y$ be the random variable denoting the sum of the out-degrees of the vertices in $C$. In order for $C$ to have $j - 1 + k$ edges, we would need $Y = j - 1 + k$. And, in order for $C$ to be a connected component, we would need all those edges to have right endpoints in $C$, an event that happens with probability at most
$$(|C| / n)^{j - 1 + k} = (j / n)^{j - 1 + k}.$$
Thus, the probability of $v$ is part of a connected component $C$ of size $|C| = j$ containing at least $j - 1 + k$ edges is at most 
$$n^{j - 1} e^{j - 1} / j^{j - 1} \cdot \Pr[Y = j - k + 1] \cdot (j / n)^{j - 1 + k}.$$
By a Chernoff bound for sums of independent geometric random variables\footnote{Here, we are using the following bound. Let $X = X_1 + \cdots + X_j$ be a sum of iid geometric random variables, where each $X_i$'s expectation is at most a sufficiently small positive constant. For any $m \ge j/2$, we have $\Pr[X \ge m] \le 1/8^m$. }, we have $\Pr[Y = j - 1 + k] \le 1/8^{j - 1 + k}$. The entire expression is therefore at most 
\begin{align*}
& n^{j - 1} e^{j - 1} / j^{j - 1} \cdot 8^{-(j - 1 + k)} \cdot (j / n)^{j - 1 + k} \\
& = O(e^{j - 1} 8^{-j + k} (j / n)^{k}). \\
& e^{-\Omega(j)} / n^{k / 2} + 1/ n^{\omega(1)}.
\end{align*}

\end{proof}

As an immediate corollary, we get the following:
\begin{corollary}
    Let $c$ be a sufficiently small positive constant, and let $0 \le k \le O(1)$. Consider a directed graph $G$ on $n$ vertices, where each vertex $v \in [n]$ has $X_v$ outgoing edges, where each $X_v$ independently satisfies $\Pr[X_v \ge j] \le c^j$ for all $j \ge 0$, and where each edge $(v, u)$ has an independent and uniformly random endpoint $u \in [n]$. Finally, let $G'$ be the undirected version of the graph.
    
    With high probability in $n$, every connected component in $G'$ contains at most $O(1)$ cycles. Moreover, for a given vertex $i$, the expected size of the component containing $i$ is $O(1)$ edges. 
    \label{cor:cuckooproperties}
\end{corollary}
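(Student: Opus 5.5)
Both claims follow from the preceding lemma, which says that for a fixed vertex $v$, any size $j \ge 1$ and any $0 \le k \le O(1)$, the probability that $v$ lies in a component with exactly $j$ vertices and at least $j-1+k$ edges is at most $e^{-\Omega(j)}/n^{k/2} + n^{-\omega(1)}$. The plan is to handle the two claims separately with two different choices of $k$.

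\textbf{Cycles.} A connected component on $j$ vertices with $j-1+k$ edges has cyclomatic number $k$. It therefore contains at least $k$ independent cycles, and its number of cycles is bounded in terms of $k$ and the component structure.

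To bound this, I would fix a large constant $k_0$ (say $k_0 = 5$) and apply the lemma with $k = k_0$. For a fixed $v$ and $j$, the failure probability is $e^{-\Omega(j)}/n^{k_0/2} + n^{-\omega(1)}$. A union bound over all $n$ vertices and all $j \le n$ then gives total probability
$$n \sum_{j} \bigl(e^{-\Omega(j)} n^{-k_0/2} + n^{-\omega(1)}\bigr) = O(n^{1-k_0/2}) + n^{-\omega(1)},$$
which is $1/\poly(n)$. So with high probability every component has excess $|E|-|V|+1 < k_0$, meaning $O(1)$ independent cycles.

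The main obstacle is that "$O(1)$ cycles" should mean $O(1)$ excess, and the excess is what the orientation argument uses. The count of simple cycles in a graph with excess $k_0$ is at most $2^{k_0} = O(1)$ anyway, so either reading works.

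\textbf{Expected component size.} Here I would apply the lemma with $k = 0$. Every connected component on $j$ vertices has at least $j-1$ edges, so the lemma bounds $\Pr[|C(i)| = j]$ by $e^{-\Omega(j)} + n^{-\omega(1)}$. This gives
$$\E[|C(i)|] \le \sum_{j=1}^{n} j\bigl(e^{-\Omega(j)} + n^{-\omega(1)}\bigr) = O(1).$$

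The statement measures size in edges rather than vertices. To convert, I would use that on the high-probability event above, a component has at most $j - 1 + k_0$ edges. On the complementary event, which has probability $n^{-\Omega(1)}$ with an exponent I can make large by choosing $k_0$ large, the component has at most $\sum_v X_v$ edges. That quantity has expectation $O(n)$ and exponential tails, and the exponential tails let me control this contribution via Cauchy--Schwarz even though the two quantities are correlated. So this contributes only $o(1)$. Hence the expected number of edges in the component containing $i$ is $O(1)$.
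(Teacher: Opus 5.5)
Your proposal is correct and takes the same route the paper intends: the paper states this as an immediate corollary of the preceding lemma, so a union bound with a constant $k$ gives the cycle claim and summing the $k=0$ bound gives the expected component size. Your Cauchy--Schwarz conversion from vertices to edges fills in a detail the paper leaves implicit; just note that it needs $k_0 \ge 7$ (so that $\sqrt{O(n^2)\cdot O(n^{1-k_0/2})} = o(1)$), not the $k_0 = 5$ used in your cycle argument.
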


We will also need the following helper claim, which analyzes the effect of down-sampling the edges at a given vertex:

\begin{lemma}
Let $0 < p < 1$ be a parameter. Suppose we have $X$ balls, where $\Pr[X \ge i] \le e^{-\Omega(i)}$ for all positive $i$. Sample each of these balls independently with probability $p$, and let $Y$ be the number of sampled balls. Then, for all positive $i$,
$$\Pr[Y \ge i] \le O(1) \cdot p^{i/2}.$$
\label{lem:downsample}
\end{lemma}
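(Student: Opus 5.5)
The plan is a single exponential-moment (Chernoff-style) bound, tested at the point $z = p^{-1/2}$, which turns the sampling rate directly into the factor $p^{i/2}$. Write the hypothesis as $\Pr[X \ge x] \le C e^{-\alpha x}$ for constants $C, \alpha > 0$.

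First, conditioned on $X$, $Y$ is $\mathrm{Bin}(X, p)$. So for any $z \ge 1$,
$$\E[z^Y] = \E\bigl[(1 - p + pz)^X\bigr].$$
Second, set $z = p^{-1/2}$. Markov's inequality then gives
$$\Pr[Y \ge i] \le z^{-i}\,\E[z^Y] = p^{i/2}\,\E[b^X], \qquad b \coloneqq 1 - p + \sqrt{p}.$$
It therefore suffices to show that $\E[b^X] = O(1)$ uniformly over $p \in (0,1)$.

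Third, maximizing over $p$ gives $b \le 1 + \sqrt{p} - p \le 5/4$ for every $p \in (0,1)$. Summation by parts then yields
$$\E[b^X] \le 1 + \sum_{x \ge 1} (b^x - b^{x-1}) \Pr[X \ge x] \le 1 + C \sum_{x \ge 1} b^{x} e^{-\alpha x}.$$
This is a convergent geometric series bounded by $1 + C/(1 - \tfrac{5}{4} e^{-\alpha})$ as long as $e^{\alpha} > 5/4$. Under that condition the bound $\Pr[Y \ge i] \le O(1) \cdot p^{i/2}$ holds for all $0 < p < 1$ and all positive $i$, with the $O(1)$ depending only on $C$ and $\alpha$.

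The main obstacle, and the point I would check first, is the constant hidden in the hypothesis $e^{-\Omega(i)}$. The argument above needs the decay rate $\alpha$ to exceed an absolute constant; $\alpha > \ln(5/4)$ suffices. Some such requirement is unavoidable for the statement to hold for all $p$. For instance, take $p = 1/2$ and let $X$ decay at a tiny rate $\alpha$. Then $Y \approx X/2$, so $\Pr[Y \ge i] \approx e^{-2\alpha i}$, which cannot be bounded by $O(1) \cdot 2^{-i/2}$ when $\alpha < (\ln 2)/4$.

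So I would state, as the reading under which the lemma is proved, that the $\Omega(\cdot)$ in the hypothesis has rate at least a fixed absolute constant. This holds in the intended application, where the per-vertex edge counts satisfy $\Pr[X_v \ge j] \le c^j$ with $c$ small. Under that reading, the three steps above complete the proof with no case analysis on $p$.
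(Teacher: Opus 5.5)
Your argument is correct for the version of the lemma you state, and it takes a different route from the paper. The paper union-bounds over $i$-subsets of the $X$ balls:
$\Pr[Y \ge i] \le \sum_{j \ge i} \Pr[X = j]\binom{j}{i}p^i \le \sum_{j \ge i} e^{-\Omega(j)}(ej/i)^i p^i$.
It then argues informally that each summand is $e^{-\Omega(j)}p^{i/2}$. This uses a case split into $j = \Theta(i)$, where $p^{i/2}$ must absorb $(ej/i)^i$, and $j = \omega(i)$, where $e^{-\Omega(j)}$ absorbs it. Your choice $z = p^{-1/2}$ in $\Pr[Y\ge i]\le z^{-i}\,\E[(1-p+pz)^X]$ replaces that case split with a single computation and gives explicit constants. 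It also isolates the real requirement: $b = 1-p+\sqrt{p}$ must lie below $e^{\alpha}$ with some margin. Your counterexample at $p=1/2$ is correct and worth keeping, since the lemma as literally stated is false. The paper's ``$j=\Theta(i)$'' case silently needs $p$ small compared with the decay rate; working it out, $p$ of order $\alpha^2$ or smaller.

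Where you go wrong is the choice of repair. The paper's reading is that $p$ is a sufficiently small constant \emph{as a function of} the decay rate, not that the rate exceeds $\ln(5/4)$. This is exactly how Lemmas~\ref{lem:G2} and~\ref{lem:G3} invoke it (``let $p$ be a sufficiently small positive constant as a function of $\epsilon$'').

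Your claim that the large-rate reading ``holds in the intended application'' confuses input and output. The bound $\Pr[\cdot \ge j] \le c^j$ with small $c$ is what the lemma must \emph{deliver} for $Z_i$, so that Corollary~\ref{cor:cuckooproperties} applies. The inputs are $\overline{X}_i$ and $W_i$. For $1 \le j \le (1+\epsilon)h$ one has $\Pr[\overline{X}_i \ge j] = \Pr[X_i > (1+\epsilon)h]$, which for large $h$ is $e^{-\Theta(\epsilon^2 h)}$. So the rate there is only $O(\epsilon^2)$, far below $\ln(5/4)$, and similarly for $W_i$.

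Your method handles this regime with no extra effort. If $p \le (e^{\alpha/2}-1)^2$, then $b \le 1+\sqrt{p} \le e^{\alpha/2}$. Hence $\E[b^X] \le 1 + C\sum_{x\ge 1} e^{-\alpha x/2} = 1 + C/(e^{\alpha/2}-1)$, and $\Pr[Y \ge i] \le \bigl(1 + C/(e^{\alpha/2}-1)\bigr)p^{i/2}$. State the lemma in this form, or in both forms, since this is the version the rest of the paper needs.
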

\begin{proof}
We can upper bound $\Pr[Y \ge i]$ by
\begin{align*}
    & \sum_{j \ge i} \Pr[X = j] \sum_{S \subseteq [j], |S| = i} \Pr[S \text{ sampled}] \\
    & = \sum_{j \ge i} \Pr[X = j] \binom{j}{i} p^i \\
    & \le \sum_{j \ge i} e^{-\Omega(j)} \cdot (e^i (j / i)^i) \cdot p^i. \\
\end{align*}

We can complete the proof by demonstrating that each summand is at most $e^{-\Omega(j)} \cdot c^{i/2}$, as this would imply that the entire sum is at most $O(1) \cdot p^{i/2}$.

We can bound each summand by considering two cases. For $j = \Theta(i)$, the $c^i$ term dominates the $e^i (j / i)^i$ term so that $e^{-\Omega(j)} \cdot (e^i (j / i)^i) \cdot c^i \le e^{-\Omega(j)} c^{i / 2}$. For $j = \omega(i)$, the $e^{-\Omega(j)}$ term dominates the $e^i (j / i)^i$ term so that $e^{-\Omega(j)} \cdot (e^i (j / i)^i) \cdot c^i \le e^{-\Omega(j)} c^{i}$. Either way, the entire summand is at most $e^{-\Omega(j)}\cdot c^{i/2}$, as desired.
\end{proof}

Finally, we will want to apply Corollary \ref{cor:cuckooproperties} in settings where the vertices of the graph do not quite behave independently. To rectify this issue, we will make use of the following Poissonization lemmas: 
\begin{lemma}[Poisson Approximation for Expectations \cite{mitzenmacher2017probability}]
Let $f(a_1, \ldots, a_n): (\mathbb{N} \cup \{0\})^n \rightarrow \mathbb{R}$ be a monotonic function (either monotonically increasing or monotonically decreasing). Throw $n$ balls (resp.~$\operatorname*{Poisson}(n)$ balls) independently and uniformly into $n$ bins, and let $a_i$ (resp.~$a_i'$) denote the number of balls in the $i$-th bin. Then, 
$$\E[f(a_1, \ldots, a_n)] \le 2\E[f(a_1', \ldots, a_n')].$$
\label{lem:poissonexp}
\end{lemma}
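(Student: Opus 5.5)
The plan is to couple the two experiments through a single Poisson process and then use monotonicity of $f$ to compare the two expectations. Assume first that $f$ is monotonically increasing and nonnegative; this is the setting in which the lemma is applied, since the functions of interest are probabilities and component sizes, and a decreasing $f$ is handled by the symmetric argument. Let $N \sim \operatorname*{Poisson}(n)$, and throw $N$ balls independently and uniformly into the $n$ bins. By the standard Poisson splitting property, the resulting loads $(a_1', \ldots, a_n')$ are i.i.d.\ $\operatorname*{Poisson}(1)$, and conditioned on $N = k$ the load vector has exactly the law of throwing $k$ balls. Write $g(k) \coloneqq \E[f(\text{loads after throwing } k \text{ balls})]$, so that $\E[f(a')] = \sum_k \Pr[N = k]\, g(k)$.

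\textbf{Step 1 (monotonicity in $k$).} Realize the throws sequentially, so that the loads after $k+1$ balls dominate the loads after $k$ balls coordinatewise. Since $f$ is increasing, this coupling gives that $g$ is nondecreasing in $k$. In particular $g(k) \ge g(n)$ for every $k \ge n$.

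\textbf{Step 2 (median of the Poisson).} Since $f \ge 0$, dropping the terms with $k < n$ yields
$$\E[f(a')] \ge \Pr[N \ge n] \cdot g(n).$$
It then suffices to show $\Pr[N \ge n] \ge 1/2$. This is the classical fact that the median of $\operatorname*{Poisson}(n)$ is $n$ for integer $n$, i.e.\ $\Pr[\operatorname*{Poisson}(n) \ge n] \ge 1/2$. I would simply cite it, as in Mitzenmacher--Upfal; a self-contained proof can be obtained by comparing $\Pr[N = n+j]$ with $\Pr[N = n-1-j]$ term by term. Combining the two steps gives $g(n) \le 2\E[f(a')]$, which is the claimed inequality.

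\textbf{Decreasing case and main obstacle.} If $f$ is decreasing, the same argument goes through with the inequalities reversed: $g$ is now nonincreasing, so one keeps the terms with $k \le n$ and uses $\Pr[N \le n] \ge 1/2$. The only nontrivial ingredient in either case is this median fact for the Poisson distribution, and I expect it to be the main step to verify (or cite). The nonnegativity of $f$ is implicitly required in both cases, and I would state it explicitly as an assumption of the lemma.
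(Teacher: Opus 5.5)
Your proof is correct and is essentially the standard Mitzenmacher--Upfal argument that the paper cites without reproducing: condition on $N$, use the sequential coupling to get that $g(k)$ is monotone in $k$, drop the terms on the wrong side of $n$ (which needs $f \ge 0$), and invoke $\Pr[N \ge n] \ge 1/2$ (resp.\ $\Pr[N \le n] \ge 1/2$). You are right that nonnegativity must be added to the statement as written ($f \equiv -1$ violates it). Note also that your term-by-term pairing only yields $\Pr[N \ge n] \ge 1/2$, so the decreasing case genuinely relies on the cited median fact, since the mirrored pairing does not give $\Pr[N \le n] \ge 1/2$.
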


As a slight abuse of notation, although technically the function $f$ in Lemma \ref{lem:poissonexp} is monotonic in the quantities $a_1, a_2, \ldots, a_n$, we will say as a shorthand that it is monotonic with ball throws. 

\begin{lemma}[Poisson Approximation for Probabilities \cite{mitzenmacher2017probability}]
Let $E(a_1, \ldots, a_n)$ be an event determined by non-negative integers $a_1, \ldots, a_n$. Throw $n$ balls (resp.~$\operatorname*{Poisson}(n)$ balls) independently and uniformly into $n$ bins, and let $a_i$ (resp.~$a_i'$) denote the number of balls in the $i$-th bin. Then, 
$$\Pr[E(a_1, \ldots, a_n)] \le O(\sqrt{n}) \cdot \Pr[E(a_1', \ldots, a_n')].$$
It follows that, if the event $E(a_1', \ldots, a_n')$ occurs with high probability in $n$, then the event $E(a_1, \ldots, a_n)$ also occurs with high probability in $n$. 
\label{lem:poissonprob}
\end{lemma}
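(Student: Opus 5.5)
The plan is the standard conditioning argument. Let $N \sim \operatorname*{Poisson}(n)$ denote the total number of balls thrown in the Poisson experiment, so that $N = \sum_i a_i'$. The key structural fact I will use is that \emph{conditioned on $N = k$}, the vector $(a_1', \ldots, a_n')$ has exactly the distribution obtained by throwing $k$ balls independently and uniformly into $n$ bins. I will verify this directly: each $a_i'$ is an independent $\operatorname*{Poisson}(1)$ variable. For any $(k_1, \ldots, k_n)$ with $\sum k_i = k$, the joint probability is $\prod_i e^{-1}/k_i! = e^{-n}/\prod_i k_i!$. Dividing by $\Pr[N = k] = e^{-n} n^k / k!$ gives $\frac{k!}{\prod_i k_i!} n^{-k}$, which is precisely the multinomial probability.

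Next I lower bound $\Pr[E(a_1', \ldots, a_n')]$ by restricting to the single value $N = n$:
$$\Pr[E(a_1', \ldots, a_n')] \ge \Pr[N = n] \cdot \Pr[E(a_1', \ldots, a_n') \mid N = n] = \Pr[N = n] \cdot \Pr[E(a_1, \ldots, a_n)],$$
where the equality uses the conditioning fact with $k = n$.

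The only quantitative step is bounding $\Pr[N = n] = e^{-n} n^n / n!$ from below. By Stirling's bound $n! \le e\sqrt{n}\,(n/e)^n$, this gives $\Pr[N = n] \ge 1/(e\sqrt{n})$. Rearranging yields
$$\Pr[E(a_1, \ldots, a_n)] \le e\sqrt{n} \cdot \Pr[E(a_1', \ldots, a_n')],$$
which is the claimed inequality. No monotonicity of $E$ is needed, in contrast with Lemma~\ref{lem:poissonexp}.

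For the high-probability consequence, I apply the inequality to the complement event $\neg E$, which is also determined by $a_1, \ldots, a_n$. Fix any constant $c$. If $E(a_1', \ldots, a_n')$ fails with probability at most $n^{-(c+1/2)}$, then $E(a_1, \ldots, a_n)$ fails with probability at most $O(\sqrt{n}) \cdot n^{-(c+1/2)} = O(n^{-c})$. Since $c$ was arbitrary, the event holds with high probability in $n$.

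I do not expect a real obstacle here. The only point requiring care is applying Stirling's bound in the correct direction, i.e.\ using an upper bound on $n!$ to obtain a lower bound on $\Pr[N = n]$.
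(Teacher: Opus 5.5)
Your proof is correct and matches the standard argument the paper relies on. The paper gives no proof of its own but cites Mitzenmacher--Upfal, whose proof is exactly yours: condition on the Poisson total to get $\Pr[E(a_1',\ldots,a_n')] \ge \Pr[N=n]\cdot\Pr[E(a_1,\ldots,a_n)]$, then use $n! \le e\sqrt{n}\,(n/e)^n$ to get $\Pr[N=n] \ge 1/(e\sqrt{n})$, with the high-probability consequence obtained via the complement event as you do.
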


Finally, with the preceding lemmas in place, we are prepared prove the main results of the section. Each of the following lemmas analyzes one of the random processes that we use to construct random graphs in our algorithm -- in each case, we argue that the resulting random graph is well behaved.

\begin{lemma}
Let $d \in O(1)$, and let $p$ be a sufficiently small positive constant relative to $d$. Suppose we throw $cn$ balls into $n$ bins, independently and uniformly at random, and then we remove each ball from the system independently with probability $1 - p$. Let $Z_i$ be the number of remaining balls in bin $i$. 

Construct a (multi-)graph $G$ on vertices $[n]$ by adding for each vertex $i \in [n]$ $Z_i$ edges from $i$ to independent and uniformly random vertices in $[n]$. Then, $G$ satisfies the following two properties:
\begin{itemize}
    \item With high probability in $n$, every connected component in $G$ contains $O(1)$ cycles.
    \item For any given vertex $i$, the expected size of the connected component $C_i$ containing $i$ is $O(1)$.
\end{itemize}
\label{lem:G1}
\end{lemma}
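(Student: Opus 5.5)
The plan is to reduce Lemma~\ref{lem:G1} to Corollary~\ref{cor:cuckooproperties} in two steps. First, Poissonization removes the dependence among the $Z_i$. Second, Lemma~\ref{lem:downsample} shows that the resulting independent out-degrees have the geometric tail that Corollary~\ref{cor:cuckooproperties} requires.

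\textbf{Step 1 (Poissonization).} Replace the $cn$ thrown balls by $\operatorname*{Poisson}(cn)$ balls. The counts $a_i'$ per bin are then independent $\operatorname*{Poisson}(c)$ variables. After independent thinning with probability $p$, the $Z_i'$ are independent $\operatorname*{Poisson}(cp)$ variables. Equivalently, $Z_i'$ is obtained by sampling each of $a_i'$ balls with probability $p$, where $\Pr[a_i' \ge j] \le e^{-\Omega(j)}$ because $c = O(1)$.

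\textbf{Step 2 (degree tail).} By Lemma~\ref{lem:downsample}, $\Pr[Z_i' \ge j] \le O(1)\cdot p^{j/2}$. Taking $p$ sufficiently small relative to $c$ and $d$ absorbs the $O(1)$ factor for $j \ge 1$, so $\Pr[Z_i' \ge j] \le c_0^{\,j}$ for the small constant $c_0$ required in Corollary~\ref{cor:cuckooproperties}. (If the absorption at $j=1$ is an issue, I would use directly the explicit Poisson tail $\Pr[\operatorname*{Poisson}(cp)\ge j]\le (ecp/j)^j$.) The edge endpoints are independent and uniform and are independent of the $Z_i'$. Hence the Poissonized graph $G'$ satisfies the hypotheses of Corollary~\ref{cor:cuckooproperties}: with high probability every component has $O(1)$ cycles, and the expected component size at any vertex is $O(1)$.

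\textbf{Step 3 (transfer back).} Condition on the endpoint randomness; this is legitimate because endpoints are independent of the ball throws. Given the endpoints, the graph is determined by the per-bin counts $a_1,\dots,a_n$ together with the thinning coins. I will not handle the thinning coins by conditioning on them, since that would leave the order of balls within a bin ambiguous. Instead I will model it as follows: each bin carries an infinite i.i.d.\ sequence of coins, and the first $a_i$ of them decide which of its balls survive. With that fixed, $Z_i$ is monotone in $a_i$, and adding edges can only merge components or add cycles.

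\begin{itemize}
\item For the \emph{cycles} claim, the bad event ``some component has more than $K$ cycles'' is determined by the $a_i$ once the other randomness is fixed. Lemma~\ref{lem:poissonprob} applies to arbitrary events, and the throws use $cn$ rather than $n$ balls; the lemma generalizes with the same $O(\sqrt{n})$ loss. It therefore transfers a $1/n^{\omega(1)}$ or $n^{-\Omega(K)}$ failure probability back to the original process. The tail bound from the preceding unnamed lemma is $e^{-\Omega(j)}/n^{k/2}$ for components with excess $k$. Choosing the excess threshold $K$ as a large enough constant makes the union bound over vertices, times $\sqrt{n}$, polynomially small.
\item For the \emph{expected component size} claim, $|C_i|$ is monotone increasing in the ball throws. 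Lemma~\ref{lem:poissonexp} then gives $\E[|C_i|] \le 2\E[|C_i'|] = O(1)$.
\end{itemize}

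The main obstacle is Step 3. Two things have to be set up carefully: the coupling of thinning coins and endpoint choices that makes the relevant quantities functions (and monotone functions) of the bin counts alone, and the adaptation of the Poisson lemmas from $n$ balls to $cn$ balls. The cycle bound is also delicate, since the $O(\sqrt{n})$ loss requires the Poissonized failure probability to be a sufficiently high polynomial. I would handle this by picking the cycle-count constant large, using the $n^{-k/2}$ factor in the component tail bound.
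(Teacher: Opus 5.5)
Your proposal is correct and follows the paper's proof. You Poissonize so that the $Z_i$ become independent $\mathrm{Poisson}(pc)$ variables with geometric tails, apply Corollary~\ref{cor:cuckooproperties}, and transfer back via the Poisson approximation lemmas using monotonicity of $|C_i|$. Your explicit coupling (per-bin coin and endpoint sequences) and your large excess threshold to absorb the $O(\sqrt{n})$ loss in the cycle bound make rigorous what the paper's one-line appeal to Poissonization leaves implicit.
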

\begin{proof}
By Poissonization (Lemmas \ref{lem:poissonexp} and \ref{lem:poissonprob}), we may assume without loss of generality that, rather than $m$ balls being thrown, $M \sim \operatorname*{Poisson}(m)$ balls are thrown. (Here we use that $|C_i|$ monotonically increases with ball throws.) With this modification, the bins are now independent, each independently receiving $\operatorname*{Poisson}(d)$ balls. Since the $Z_i$s are obtained by sampling each ball with probability $P$, it follows that the $Z_i$s are independent $\operatorname*{Poisson}(pd)$ random variables. Supposing that $p$ is a sufficiently small positive constant, this implies that $Z_i$s independently satisfy 
\begin{equation}\Pr[Z_i \ge j] \le c^j,
\label{eq:Zsmall2}
\end{equation}
where $c$ is a positive constant determined by $p$ that goes to $0$ as $p$ goes to $0$. Applying Corollary \ref{cor:cuckooproperties}, we obtain the desired properties of the graph $G$. 
\end{proof}

\begin{lemma}
Let $h \ge 1$ and let $\epsilon \in (0, 1)$ be a positive constant, and let $p$ be a sufficiently small positive constant as a function of $\epsilon$. Suppose we throw $m = h n$ balls into $n$ bins, independently and uniformly at random. Let $X_i$ denote the number of balls in bin $i$. Now suppose that we remove each ball from the system independently with probability $1 - p$. Let $Y_i$ be the number of remaining balls in bin $i$, and define 
$$Z_i = \begin{cases}
0 & \text{ if } X_i \le (1 + \epsilon) h \\
Y_i & \text{otherwise.}
\end{cases}
$$
Finally, construct a (multi-)graph $G$ on vertices $[n]$ by adding for each vertex $i \in [n]$ $Z_i$ edges from $i$ to independent and uniformly random vertices in $[n]$. Then, $G$ satisfies the following two properties:
\begin{itemize}
    \item With high probability in $n$, every connected component in $G$ contains $O(1)$ cycles.
    \item For any given vertex $i$, the expected size of the connected component $C_i$ containing $i$ is $O(1)$.
\end{itemize}
\label{lem:G2}
\end{lemma}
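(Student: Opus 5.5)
We reduce Lemma~\ref{lem:G2} to Corollary~\ref{cor:cuckooproperties}, following the same template as the proof of Lemma~\ref{lem:G1}. The goal is to show that, after Poissonization, the out-degrees $Z_i$ are independent and satisfy $\Pr[Z_i \ge j] \le c^j$ for a constant $c$ that can be made arbitrarily small by shrinking $p$.

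\textbf{Step 1: Poissonization.} Both target quantities are monotone in the ball throws: the size $|C_i|$ of the component containing $i$, and the indicator that some component has more than a given constant number of cycles. Indeed, adding a thrown ball can only increase $X_k$ and $Y_k$. It can only turn a zero $Z_k$ into a positive one, or increase $Z_k$, so it only adds edges to $G$.

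There is one subtlety. Edge endpoints are extra independent randomness, so we must couple them. We fix, for each vertex $k$, an infinite list of random endpoints and use the first $Z_k$ of them. Similarly, the keep/remove coins are attached to balls in order of arrival. With this coupling, $G$ is monotone in the throws and the other randomness is independent of them.

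Lemma~\ref{lem:poissonexp} then transfers the expectation bound, and Lemma~\ref{lem:poissonprob} transfers the high-probability statement, to the model where $M \sim \operatorname*{Poisson}(hn)$ balls are thrown. In that model the $X_i$ are independent $\operatorname*{Poisson}(h)$ variables, and hence the pairs $(X_i,Y_i)$, and so the $Z_i$, are independent across bins.

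\textbf{Step 2: Tail of $Z_i$.} For each bin,
\[
\Pr[Z_i \ge j] \le \Pr[Y_i \ge j \text{ and } X_i > (1+\epsilon)h].
\]
We split into two cases according to $h$.

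If $h \le h_0$ for a constant $h_0$, we ignore the $X_i$ condition. Then $Y_i \sim \operatorname*{Poisson}(ph)$ with $ph \le ph_0$ small, so
\[
\Pr[Y_i \ge j] \le (eph_0/j)^j \le c^j.
\]

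If $h$ is large, we use that $X_i$ exceeds $(1+\epsilon)h$ only with probability $e^{-\Omega(\epsilon^2 h)}$. Conditioned on $X_i = x$, $Y_i \sim \operatorname{Bin}(x,p)$. For $j \le 2p x$-type ranges this is the delicate regime.

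I would bound the tail as
\[
\Pr[Z_i \ge j] \le \sum_{x > (1+\epsilon)h} \Pr[X_i = x]\cdot \min\!\Bigl(1, \binom{x}{j}p^j\Bigr).
\]
There are two subcases. For $j \ge 2e p x$, the factor $\binom{x}{j}p^j \le (epx/j)^j \le 2^{-j}$, and it shrinks further as $p \to 0$ when $j \ge \sqrt p\, x$. For $j$ smaller than about $\sqrt{p}\,x$, the factor $\Pr[X_i = x] \le e^{-\Omega(\epsilon^2 x)}$ is already at most $e^{-\Omega(\epsilon^2 j/\sqrt p)}$. Choosing $p$ small relative to $\epsilon$ makes this at most $c^j$.

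The main obstacle is exactly this interplay: making the bound uniform in $h$, and handling the regime where $x$ is huge but the binomial tail is weak. The cleanest route is to take $\sqrt{p}\,x$ as the threshold separating the two subcases, and to sum the geometric tail of $X_i$ over $x$. An alternative route is Lemma~\ref{lem:downsample}. Note that $X_i \cdot \ind(X_i > (1+\epsilon)h)$ does not have an $e^{-\Omega(i)}$ tail uniformly in $h$, so that lemma does not apply directly. It can still be used after normalizing: group the balls into blocks of size $\Theta(1/(\epsilon^2))$.

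\textbf{Step 3: Conclusion.} With independent out-degrees satisfying $\Pr[Z_i \ge j] \le c^j$, and independent uniform endpoints, Corollary~\ref{cor:cuckooproperties} gives both properties in the Poissonized model: $O(1)$ cycles per component with high probability, and $\E|C_i| = O(1)$. Step 1 transfers both properties back to the model with exactly $hn$ balls.
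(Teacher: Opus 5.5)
Your proof is correct and follows the paper's skeleton. You Poissonize so the out-degrees $Z_i$ become independent, show $\Pr[Z_i \ge j] \le c^j$ with $c \to 0$ as $p \to 0$, and invoke Corollary~\ref{cor:cuckooproperties}.

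The only divergence is Step~2. The paper bounds the tail of the truncated count $\overline{X}_i$ ($=X_i$ if $X_i > (1+\epsilon)h$, else $0$) by $e^{-\Omega(j)}$ and then applies Lemma~\ref{lem:downsample}. You instead do the downsampling by hand, conditioning on $X_i = x$ and splitting at $x = j/\sqrt{p}$, and that computation is valid.

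Your stated reason for bypassing Lemma~\ref{lem:downsample} is wrong, though: $\overline{X}_i$ does have an exponential tail uniformly in $h$, with a rate depending only on $\epsilon$. For $y \ge (1+\epsilon)h$, the Poisson Chernoff bound gives $\Pr[X_i \ge y] \le e^{-ay}$ with $a = \ln(1+\epsilon) - \epsilon/(1+\epsilon) = \Theta(\epsilon^2)$, independent of $h$.
\begin{itemize}
\item For $j > (1+\epsilon)h$, this gives $\Pr[\overline{X}_i \ge j] \le e^{-aj}$ directly.
\item For $1 \le j \le (1+\epsilon)h$, one has $\Pr[\overline{X}_i \ge j] = \Pr[X_i > (1+\epsilon)h] \le e^{-a(1+\epsilon)h} \le e^{-aj}$.
\end{itemize}
(The paper's display misstates this second case as probability $0$, but the conclusion is as above.) So the lemma applies directly, which is precisely why $p$ must be small as a function of $\epsilon$, and no regrouping into blocks is needed.

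The same uniform bound makes your case split on $h \le h_0$ unnecessary. For every $h \ge 1$, your split at $x = j/\sqrt{p}$ already gives $\Pr[Z_i \ge j] \le (e\sqrt{p})^j + e^{-aj/\sqrt{p}}$.

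What your version buys is self-containment and an explicit $c$, without relying on the terse proof of Lemma~\ref{lem:downsample}. Your explicit coupling in Step~1 is also more careful than the paper, which only asserts monotonicity of $|C_i|$. Your remark that the many-cycles event is monotone would even let you apply the factor-$2$ bound of Lemma~\ref{lem:poissonexp} to its indicator. That avoids the $\sqrt{m}$-type loss of Lemma~\ref{lem:poissonprob} when $h$ is large.
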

\begin{proof}
By Poissonization (Lemmas \ref{lem:poissonexp} and \ref{lem:poissonprob}), we may assume without loss of generality that, rather than $m$ balls being thrown, $M \sim \operatorname*{Poisson}(m)$ balls are thrown. (Here we use that $|C_i|$ monotonically increases with ball throws.) With this modification, the bins are now independent, each independently receiving $\operatorname*{Poisson}(m/n)$ balls. 

Define $\overline{X}_i$ to be $0$ if $X_i \le (1 + \epsilon)h = (1 + \epsilon)m/n$ and to be $X_i$ otherwise. For any $j \ge 0$, we have
\begin{align*}
\Pr[X_i \ge j] &= \begin{cases} 0 & \text{ if } j \le (1 + \epsilon)h \\ \Pr[\operatorname*{Poisson}(m/n) \ge j] & \text{ otherwise} \end{cases} \\
& \le e^{-\Theta(j)},
\end{align*}
since $\Pr[\operatorname*{Poisson}(\lambda) \ge j] \le e^{-\Omega(j)}$ for all $j \ge (1 + \Omega(1)) \lambda$.

Since $Z_i$ is obtained from $\overline{X}_i$ by sampling every ball with probability $p$, it follows from Lemma \ref{lem:downsample} that
\begin{equation}\Pr[Z_i \ge j] \le c^j,
\label{eq:Zsmall1}
\end{equation}
where $c$ is a positive constant determined by $p$ that goes to $0$ as $p$ goes to $0$. 

Combining \eqref{eq:Zsmall1} with the fact that the $Z_i$s are independent (thanks to Poissonization), we can apply Corollary \ref{cor:cuckooproperties} to obtain the desired properties of the graph $G$. 
\end{proof}

\begin{lemma}
Let $h_1 \ge h_2$, let $\epsilon \in (0, 1)$ be a positive constant, and let $p$ be a sufficiently small positive constant as a function of $\epsilon$. Suppose we throw $m_1 = h_1 n$ blue balls into $n$ bins and $m_2 = h_2 n$ red balls in $n$ bins, independently and uniformly at random. Let $X_i$ denote the number of blue balls in bin $i$. Now suppose that we remove each blue ball from the system independently with probability $1 - p$. Let $Y_i$ be the number of remaining red balls in bin $i$, and define 
$$Z_i = \begin{cases}
0 & \text{ if } X_i \ge (1 - \epsilon) h_1 \\
Y_i & \text{otherwise.}
\end{cases}
$$
Finally, construct a (multi-)graph $G$ on vertices $[n]$ by adding for each vertex $i \in [n]$ $Z_i$ edges from $i$ to independent and uniformly random vertices in $[n]$. Then, $G$ satisfies the following two properties:
\begin{itemize}
    \item With high probability in $n$, every connected component in $G$ contains $O(1)$ cycles.
    \item For any given vertex $i$, the expected size $|C_i|$ of the connected component containing $i$ is $O(1)$.
\end{itemize}
\label{lem:G3}
\end{lemma}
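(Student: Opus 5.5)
We follow the template of Lemma~\ref{lem:G2}. The plan is to Poissonize, show that each $Z_i$ has a geometric-type tail with a small constant base, and then invoke Corollary~\ref{cor:cuckooproperties}.

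\textbf{Step 1 (Poissonization).} By Lemmas~\ref{lem:poissonexp} and~\ref{lem:poissonprob}, we replace the $m_1$ blue throws by $\operatorname*{Poisson}(m_1)$ throws and the $m_2$ red throws by $\operatorname*{Poisson}(m_2)$ throws. The whp statement transfers with only an $O(\sqrt n)$ loss, so it survives. For the expectation of $|C_i|$, the monotonicity must be checked separately in each color, since $|C_i|$ is monotone increasing in red throws but monotone \emph{decreasing} in blue throws. I would therefore Poissonize one color at a time, applying Lemma~\ref{lem:poissonexp} once for each color. Each application is to a monotone function, and each costs a factor of $2$. After this step, all $2n$ per-bin counts are independent: blue counts are $\operatorname*{Poisson}(h_1)$, red counts are $\operatorname*{Poisson}(h_2)$, and the $Z_i$ are independent across $i$.

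\textbf{Step 2 (tail of $Z_i$).} Fix $i$. The event $X_i<(1-\epsilon)h_1$ is a lower-tail event for $\operatorname*{Poisson}(h_1)$, so its probability is at most $e^{-\Omega(\epsilon^2 h_1)}$. I read $Y_i$ as the red balls surviving sampling with probability $p$, so that $Y_i\sim\operatorname*{Poisson}(p h_2)$ independently of $X_i$. Then
\[
\Pr[Z_i\ge j]\le \min\bigl(e^{-\Omega(h_1)},\ \Pr[\operatorname*{Poisson}(p h_2)\ge j]\bigr).
\]
I would split on the size of $h_2$.

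In the case $h_2\le h_1$ and $j\ge 2e p h_2$, the Poisson tail bound gives $\Pr[\operatorname*{Poisson}(ph_2)\ge j]\le (e p h_2/j)^j$. When $j$ is large compared to $h_2$, this is at most $c^j$ directly. When $j\le O(ph_2)$, we instead use the other factor: $e^{-\Omega(h_1)}\le e^{-\Omega(h_2)}\le e^{-\Omega(j/p)}$, which is at most $c^j$ because $p$ is small relative to $\epsilon$. Combining the two regimes, and using Lemma~\ref{lem:downsample} to handle the sampling cleanly if needed, yields $\Pr[Z_i\ge j]\le c^j$ for all $j\ge 0$, where $c\to0$ as $p\to0$.

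\textbf{Step 3.} The $Z_i$ are independent with tail $c^j$, and each edge has an independent uniform endpoint. Corollary~\ref{cor:cuckooproperties} therefore gives both bullets in the Poissonized model, and Step 1 transfers them back to the original model.

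\textbf{Main obstacle.} The expected main obstacle is the intermediate regime $j\approx ph_2$ when $h_2$ is large. There, the bound must come from the rarity of an underloaded blue bin rather than from the red count itself. This is where $h_1\ge h_2$ and the choice of $p$ small relative to $\epsilon$ (so that $\epsilon^2/p$ is large) are both essential. The second delicate point is the monotonicity bookkeeping in Step 1, since $Z_i$ is not monotone in the blue throws in the same direction as in the red throws.
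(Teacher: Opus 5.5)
Your skeleton matches the paper's proof. You Poissonize the blue and the red throws separately (using that $|C_i|$ is decreasing in blue throws and increasing in red throws), show that the $Z_i$ are independent with $\Pr[Z_i\ge j]\le c^j$, and apply Corollary~\ref{cor:cuckooproperties}.

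The difference is in Step~2, and as written that step has a hole. The paper never thins first. It takes $W_i$ to be the \emph{unsampled} red count of a blue-underloaded bin and proves $\Pr[W_i\ge j]\le e^{-\Omega(j)}$ with a single $p$-independent split at $j=2h_1$: the red upper tail covers $j\ge 2h_1\ge 2h_2$, and the blue lower tail $e^{-\Omega(h_1)}=e^{-\Omega(j)}$ covers $j\le 2h_1$. Only then does it invoke Lemma~\ref{lem:downsample} to get a base $c\to 0$ as $p\to 0$. All the balancing of $p$ against $\epsilon$ is hidden inside that lemma.

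You thin first, using $Y_i\sim\operatorname*{Poisson}(ph_2)$, and this forces a $p$-dependent crossover. Your two stated regimes ($j$ large compared to $h_2$, and $j\le O(ph_2)$) do not meet. For $j$ between roughly $ph_2$ and $h_2$, neither bound alone gives a small base:
\begin{itemize}
\item near $j\approx 2eph_2$, the red tail $(eph_2/j)^j$ is only about $2^{-j}$;
\item near $j\approx h_2$, the blue factor is only guaranteed to be $e^{-\Omega(\epsilon^2 j)}$, whose base does not tend to $0$ with $p$.
\end{itemize}
So your remark that this regime must be paid for by the rarity of underloaded blue bins is right only at its lower end. Above about $\sqrt{p}\,h_2$, it is the thinned red count that does the work.

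The repair is to split at $j=\sqrt{p}\,h_2$:
\begin{itemize}
\item for $j\ge\sqrt p\,h_2$, you get $(eph_2/j)^j\le(e\sqrt p)^j$;
\item for $j<\sqrt p\,h_2$, you have $h_1\ge h_2>j/\sqrt p$, so $e^{-\epsilon^2h_1/2}\le e^{-\epsilon^2 j/(2\sqrt p)}$.
\end{itemize}
Both bases tend to $0$ as $p\to 0$ with $\epsilon$ fixed. With that fix your direct route is valid and does not need Lemma~\ref{lem:downsample} at all. The paper's order (tail bound on the unthinned count first, downsample second) is cleaner because its split does not depend on $p$.
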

\begin{proof}
By applying Poissonization (Lemmas \ref{lem:poissonexp} and \ref{lem:poissonprob}) to the blue balls, we may assume without loss of generality that, rather than $m_1$ blue balls being thrown, $M_1 \sim \operatorname*{Poisson}(m_1)$ blue balls are thrown. (Here we use that $|C_i|$ monotonically decreases with blue ball throws.) By then applying Poissonization again, but this time to the red balls, we may also assume that the number of red balls is not $m_2 = h_2n$ but is instead $M_2 \sim \operatorname*{Poisson}(m_2)$. (Here we use that $|C_i|$ monotonically increases with red ball throws.)

Define $W_i$ to be $0$ if $X_i \ge (1 - \epsilon)h_1$ and be the number of red balls in bin $i$ otherwise. For any $j \ge 0$, we have
\begin{align*}
\Pr[W_i \ge j] & = \Pr[\operatorname*{Poisson}(h_1) \le (1 - \epsilon) h_1] \cdot \Pr[\operatorname*{Poisson}(h_2) \ge j] \\ & \le \min(\Pr[\operatorname*{Poisson}(h_1) \le (1 - \epsilon)h_1], \Pr[\operatorname*{Poisson}(h_2) \le j]).
\end{align*}
For $j \ge 2h_1 \ge 2h_2$, we have $\Pr[\operatorname*{Poisson}(h_2) \ge j] \le e^{-\Omega(j)}$, and for $j \le 2h_1$, we have $\Pr[\operatorname*{Poisson}(h_1) \le (1 - \epsilon) h_1] \le e^{-\Omega(h_1)} = e^{-\Omega(j)}$. Thus, for all values of $j \ge 0$, we have
\begin{align*}
\Pr[W_i \ge j] \le e^{-\Omega(j)}. 
\end{align*}

Since $Z_i$ is obtained from $W_i$ by sampling every ball with probability $p$, it follows from Lemma \ref{lem:downsample} that
\begin{equation}\Pr[Z_i \ge j] \le c^j,
\label{eq:Zsmall}
\end{equation}
where $c$ is a positive constant determined by $p$ that goes to $0$ as $p$ goes to $0$. 

Combining \eqref{eq:Zsmall} with the fact that the $Z_i$s are independent (thanks to Poissonization), we can apply Corollary \ref{cor:cuckooproperties} to obtain the desired properties of the graph $G$. 
\end{proof}

Finally, we also establish an extension of the previous three lemmas, which allows the vertex $i$ that we are considering to be selected not arbitrarily but by a specific ball. 
\begin{corollary}
 Consider the setups in any of Lemmas \ref{lem:G1}, \ref{lem:G2}, and \ref{lem:G3}. Condition on some specific ball $b$ being thrown into bin $1$. Then, the expected size of the connected component containing vertex $1$ remains $O(1)$. 
 \label{cor:G}
\end{corollary}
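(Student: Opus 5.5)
We condition on ball $b$ landing in bin $1$ and reduce to the unconditioned lemmas by domination. Conditioning on $b$ is the same as throwing the remaining balls as before and adding one extra ball to bin $1$. Depending on the setup, this extra ball is a sampled ball (Lemma \ref{lem:G1}), a potentially counted ball (Lemma \ref{lem:G2}), or a red/blue ball (Lemma \ref{lem:G3}).

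\textbf{Effect of the extra ball.} Its only effect on the graph is through the out-degree $Z_1$ of vertex $1$ (and, in Lemma \ref{lem:G2}, possibly through the threshold test for bin $1$). In the Lemma \ref{lem:G3} setting, a blue extra ball can only decrease $Z_1$. A red extra ball can increase $Z_1$ by at most one. In Lemma \ref{lem:G2}, the threshold test can flip, so $Z_1$ may jump from $0$ to $Y_1+1$. In every case, $Z_1$ is stochastically dominated by $1 + Z_1^{\text{bad}}$, where $Z_1^{\text{bad}}$ is the out-degree that vertex $1$ would have under the worst-case conditioning $X_1 \ge (1+\epsilon)h$ (respectively, the worst-case conditioning that bin $1$ is blue-underloaded). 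Such a worst case still has an exponential tail: by Poisson tail bounds, $\Pr[Y_1 \ge j \mid X_1 \text{ large}]$ decays like $e^{-\Omega(j)}$ after sampling with probability $p$. After Poissonization, Lemmas \ref{lem:poissonexp} and \ref{lem:poissonprob} still apply to the other balls because $|C_1|$ is monotone in them. The out-degrees of vertices $2,\dots,n$ are then unchanged and independent, with tails at most $c^j$.

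\textbf{Exploration bound.} Next, bound $\E|C_1|$ when vertex $1$ alone has out-degree $D$ with $\E[D] = O(1)$ and all other vertices satisfy the hypothesis of Corollary \ref{cor:cuckooproperties}. Grow $C_1$ from each of the $D$ edges out of vertex $1$, and also from the in-edges into vertex $1$. Each edge endpoint is uniform, so the component reached through each edge is dominated by the component of a uniformly random vertex in the unmodified graph, whose expected size is $O(1)$ by the second bullet of Corollary \ref{cor:cuckooproperties}. The in-edges into vertex $1$ come from the other vertices; their number has expectation $O(1)$ and each leads to a component of expected size $O(1)$. Summing gives $\E|C_1| \le O(1)\cdot(1+\E D) = O(1)$ by Wald's identity, using that $D$ is independent of the other vertices' randomness.

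\textbf{Main obstacle.} The delicate step is the Lemma \ref{lem:G2} case, where conditioning on an extra ball can push bin $1$ over its threshold, making $Z_1$ the full sampled count $Y_1$ instead of $0$. To handle this, I will show directly that the conditional law of $Y_1$ given $X_1 \ge (1+\epsilon)h$ still has $\E[Y_1]=O(1)$. This follows from $\E[Y_1 \mid X_1] = pX_1$ together with $\E[X_1 \mid X_1 \ge (1+\epsilon)h]\cdot\Pr[\cdot]$ being small. Concretely, $\E[Y_1 \mathbf{1}\{X_1 \text{ large}\}]$ is bounded by $p\,\E[X_1\mathbf{1}\{X_1\ge(1+\epsilon)h\}] = O(1)$. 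The event of being pushed over the threshold requires $X_1 = (1+\epsilon)h - O(1)$, and given that event, $Y_1$ is concentrated around $p(1+\epsilon)h$, which is not $O(1)$ for large $h$. I would therefore instead use that the expected component size is linear in $\E D$, and bound the contribution from this boundary event by $\Pr[\text{boundary}]\cdot ph = O(p\sqrt{h}\,e^{-\Omega(h)})\cdot h$-type terms, which are $O(1)$.
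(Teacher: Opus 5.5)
Your overall architecture works, but one step as written is false and must be removed.

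\textbf{The false step.} You claim that $Z_1$ is dominated by $1+Z_1^{\mathrm{bad}}$, a worst-case conditioned degree with an $e^{-\Omega(j)}$ tail. This fails. Given $X_1\ge(1+\epsilon)h$, the sampled count $Y_1$ concentrates near $p(1+\epsilon)h$. So $\Pr[Y_1\ge j\mid X_1\ge(1+\epsilon)h]$ is close to $1$ for every $j\le ph/2$, and $h$ is not bounded: it is of order $m/n$ in the applications. The same happens for Lemma \ref{lem:G3}: given that bin $1$ is blue-underloaded, the red count is $1+\mathrm{Poisson}(h_2)$, with mean $h_2+1$. Conditioning on the threshold event discards exactly what keeps the degree small, namely that this event has probability $e^{-\Omega(h)}$.

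\textbf{Repairs.} Your last paragraph fixes this for Lemma \ref{lem:G2} with the unconditional first moment $\E[Z_1]\le p\,\E[X_1\mathbf{1}\{X_1>(1+\epsilon)h\}]=O(1)$; you do not even need the boundary-event split. You must do the same for Lemma \ref{lem:G3}, where the domination claim is all you offer. After Poissonization the red and blue counts are independent, so a red $b$ gives $\E[Z_1]\le p(1+h_2)\Pr[\mathrm{Poisson}(h_1)<(1-\epsilon)h_1]\le p(1+h_1)e^{-\Omega(h_1)}=O(1)$, using $h_2\le h_1$. A blue $b$ only lowers $Z_1$.

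The in-edge half of your exploration step is also loose, because in-neighbours of vertex $1$ have size-biased out-degrees. It is cleaner to let $H$ be the graph with vertex $1$'s out-edges deleted. $H$ satisfies the hypotheses of Corollary \ref{cor:cuckooproperties} verbatim, since out-degree $0$ trivially has the tail. Counting edges, $|C_1|\le D+|C_1^H|+\sum_{k\le D}|C^H_{u_k}|$, where $D$ and the uniform endpoints $u_k$ are independent of $H$. This gives $\E[|C_1|]\le O(1)\cdot(1+\E[D])$.

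\textbf{Comparison with the paper.} With these repairs your proof is correct, but it follows a different route. The paper Poissonizes all balls except $b$ and re-checks that the tail bounds \eqref{eq:Zsmall2}, \eqref{eq:Zsmall1}, \eqref{eq:Zsmall} still hold at vertex $1$ with $b$ deterministically present. They do, because the unconditional tail retains the $e^{-\Omega(h)}$ factor. Every vertex then meets the hypotheses of Corollary \ref{cor:cuckooproperties}, which is applied unchanged. Your route asks only for $\E[Z_1]=O(1)$ at the special vertex, paid for with an explicit decomposition through its out-edges. That makes it more general, since it tolerates a special vertex with no tail bound at all. The paper's route is shorter because it reuses the corollary wholesale.
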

\begin{proof}
We can follow the same proofs as before, but excluding $b$ from the balls that we Poissonize (i.e., applying Poissonization to all of the balls \emph{except} for $b$). Additionally, when establishing \eqref{eq:Zsmall2}, \eqref{eq:Zsmall1}, and \eqref{eq:Zsmall}, a small amount of additional care is needed to note that the deterministic presense of ball $b$ in bin $1$ does not violate the correctness of any of the three inequalities. 
\end{proof}

\subsection{Proving Theorem~\ref{thm:full-theorem}}
We are now ready to prove Theorem~\ref{thm:full-theorem}.
\thmfulltheorem*
\begin{proof}
    By Theorem~\ref{thm:offline-slice-and-spread}, Slice and Spread has cumulative overload $O(n)$ with high probability in $n$ and expected recourse $O(\log\log (m/n))$. Furthermore, Slice and Spread is a good pre-baking algorithm by construction. Applying Theorem~\ref{thm:postprocess} completes the proof.
\end{proof}
\section{Conjectures and Open Questions}

In this paper, we present a history-independent two-choice algorithm that achieves $O(1)$ expected overload and $O(\log \log (m/n))$ expected recourse. We conjecture that these bounds are optimal, and, in particular, that $O(1)$ overload requires $\Omega(\log \log (m/n))$ recourse. On the lower-bound side, even proving any $\omega(1)$ bound on expected recourse (for either weakly or strongly history independent solutions) would be quite interesting. 

Even if one does not care about history independence, the results in this paper offer a new state of the art for dynamic two-choice load balancing with recourse. Here, again, it is an interesting open question whether one can achieve $O(1)$ expected overload with $o(\log \log (m/n))$ expected recourse. Note that, in general, one of the main challenges for this problem is the following: One must be careful to handle \emph{reappearance dependencies} \cite{bansal2022balanced,agrawal2024distributed}, which occur if the same element is inserted, deleted, and resinserted many times, in which case the random hashes used by the element have already affected the state of the data structure in the past, and therefore can no longer be treated as random. History independence dodges this issue by eliminating history-related dependencies entirely. It is possible that history-dependent algorithms may be able to do better, but such algorithms would likely require nontrivial techniques to handle reappearance dependencies. 

Finally, closely related to the problem of constructing dynamic two-choice load-balancing strategies is the problem of designing time-efficient implementations of bucketized cuckoo hashing \cite{dietzfelbinger2007balanced,frieze2018balanced}. Here, one cares both about time efficiency and space efficiency (including the space needed to store metadata used by the algorithm). One of the most natural open questions in this area \cite{frieze2018balanced} is whether there exists a solution that supports space efficiency $1 - \epsilon$, using buckets of size $O(\epsilon^{-1})$, and with $O(1)$ expected-time insertions/deletions. This question, which is at least as hard as the dynamic two-choice load-balancing problem, remains open.

\section*{Acknowledgments}
This work was supported in part by NSF grants CCF-2247577, CCF-2106827, CNS-2504471, 2212746, 2044679, 2128519, the Packard fellowship, the Carnegie Mellon University CyLab Presidential Fellowship, a grant from 0xPARC, and a grant from the late Nikolai Mushegian.

\bibliographystyle{abbrv}
\bibliography{main, balls}

\appendix

\section{A Lower Bound for History Independent Greedy}\label{sec:greedylower}

In this section, we prove the following lower bound:
\propgreedylower*

Throughout the section, we will focus on a single insertion of an element $x^*$ into a set $\mathcal{S}$ of size $m - 1$, and we will choose $x^*$ to be smaller than all of the elements in $\mathcal{S}$ according to the canonical ordering used by HI Greedy. As in the upper bound, we consider two parallel worlds, called world 0 and world 1, in which we insert the balls $\mathcal{S}$ and $\mathcal{S} \cup \{x^*\}$ according to the canonical ordering. We will think of the insertions as being performed at the same time in the two worlds---time $i$ refers to the time immediately after the $(i - 1)$th insertion in world 0 and immediately after the $i$-th insertion in world $1$. 

By Fact \ref{fact:special}, we know that the only difference between the loads of the bins in these two worlds is that, at any given moment (after the first insertion), there is exactly one \defi{special bin} that contains one more ball in world 1 than it does in world 0. We will denote the load of the special bin at time $t$, including the extra ball that it contains in world 1, as $\special(t)$. 

Since the bin loads are essentially the same in the two world (with the exception of the special bin), we can feel free to focus our discussion on world 1. Throughout the rest of the section, when we refer to the load of a bin, we are implicitly referring to its load in world 1. 

As notation, we will use $\load{t}{i}$ to refer to the load of bin $i$ at time $i$. We will also let $x_1 = x^*$, and let $x_2, x_3, \ldots, x_{m - 1}$ denote the balls in $\mathcal{S}$ in the canonical ordering used to insert them.

Call a pair $(i, j) \in [n] \times [n]$ a \defi{critical tie} at time $t$ if, at time $t$, bin $i$ is the special bin and $\load{t}{j} = \special(t) - 1$. 
\begin{lemma}
If, immediately prior to ball $x_t$ being inserted, one of the pairs $(h_1(x_t), h_2(x_t))$ or $(h_2(x_t), h_1(x_t))$ is a critical tie, then with 50\% probability the ball $x_t$ gets allocated to a different bin in world 1 than in world 0. 
\label{lem:tietorecourse}
\end{lemma}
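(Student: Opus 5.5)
The plan is to compare the greedy decision for $x_t$ in the two worlds directly, and to get the factor $1/2$ from the symmetry between $h_1(x_t)$ and $h_2(x_t)$.

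First I fix the state immediately before $x_t$ is inserted. This state is determined by the hashes of $x_1,\ldots,x_{t-1}$. Since the hashes are fully random, it is independent of the pair $(h_1(x_t),h_2(x_t))$. By Fact~\ref{fact:special}, the two worlds agree on every bin except the special bin $i$. Write $L=\special(t)$; then bin $i$ has load $L$ in world 1 and $L-1$ in world 0.

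Next I condition on the unordered pair $\{h_1(x_t),h_2(x_t)\}=\{i,j\}$, where $(i,j)$ is a critical tie. Then $j\neq i$, and bin $j$ has load $L-1$ in both worlds, since only bin $i$ differs. In world 1 the loads of the two choices are $L$ and $L-1$, so greedy places $x_t$ in bin $j$ regardless of tie-breaking. In world 0 both choices have load $L-1$, so the tie is broken by $h_1$ and $x_t$ goes to bin $h_1(x_t)$. Hence the allocations differ exactly when $h_1(x_t)=i$.

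It remains to show that $\Pr[h_1(x_t)=i \mid \{h_1(x_t),h_2(x_t)\}=\{i,j\},\ \text{history}]=1/2$. The hashes $h_1(x_t),h_2(x_t)$ are i.i.d.\ uniform on $[n]$ and independent of the history. Swapping them is a measure-preserving bijection on outcomes with $h_1(x_t)\neq h_2(x_t)$. Since $i\neq j$, the ordered outcomes $(i,j)$ and $(j,i)$ are equally likely, which gives the conditional probability $1/2$. Averaging over all histories and all critical-tie pairs then gives the lemma.

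The only point needing care, and the main obstacle, is making sure the hypothesis ``one of the two orderings is a critical tie'' depends only on the history and the unordered pair. Otherwise the conditioning would bias which ordering occurs. This holds because the hypothesis is stated symmetrically in the two orderings, and whether a given pair is a critical tie depends only on the pre-insertion loads.
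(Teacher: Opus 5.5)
Your proof is correct. It makes the same two-world comparison as the paper, but the factor $1/2$ comes from a different source.

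\textbf{The paper's route.} It assumes without loss of generality that $h_1(x_t)$ is the special bin. It then says that in world $0$ the insertion sees two equal loads and ``picks one at random'', so the world-$0$ tie is treated as a fair coin flip.

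\textbf{Your route.} You use the tie-breaking rule that Section~\ref{sec:greedy} actually specifies for HI Greedy: ties go to $h_1(x_t)$. Under this rule the outcome is deterministic given the ordered pair of hashes. The two worlds disagree exactly when $h_1(x_t)$ is the special bin. When $h_2(x_t)$ is the special bin, both worlds place $x_t$ in bin $j$. You then get the $1/2$ from the exchangeability of $(h_1(x_t),h_2(x_t))$, conditioned on the history and on the hypothesis, which is symmetric in the two hashes. You correctly flag that symmetry as the point needing care.

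\textbf{What each buys.} The paper's argument is shorter. However, its ``without loss of generality'' step is valid only under random tie-breaking, which is not the algorithm as defined. Under the $h_1$ rule the two orderings behave differently, so your version is the one consistent with the definition.

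The price of your version is that the $1/2$ holds only conditionally on events symmetric in $h_1(x_t)$ and $h_2(x_t)$, not on the ordered pair. That is exactly the form in which Lemma~\ref{lem:tiestorecourse} uses it. There the conditioning is on the next insertion hitting one of $j$ or $k$, and then hitting both, and both events are symmetric.
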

\begin{proof}
Without loss of generality, bin $h_1(x_t)$ is the special bin and bin $h_2(x_t)$ has load one smaller than that of the speical bin at time $t - 1$. Then, in world $0$ the insertion sees equal loads in bins $h_1(x_t)$ and $h_2(x_t)$ and picks one at random; but in world $1$, the insertion sees a larger load in bin $h_1(x_t)$ and picks bin $h_2(x_t)$ deterministically.
\end{proof}

Say that a critical tie $(i, j)$ is \defi{created} at time $t$ if the critical tie occurs at time $t$ but not at time $t - 1$. Using Lemma \ref{lem:tietorecourse}, we can relate the total number of critical ties that are created (over all time) to the total recourse incurred by HI Greedy.

\begin{lemma}
Let $Q$ be the total number of critical ties created across all insertions, and let $R$ be the total recourse of the algorithm. Then,
$$\E[R] \ge \Omega(\E[Q] /n) - O(1).$$
\label{lem:tiestorecourse}
\end{lemma}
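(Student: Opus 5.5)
We plan to charge recourse to critical ties by looking at which insertions \emph{use} a critical tie, and to relate the number of such insertions to the number of ties created. Write $R \ge \sum_{t \ge 2} \ind(\text{$x_t$ is allocated differently in the two worlds})$. For each $t$, let $N_t$ be the number of critical ties present immediately before $x_t$ is inserted. Conditioned on the history up to time $t-1$, the hashes $h_1(x_t), h_2(x_t)$ are fresh and uniform. So the probability that $(h_1(x_t),h_2(x_t))$ or its reverse is a critical tie is at least $N_t/n^2$ (each ordered pair is hit with probability $1/n^2$). By Lemma~\ref{lem:tietorecourse}, this gives
\[
\E[R] \ge \tfrac12 \sum_t \E[N_t]/n^2 .
\]
This is off by a factor of $n$ from the target, so the bound must come from a lifetime argument rather than a pointwise count.

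The key step is a lifetime argument: a created critical tie $(i,j)$ persists until some insertion touches bin $i$ or bin $j$, because loads change only on insertion and the special bin changes only when a ball is allocated differently. Each insertion touches a fixed bin with probability at most $2/n$. Hence a tie survives, in expectation, roughly $\Omega(n)$ subsequent insertions before either of its bins is hit. We can make this precise by analyzing the first insertion after creation whose choice set meets $\{i,j\}$. With conditional probability $\Omega(1/n)$, that insertion has choice set exactly $\{i,j\}$: given that the ball touches $\{i,j\}$, its other choice is the complementary bin with probability about $1/n$.

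So we would argue per created tie. Consider the first later insertion that touches $\{i,j\}$, where "later" means among the remaining insertions. If it chooses exactly the pair $\{i,j\}$, then Lemma~\ref{lem:tietorecourse} yields recourse with probability $1/2$. This happens with probability $\Omega(1/n)$, provided enough insertions remain that some insertion touches $\{i,j\}$ at all. Each such event is associated with a distinct insertion $x_t$, namely the one that destroys the tie. A single insertion can destroy many ties, but it can use only one pair, so the charging is injective on the successful events.

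The obstacle is ties created near the end of the sequence, when few insertions remain. Ties created in the last $\Theta(n)$ steps may never be tested. We would handle this with the additive slack: in the regime of interest, $Q$ counts ties over $m$ insertions, and the contribution of the final stretch is bounded so that losing it costs only the $-O(1)$ term. The cleanest route is to first show that at most $O(n)$ ties are created in expectation per $O(n)$ insertions; the special bin has a bounded number of partners at its load level, in expectation $O(n)$ bins. The last window then contributes $O(n)/n = O(1)$ to $\E[Q]/n$, which is absorbed into the $O(1)$.
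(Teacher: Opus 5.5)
Your core argument is the paper's. You charge each created tie to the first later insertion whose choice set meets its pair; the tie survives until then, since loads and the identity of the special bin change only when one of its two bins is hit. That insertion has exactly that pair with conditional probability $\Omega(1/n)$, and you then apply Lemma~\ref{lem:tietorecourse}. Your remark that one insertion realizes only one exact pair is what actually justifies the paper's ``distinct future insertion'' matching.

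The difference is how you handle the end of the sequence. The paper imposes the Nice Final Insertion Assumption, that $h_1(x_m)$ is the special bin. Then every tie created before the last step is hit with certainty, and only the at most $n-1$ ties created at the final step are lost, which costs at most $1$ in $\E[Q]/n$ for free.

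Your truncation of the last $\Theta(n)$ steps avoids altering the last ball's hash distribution, but it needs two facts you should state.
\begin{enumerate}
\item A tie created with at least $Cn$ steps remaining is hit at all with probability at least $1-(1-2/n)^{2Cn}\ge 1-e^{-4C}$. Because the hashes are fresh, the first hitting insertion is the exact pair with probability $1/(2(n-1))$, whenever it occurs. You only gesture at this with ``provided enough insertions remain''; note that even ties created outside the final window can go untested.
\item Your justification for ``$O(n)$ ties per $O(n)$ insertions'' does not work as written. Bounding the special bin's partners by $O(n)$ at each step only gives $O(n^2)$ over the window. The correct reason is as follows. An insertion that avoids the special bin lands in the same bin in both worlds and changes a single load, so it creates at most one tie. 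Creating more requires hitting the special bin, which happens with probability at most $2/n$ and then creates at most $n-1$ ties. Hence each step creates at most $3$ new ties in expectation.
\end{enumerate}
With these filled in, your route is valid; it also covers the case $m=O(n)$ trivially.
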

\begin{proof}
To simplify the proof, let us assume that the final ball $x_m$ to be inserted has hash $h_1(x_m)$ equal to whatever the special bin is at time $m - 1$. Note that this assumption is without loss of generality, as it changes $R$ and $Q$ each by at most one. Call this assumption the \emph{Nice Final Insertion Assumption}.

Recall that a critical tie is created when a bin $j$ has load $\special(t) - 1$ (and when this was not true for time $t - 1$). Suppose this happens at some time $t < m$, and let $k$ be the special bin at that time. Consider the next insertion $x$ to have at least one of its hashes $h_1(x), h_2(x)$ equal to at least one of $j$ or $k$ (such an $x$ must exist by the Nice Final Insertion Assumption). Given that at least one of the insertion's hashes are $j$ or $k$, the probability that the insertion's hashes are \emph{both} $j$ and $k$ is $\Omega(1/n)$. If this happens, then the insertion $x$ is incurs recourse with probability at least one half (by Lemma \ref{lem:tietorecourse}). 

The above process matches each critical tie that is created prior to time $m$ to a (distinct) future insertion that, with probability at least $\Omega(1/n)$, incurs recourse. It follows that $\E[R] \ge \E[Q] \cdot \Omega(1/n) - \Omega(1)$.
\end{proof}

Thus, to prove Proposition \ref{prop:greedylower}, it suffices to show that the expected number of critical ties that are created over all insertions is $\Omega(m)$. This means that a constant fraction of insertions (in expectation) create critical ties.

The intuition for why this would be the case comes from the following lemma, which says that, at any given moment, we expect most of the bins to have heights within a band of width $O(1)$:

\begin{lemma}
Suppose we perform $m$ insertions using the greedy algorithm. For any positive constant $c_1$, there exists a positive constant $c_2$ such that the following is true: after the $t$-th insertion, the expected number of bins with loads in $[t/n - c_2, t/n + c_2]$ is at least $(1 - 1/c_1)n$. 
\label{lem:bandexp}
\end{lemma}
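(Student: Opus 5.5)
Our plan is to deduce the lemma from the known structure of the greedy process: the gap between the maximum load and the average is small, and a potential function forces the load distribution to be tightly concentrated around the average. Concretely, we would invoke the analysis of Berenbrink, Czumaj, Steger, and V\"ocking~\cite{BCSV00proc}, or more conveniently the exponential-potential analysis of two-choice allocation. Fix $t$, write $\bar\ell = t/n$, and define
\[
\Phi(t) = \sum_{i=1}^n \left(e^{\alpha(\load{t}{i} - \bar\ell)} + e^{-\alpha(\load{t}{i} - \bar\ell)}\right)
\]
for a sufficiently small constant $\alpha>0$. The key step is the standard one-step drift inequality
\[
\E[\Phi(t+1)\mid \Phi(t)] \le \left(1 - \tfrac{\gamma}{n}\right)\Phi(t) + C,
\]
for constants $\gamma, C > 0$. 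This holds because greedy puts each ball in the lighter of two random bins, which biases insertions away from the heaviest bins and toward the lightest ones.

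Iterating the drift inequality from $\Phi(0)=2n$ gives $\E[\Phi(t)] \le 2n + Cn/\gamma = O(n)$ for every $t$, uniformly in $t$. This uniformity matters because $m$ may be as large as $n^{2-\Omega(1)}$, so the bound must not degrade with $t$.

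From $\E[\Phi(t)] \le Kn$, Markov's inequality applied per bin finishes the argument. A bin with load outside $[\bar\ell - c_2, \bar\ell + c_2]$ contributes at least $e^{\alpha c_2}$ to $\Phi(t)$. Hence the expected number of such bins is at most $Kn e^{-\alpha c_2}$. Choosing $c_2 = \alpha^{-1}\ln(K c_1)$ makes this at most $n/c_1$, which is the claimed statement.

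The main obstacle is establishing the drift inequality with constants independent of $t$, $m$, and $n$. We would follow the Peres--Talwar--Wieder $(1+\beta)$-choice analysis, of which greedy two-choice is the special case. That analysis handles the upper and lower exponential terms separately. The upper term is controlled because heavy bins receive probability at most roughly $1/n$ times a factor less than one. The lower term is controlled because light bins receive probability at least $(1+\delta)/n$. When the load vector is very unbalanced, the two terms cannot both fail to decrease, so a case split on whether the quantile at $n/4$ is above or below the average gives the contraction. Since this is a known theorem, we would cite it rather than reprove it. The only care needed is to check that tie-breaking by $h_1$ does not change the probability vector's majorization properties used in that argument.
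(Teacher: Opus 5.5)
Your proposal is correct and matches the paper's proof. The paper simply cites Theorem~2.1 of \cite{talwar2014balanced}, i.e.\ the exponential-potential bound $\E\bigl[\sum_i e^{a|t/n - \ell_i(t)|}\bigr] = O(n)$ uniformly in $t$, which is the same potential-function analysis you invoke. You then spell out the counting step the paper leaves implicit: each out-of-band bin contributes at least $e^{\alpha c_2}$, so taking $c_2 = \alpha^{-1}\ln(Kc_1)$ suffices. Your tie-breaking check also resolves as you expect, since ties only redistribute probability among bins of equal load and so do not affect the potential's drift.
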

\begin{proof}
This follows directly from Theorem 2.1 of \cite{talwar2014balanced}, which says that there exists a positive constant $a$ for which 
$$\E\left[\sum_{i = 1}^n e^{a \cdot |t/n - \load{t}{i}|}\right] \le O(n).$$
\end{proof}

With the help of Fact \ref{fact:special}, we can use McDiarmid's inequality (see Theorem \ref{thm:mcdiarmid} in Appendix \ref{app:mcdiarmid}) to get a high-probability version of Lemma \ref{lem:bandexp}.
\begin{lemma}
Suppose $m \le n^{2 - \Omega(1)}$, and suppose we perform $m$ insertions using the greedy algorithm. For any positive constant $c_1$, there exists a positive constant $c_2$ such that the following is true with high probability in $n$. After each insertion $t = 1, 2, \ldots m$, at least $1 - 1/c_1$ fraction of the bins have loads within $c_2$ of $t/n$. 
\label{lem:bandhigh}
\end{lemma}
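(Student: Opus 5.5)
The plan is to fix an insertion time $t$ and apply McDiarmid's inequality (Theorem~\ref{thm:mcdiarmid}) to the number of bins whose load lies in a band around $t/n$. A union bound over the $m \le n^{2}$ insertion times then finishes the argument.

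\textbf{Step 1: choose the band.} Given $c_1$, apply Lemma~\ref{lem:bandexp} with constant $2c_1$ to get $c_2$ such that, after insertion $t$, the expected number of bins with load in $[t/n - c_2, t/n + c_2]$ is at least $(1 - 1/(2c_1))n$. Let $F_t$ be the number of bins in this band after insertion $t$. $F_t$ is a function of the independent hash pairs $(h_1(x_s), h_2(x_s))$ for $s \le t$.

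\textbf{Step 2: Lipschitz bound.} Change the hash pair of one ball $x_s$. Then compare two executions: the greedy process with $x_s$ omitted, and each of the two versions with $x_s$ present. By Fact~\ref{fact:special}, each version differs from the omitted one in exactly one bin's load, by one. So the two versions differ in at most two bins' loads, each by one, at every later time. Hence $F_t$ changes by at most $O(1)$, so $F_t$ is $O(1)$-Lipschitz in its $t \le m$ independent coordinates.

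\textbf{Step 3: concentrate and union bound.} McDiarmid's inequality gives
$$\Pr\bigl[F_t \le \E[F_t] - \lambda\bigr] \le \exp\bigl(-\Omega(\lambda^2/m)\bigr).$$
Take $\lambda = n/(2c_1)$. Since $m \le n^{2-\delta}$ for some constant $\delta > 0$, the exponent is $\Omega(n^2/m) \ge \Omega(n^{\delta})$, so the failure probability is $e^{-\Omega(n^{\delta})}$, which is $n^{-\omega(1)}$. Therefore, with high probability, $F_t \ge (1 - 1/(2c_1))n - n/(2c_1) = (1 - 1/c_1)n$. A union bound over all $t \le m \le n^2$ keeps this high-probability.

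\textbf{Main obstacle.} The delicate step is Step 2, the Lipschitz argument. Fact~\ref{fact:special} is stated for adding one ball, so a change in one coordinate has to be written as a removal followed by an addition; that is what gives the bound of two differing bins. Ties broken by $h_1$ are deterministic given the hashes, so Fact~\ref{fact:special} applies as stated. If the paper's greedy instead breaks ties with randomness, that randomness should also be treated as independent per-ball coordinates.
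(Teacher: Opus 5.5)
Your proposal is correct and follows essentially the same route as the paper: apply McDiarmid's inequality to the number of bins in the band, get the $2$-Lipschitz property from Fact~\ref{fact:special}, lower-bound the expectation with Lemma~\ref{lem:bandexp}, and finish with a union bound over $t \le m$. Your version is slightly more explicit than the paper's: you justify the Lipschitz constant via the execution with $x_s$ omitted, and you take the deviation to be $n/(2c_1)$ directly rather than $\sqrt{m}\log n = o(n)$.
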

\begin{proof}
It suffices to prove the bound for $t = m$, since it then follows by a union bound over $t \in [1, m]$. 

Define $X_1, X_2, \ldots, X_m$, where $X_i$ represents the hashes used by the $i$-th insertion. Let $F(X_1, X_2, \ldots, X_m)$ denote the number of bins with heights in the range $[t/n - c_2, t/n + c_2]$. By Fact \ref{fact:special}, $F$ is $2$-Lipschitz (i.e., changing a value of some $X_i$ changes $F$ by at most $2$). Since $X_1, X_2, \ldots, X_m$ are independent, it follows by McDiarmid’s inequality that 
$$\Pr[F - \E[F] \ge \sqrt{m} \log n] \le 1 / n^{\omega(1)}.$$
Since $m \le n^{2 - \Omega(1)}$, we have with high probability in $n$ that $F$ is within $o(m)$ of its mean.

To complete the proof, it suffices to show that $\E[F] \ge (1 - f(c_2)) n$ for some function $f$ that goes to $0$ as $c_2$ goes to infinity. This follows from Lemma \ref{lem:bandexp}.
\end{proof}

Building on this, we can also get the following lemma, which says that, although most bins have loads that are in a band of size $c_2$ around $t/n$, there will also still be a non-trivial $\Omega(1)$ fraction that are below the band. 
\begin{lemma}
Suppose $m \le n^{2 - \Omega(1)}$, and suppose we perform $m$ insertions using the greedy algorithm. For any sufficiently large positive constant $c_2$, there exists a positive constant $c_3$ such that the following is true with high probability in $n$. At any time $t \ge 3 c_2 n$, at least a $1/c_3$ fraction of the bins have loads smaller than $t/n - 2 c_2n$.
\label{lem:belowband}
\end{lemma}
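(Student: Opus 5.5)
I read the threshold in the statement as $t/n - 2c_2$, i.e.\ a constant distance below the average; the factor $n$ in ``$2c_2 n$'' looks like a typo, since $t/n - 2c_2 n$ is negative for $t$ near $3c_2 n$. With that reading, the plan is to combine Lemma~\ref{lem:bandhigh} at an earlier time with a ``no hash hits the bin'' argument over a window of $\Theta(n)$ insertions. The key point is that a bin can only gain balls from insertions whose hashes include it.

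Fix a time $t \ge 3c_2 n$ and set $s = t - 3c_2 n \ge 0$.
\begin{enumerate}
\item Apply Lemma~\ref{lem:bandhigh} with, say, $c_1 = 2$. With high probability in $n$, at time $s$ the set $A$ of bins with load at most $s/n + c_2$ has size at least $n/2$. If I need the band constant from that lemma to match the $c_2$ here, I will instead take $c_2$ at least as large as the constant it provides, which is allowed since the claim is for all sufficiently large $c_2$.
\item Consider the $3c_2 n$ insertions $x_{s+1},\dots,x_t$. Their hashes are independent of the state at time $s$, and hence of $A$. Any bin of $A$ that appears in none of the $6c_2 n$ hash values $h_1(x_r), h_2(x_r)$ for $s < r \le t$ keeps its load. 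That load is at most $s/n + c_2 = t/n - 2c_2$.
\item To get the strict inequality in the statement, widen the window slightly, e.g.\ to $\lceil 3c_2 n + n\rceil$ insertions. This gives load at most $t/n - 2c_2 - 1$. The hypothesis $t \ge 3c_2 n$ is used only to make the window fit after time $0$; for the handful of $t$ with $s < 0$ after widening, I would compare with time $0$ directly or enlarge the constants.
\end{enumerate}

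Conditioned on $A$, each bin of $A$ avoids all window hashes with probability $(1-1/n)^{6c_2 n + O(n)} \ge e^{-O(c_2)}$. So the expected number $U$ of untouched bins in $A$ is at least $(n/2)e^{-O(c_2)}$.

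For concentration, view $U$ as a function of the $O(c_2 n)$ independent hash pairs in the window. Changing one pair changes $U$ by at most $2$. McDiarmid's inequality (Theorem~\ref{thm:mcdiarmid}) then gives deviation at most $O(\sqrt{c_2 n}\log n) = o(n)$ with probability $1 - n^{-\omega(1)}$. Hence, with high probability, at least $n e^{-O(c_2)}/4$ bins have load below $t/n - 2c_2$, and $c_3 = 4e^{O(c_2)}$ works. A union bound over all $t \le m \le n^{2}$ keeps the high-probability guarantee, matching how Lemma~\ref{lem:bandhigh} is used.

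The main obstacle I expect is the conditioning. I need $A$ to depend only on insertions up to time $s$ and the window hashes to be fresh; this holds because greedy is online and the hashes are fully random. The only other care needed is the strict inequality and reconciling the band constant of Lemma~\ref{lem:bandhigh} with the $c_2$ in this statement.
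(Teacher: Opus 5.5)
Your proposal is correct and follows essentially the same route as the paper: apply Lemma~\ref{lem:bandhigh} at time $t - 3c_2 n$ to get $\Omega(n)$ bins at or below $t/n - 2c_2$, observe that each avoids every window hash with probability $e^{-O(c_2)}$, and finish with McDiarmid's inequality. Your reading of the threshold as $t/n - 2c_2$ is the intended one. Your concentration step applies McDiarmid only to the $O(c_2 n)$ window hashes, conditioned on the prefix; this is a slightly cleaner version of the paper's appeal to ``the same argument as Lemma~\ref{lem:bandhigh}'', and it avoids needing Fact~\ref{fact:special} at that point.
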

\begin{proof}
Since $c_2$ is a sufficiently large positive constant, we know by Lemma \ref{lem:bandhigh} that with high probability in $n$ the following is true: at time $t - 3c_2n$, there are $\Omega(n)$ bins $A$ with heights less than $(t - 2 c_2) n$. During insertions $t - 3c_2n + 1, \ldots, t$, the probability that a given bin $a \in A$ manages to \emph{avoid} any new elements hashing to it is at least
$(1 - 1/n)^{O(n)} = \Omega(1)$. The expected number of bins that, at time $t$, have load less than $t - 2c_2 n$ is therefore at least $ \Omega(n)$. By McDiarmid’s Inequality (applied in the same way as in Lemma \ref{lem:bandhigh}), the number of such bins is within $o(n)$ of its mean with high probability in $n$. It follows that, with high probability in $n$, there are $\Omega(n)$ such bins. 
\end{proof}

The main difficulty in proving Proposition \ref{prop:greedylower} is that the $\special(t)$ is not simply  $\load{t}{i}$ for a uniformly random bin $i$. Rather, the load of the special bin evolves according to a more intricate process. The next lemma says that, nonetheless, we do expect $\special(t)$ to typically be within $O(1)$ of $n/t$. 

\begin{lemma}
Let $c$ be a sufficiently large positive constant. The expected number of times $t \in [m]$ at which $\special(t) \in [t/n - c, t/n + c]$ is $\Omega(m)$.
\label{lem:specialband} 
\end{lemma}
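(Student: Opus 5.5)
The plan is to track how the special bin moves in world 1 and show that it cannot spend most of its time far from the band $[t/n - c, t/n + c]$. The mechanism is as follows. If the special bin sits at height far \emph{above} $t/n$, it is essentially frozen: greedy rarely places balls there, so its height stays put while $t/n$ grows, and the deviation shrinks. The special bin changes identity only when some ball $x_t$ has one hash equal to the special bin and lands differently in the two worlds. If the special bin is far \emph{below} $t/n$, it receives balls at a rate of about $2/n$ per step, roughly twice the average rate. Both effects push $\special(t) - t/n$ back toward $0$. The proof therefore reduces to a drift argument on the quantity $D(t) \coloneqq \special(t) - t/n$, which is coupled to the (nearly stationary) load profile controlled by Lemmas~\ref{lem:bandhigh} and~\ref{lem:belowband}.

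\textbf{Step 1 (the typical profile).} Condition on the high-probability events of Lemmas~\ref{lem:bandhigh} and~\ref{lem:belowband}. At every time $t \ge 3c_2 n$, at least a $1 - 1/c_1$ fraction of the bins lie in $[t/n - c_2, t/n + c_2]$, and an $\Omega(1)$ fraction lie well below. The failure probability contributes only $m/\poly(n)$ to the count, so it is negligible.

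\textbf{Step 2 (the special bin cannot stay high).} Suppose $D(t) > c_2 + 1$. Each insertion that hashes to the special bin pairs it with a uniformly random other bin. With probability $1 - 1/c_1 - o(1)$ that other bin is lower, so greedy avoids the special bin. Hence, per step, the special bin's height increases with probability at most $O(1/(c_1 n))$, while $t/n$ increases by $1/n$. Over a window of $O(n)$ steps, $D$ therefore decreases by $\Omega(1)$ in expectation. The one exception is a change of identity of the special bin, which moves it to the partner bin; that partner bin is, by definition, the lower of the two, so this move only helps.

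\textbf{Step 3 (the special bin cannot stay low).} Suppose $D(t) < -c_2$. Then a $1 - 1/c_1$ fraction of the other bins are higher, so any ball hashing to the special bin goes there, giving expected growth of about $(2 - O(1/c_1))/n$ per step. Again $D$ drifts upward at rate $\Omega(1/n)$ per step. Identity changes can only move the special ball into a bin at height at least the current one, so they too do not push $D$ downward.

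\textbf{Step 4 (conclusion).} Combining Steps 2 and 3, $D$ is a process with increments of size $O(1)$ occurring at rate $O(1/n)$, together with a deterministic $-1/n$ drift, and it has drift of order $\Omega(1/n)$ toward the interval $[-c_2, c_2+1]$ whenever it lies outside. A standard potential argument makes this precise. Take $\Phi = e^{a|D|}$ for a small constant $a$. Then $\E[\Phi(t+1) - \Phi(t)] \le -\Omega(a/n)\,\Phi(t) + O(1/n)$ outside the band. Summing over $t \le m$ (and using $m \ge n$, $\Phi(0)=O(1)$) gives $\sum_t \E[\Phi(t)] = O(m)$. By Markov's inequality, for a suitably large constant $c$ the expected number of times with $|D(t)| > c$ is at most $m/2$, so $\special(t)$ lies in the band for $\Omega(m)$ times. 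The first $3c_2 n$ steps are handled trivially when $m \gg n$, and otherwise loads are $O(1)$ anyway.

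The main obstacle is Step 2/3's dependence. Conditioning on the profile of Step 1 is not independent of the special bin's history, since the special bin is itself selected by the dynamics. I would handle this by using the high-probability (union over all $t$) form of Lemma~\ref{lem:bandhigh}, which holds simultaneously for all times regardless of where the special bin is. This lets the drift bounds use only the worst case over profiles satisfying the band property, at the cost of replacing exact drift computations with one-sided inequalities.
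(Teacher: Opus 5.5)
Your drift mechanism is the same as the paper's. Both arguments track $D(t)=\special(t)-t/n$ and use Lemma~\ref{lem:bandhigh}: when most bins lie in a constant-width band, a ball hitting the special bin is almost always diverted when $D(t)$ is large and almost always accepted when $D(t)$ is very negative. This gives drift of order $1/n$ toward the band. The fact you really need in Steps 2--3 is that $\special$ is nondecreasing and increments by one exactly when the ball's other choice has world-1 load at least $\special(t)$. So an identity change can raise $\special$ by one when the partner ties, contrary to your ``by definition the lower of the two''. That case is already covered by your $O(1/(c_1 n))$ bound, so no harm is done. Lemma~\ref{lem:belowband} is not needed.

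You diverge in the potential, and that is where your proposal has an unfilled step.

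The paper uses the linear potential $|D(t)|$. It takes $X(t)$ to be the event, determined by the state at time $t$, that the special bin is outside the band while at least $90\%$ of bins are inside it. It shows the expected increment is at most $-1/(2n)$ on $X(t)$ and at most $3/n$ otherwise. Since the increments telescope to $|D(m)|\ge 0$, this gives $\sum_t\Pr[X(t)]\le 6\sum_t\Pr[\overline{X}(t)]$, and discarding the $m/\poly(n)$ bad-profile times finishes. Because each step changes $|D|$ by at most $1$, bad-profile steps cost nothing beyond their probability.

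With $\Phi=e^{a|D|}$ this is no longer true. On a bad-profile step $\Phi$ can grow by a factor $e^{a}$, and a priori $\Phi(t)$ is only bounded by $e^{at}$. So ``the failure probability contributes only $m/\poly(n)$'' does not give $\sum_t\E[\Phi(t)]=O(m)$; you must separately control $\sum_t\E[\Phi(t)\mathbf{1}_{\mathrm{bad}(t)}]$.

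Conditioning on the global event of Lemma~\ref{lem:bandhigh}, as your Step 1 suggests, does not avoid this, because it changes the law of the future hashes. The drift has to be computed given the history, on a per-step event, as the paper does with $X(t)$.

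There are two fixes:
\begin{itemize}
\item Use $e^{2a|D(t)|}\le\sum_i e^{2a|\ell_i(t)-t/n|}$, the Talwar--Wieder bound quoted in Lemma~\ref{lem:bandexp} (for $a$ small enough), and Cauchy--Schwarz.
\item More simply, drop to the linear potential, which already yields $\Omega(m)$.
\end{itemize}
What the exponential potential buys is a stronger conclusion than this lemma needs: the fraction of times with $|D(t)|>c$ decays like $e^{-\Omega(c)}$.
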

\begin{proof}
Let $X(t)$ be the event that, at time step $t$, we have both that $\special(t) \not\in [t/n - c, t/n + c]$, and that at least 90\% of bins have loads in the range $[t/n - c, t/n + c]$. Define $\Delta(t) \coloneq |\special(t) - t/n| - |\special(t - 1) - (t - 1)/n|$. 

We will prove that, 
\begin{equation}
\E[\Delta(t) \mid X(t)] \le -0.5/n,
\label{eq:Deltadec1}
\end{equation}
and that
\begin{equation}
\E[\Delta(t) \mid \overline{X}(t)] \le 3/n.
\label{eq:Deltadec2}
\end{equation}

Before we prove \eqref{eq:Deltadec1} and \eqref{eq:Deltadec2}, let us first take them for granted, and see how to prove the lemma at hand. If  \eqref{eq:Deltadec1} and \eqref{eq:Deltadec2} hold, then it follows that
$$\E\left[\sum_t \Delta(t)\right] \le \sum_t \left(\Pr[\overline{X}(t)]\cdot 3/n -  \Pr[X(t)] \cdot 1 / (2n)\right).$$
But, by definition, $\sum_t \Delta(t) = |\special(m) - m/n| \ge 0$. Thus
$$\sum_t \Pr[\overline{X}(t)]\cdot 3/n  \ge \sum_t \Pr[X(t)] \cdot 1/(2n),$$
which implies that
$$\sum_t \Pr[\overline{X}(t)] \ge \Omega(m).$$
By a union bound,
$$\sum_t \Pr[\overline{X}(t)] \le \sum_t \Pr[Y(t)] + \sum_t \Pr[Z(t)],$$
where $Y(t)$ is the event that $\special(t) \in [t/n - c, t/n + c]$ and $Z(t)$ is the event that less than 90\% of bins have loads in $[t/n - c, t/n + c]$ at time $t$. By Lemma \ref{lem:bandhigh}, $\Pr[Z(t)] \le 1/\poly(n)$, so we can conclude that 
$$\sum_t \Pr[Y(t)] \ge \Omega(m),$$
which implies the lemma.

It remains to prove \eqref{eq:Deltadec1} and \eqref{eq:Deltadec2}. To prove \eqref{eq:Deltadec1}, let us condition on event $X(t)$, and consider two cases, depending on whether $\special(t) < t/n - c$ or $\special(t) > t/n + c$. 

Suppose event $X(t)$ occurs and that $\special(t) < t/n - c$. If the $t$-th insertion has one hash equal to the special bin and the other equal to one of the $\ge 90\%$  of bins with loads in $[t/n - 1, t/n + c]$, then $\special(t)$ gets incremented during the insertion. Such an increment occurs with probability at least $2 \cdot 0.9 / n$, and contributes $-1$ to $\Delta(t)$. The insertion also increases the quantity $t/n$ by $1/n$, which adds $1/n$ to $\Delta(t)$. It follows that 
\begin{equation}\E[\Delta(t) \mid X(t) \text{ and }\special(t) < t/n - c] \le -1 \cdot (2 \cdot 0.9 / n) + 1/n \le -1/(2n).
\label{eq:Deltaa}
\end{equation}

Next, suppose that event $X(t)$ occurs and that $\special(t) > t/n + c$. In this case, the only way that $\special(t)$ can increment during the $t$-th insertion is if the insertion has one hash equal to $X(t)$ and the other hash equal to one of the $\le 10\%$ of bins with loads $> tn + c$. Such an increment occurs with probability at most $2 \cdot 0.1 / n$, and contributes $1$ to $\Delta(t)$. The insertion also increases the quantity $t/n$ by $1/n$, which adds $-1/n$ to $\Delta(t)$. It follows that 
\begin{equation}\E[\Delta(t) \mid X(t) \text{ and }\special(t) > t/n + c] \le 0.2/n - 1/n \le -1/(2n).
\label{eq:Deltab}
\end{equation}

Combined, \eqref{eq:Deltaa} and \eqref{eq:Deltab} imply \eqref{eq:Deltadec1}. Finally, to prove \eqref{eq:Deltadec2}, suppose that event $X(t)$ does not occur. In this case, we can still use the following argument. The probability of $\special(t)$ incrementing during insertion $t$ is at most $2/n$, since at least one of the insertion's hashes would need to be the special bin; the expected contribution to $\Delta(t)$ from $\special(t)$ changing is therefore at most $2/n$. The insertion also increases $t/n$ by $1/n$, and this change contributes at most $1/n$ to $\Delta(t)$. It follows that 
$$\E[\Delta(t)] \le 2/n + 1/n = 3/n,$$
as required by \eqref{eq:Deltadec2}.
\end{proof}

We now get to the main argument in the section, where we lower bound the expected number of critical ties that are created over time. To simplify the argument, we begin by describing it in terms of the number of critical ties that occur during a random subwindow: 
\begin{lemma}
Suppose $m \le n^{2 - \Omega(1)} \cap \omega(n)$. Let $c$ be a sufficiently large positive constant, an let $c'$ be a sufficiently large constant as a function of $c$. Consider a time window of the form $[t, t + cn]$, where $t$ is selected uniformly at random from $t \in [3cn, m - cn]$.  The expected number of critical ties that are created during that window is $\Omega(n)$. 
\end{lemma}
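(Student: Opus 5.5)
The plan is to show that, inside the window $[t, t+cn]$, a constant fraction of the insertions create a critical tie with constant probability. We combine three earlier ingredients: Lemma~\ref{lem:specialband} (the special bin is often near the mean), Lemma~\ref{lem:bandhigh} (most bins lie in a band of width $O(1)$ around the mean), and Lemma~\ref{lem:belowband} (a constant fraction of bins lie below the band).

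First, fix the random start time. By Lemma~\ref{lem:specialband}, the expected number of $s\in[m]$ with $\special(s)\in[s/n-c,s/n+c]$ is $\Omega(m)$. Since $t$ is uniform on $[3cn, m-cn]$ and $m=\omega(n)$, averaging gives the following: with constant probability some time $s\in[t,t+cn/2]$ has the special bin inside the band. Condition on this, and also on the high-probability events of Lemmas~\ref{lem:bandhigh} and~\ref{lem:belowband} holding throughout the window. The failure probability is $1/\poly(n)$ and contributes nothing asymptotically.

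Second, count tie creations from time $s$ onward. A critical tie $(i,j)$ is created at time $u$ exactly when either (a) some bin $j$'s load changes to become $\special(u)-1$, or (b) the special bin's load changes and some bins are left at the new value minus one. I would lower bound event (a). When $\special(u)$ is within $c$ of $u/n$, the value $\special(u)-1$ lies in the band. Any inserted ball whose lighter choice has load exactly $\special(u)-2$ creates a new critical tie, unless that bin is already part of one. This fails only if it is the special bin itself, which is a negligible case.

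To show that load levels just below $\special(u)$ are hit with constant frequency, I would argue as follows. The band has $O(1)$ levels and contains $(1-1/c_1)n$ bins. Over $cn$ insertions, the mean $u/n$ rises by $c$. So by time $s+cn/2$, a constant fraction of the balls inserted must land on bins whose load at insertion time sits in any fixed level inside the band. Otherwise the band could not shift upward while keeping most bins within it, given Lemma~\ref{lem:bandhigh} at both endpoints. Lemma~\ref{lem:belowband} guarantees a reservoir of lower bins feeding upward, so bins genuinely pass through every band level rather than staying put. Each such passage through level $\special(u)-1$ while the special bin stays at that height yields a creation. The special bin itself increments only with probability $O(1/n)$ per step, so over $cn/2$ steps it stays within $O(1)$ of the mean, and within the band, with constant probability. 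Summing over $\Theta(n)$ insertions then gives $\Omega(n)$ expected creations.

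The main obstacle is the second step. Turning ``most bins are in an $O(1)$-width band and the mean moves by $c$'' into ``$\Omega(n)$ bins cross the specific level $\special(u)-1$ during a period when the special bin sits at that height'' needs a potential or flow argument. The crossing counts per level must be controlled jointly with the random walk of $\special(\cdot)$. I would try to handle this first by a flux argument: the net number of bins crossing from level $L-1$ to level $L$ equals the change in the count of bins at or above $L$. Lemma~\ref{lem:bandhigh} at times $s$ and $s+cn/2$ forces this change to be $\Omega(n)$ for every $L$ in the middle of the band, provided $c$ is large relative to $c_2$.
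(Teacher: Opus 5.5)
\textbf{There is a genuine gap: the special bin's height is never pinned down during the window.}

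Your flux count in the last paragraph is correct: the number of crossings into level $L$ equals the change in the number of bins at or above $L$, and Lemmas~\ref{lem:belowband} and~\ref{lem:bandhigh} pin this change down. It is essentially the same counting the paper uses. The trouble is the step you yourself flag as the obstacle. A bin reaching load $L$ at time $u$ creates a critical tie only if the special bin's load $\sigma(u)$ equals $L+1$ at that moment, and nothing in your proposal ties $\sigma(\cdot)$ to a fixed level.

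The control you do offer is that $\sigma$ ``stays within $O(1)$ of the mean, and within the band,'' and that is the wrong kind of control. If the special bin rises along with the mean, the flux count at a fixed $L$ says nothing about crossings of the moving level $\sigma(u)-1$. Moreover, Lemma~\ref{lem:bandhigh} does not give $\Omega(n)$ bins ending at or above $\sigma$ at the end of the window if $\sigma$ is only known to be within $c$ of the final mean. You need the special bin to fall behind the mean, not track it. Relatedly, you use the same $c$ both as the band width for $\sigma$ and as the window length. These must be separate constants, with the window length $c'n$ large relative to $c$, as the statement allows.

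\textbf{The missing idea is to freeze the special bin.} By Fact~\ref{fact:special}, the special bin's identity and load can change only at an insertion that has the special bin as one of its two choices. Take $s=t$, which by Lemma~\ref{lem:specialband} has $\sigma(t)$ in the band with probability $\Omega(1)$; a stopping time also works. Conditioned on the history up to $s$, the future hashes are independent. So with probability $(1-1/n)^{2c'n}=e^{-\Theta(c')}=\Omega(1)$, no ball inserted in the window hashes to the current special bin, and its load stays at $H:=\sigma(s)\in[s/n-c,\,s/n+c]$ throughout.

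Intersect this with two high-probability events:
\begin{itemize}
\item Lemma~\ref{lem:belowband} at time $s$: $\Omega(n)$ bins have load below $s/n-c-1\le H-1$.
\item Lemma~\ref{lem:bandhigh} at time $s+c'n$: all but a small constant fraction of bins have load at least $s/n+c\ge H$, once $c'$ is large relative to $c$ and $c_2$.
\end{itemize}
This gives $\Omega(n)$ bins that start strictly below $H-1$ and end at or above $H$. Loads increase one at a time and $H$ is frozen, so each such bin passes through exactly $H-1$ and creates a distinct critical tie with the special bin. This is the paper's argument. With it, your heuristic that ``a constant fraction of balls land on any fixed level'' is no longer needed.
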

\begin{proof}
It suffices to show that, with probability $\Omega(1)$, there are $\Omega(n)$ bins whose heights at time $t$ are less than $\special(t) - 1$, but whose heights at time $t + cn$ are at least $\special(t + cn)$. 

By Lemma \ref{lem:specialband} (and the fact that $t$ is uniformly random across $m - o(m)$ values), we have with probability $\Omega(1)$ that $\special(t) \in [t/n - c, t/n + c]$. Conditioned on this, during the next $c'n$ insertions there is probability $(1 - 2/n)^{c'n} = \Omega(1)$ that no inserted balls hash to the special bin, and therefore that $\special(t + c'n) = \special(t) \le t/n +c$. Putting together the events, we have with probability $\Omega(1)$ that
\begin{equation}\special(t) \ge t/n -c \text{ and }\special(t + c'n) \le t/n + c.
\label{eq:twoevents1}
\end{equation}

By Lemma \ref{lem:belowband}, there exists some $\gamma = \Omega(1)$ such that, with high probability in $n$ that at least $\gamma \cdot n$ bins have loads below $t/n - c$ at time $t$. Supposing $c'$ is sufficiently large as a function of $\gamma$, we have by Lemma \ref{lem:bandhigh} that, with high probability in $n$, at least a $(1 - \gamma/2)n$ bins have loads above $t/n + c$ at time $t + c'n$. It follows that, with high probability in $n$, there exist at least $\gamma n /2$ bins $j$ with 
$$\load{t}{j} \le t/n - c \text{ and } \load{t + c'n}{j} \ge t/n + c.$$

Combining this with \eqref{eq:twoevents1}, we can conclude that with probability $\Omega(1)$, there are at least $\gamma n/2 = \Omega(n)$ bins whose loads at time $t$ are smaller than $\special(t)$ but whose loads at time $t + c'n$ are greater than $\special(t + c'n)$. Every such bin must have experienced a critical tie at some point in the time interval $[t, t + c'n]$. This completes the proof. 
\end{proof}

As a corollary, we can lower bound the expected overall number of critical ties that occur.
\begin{corollary}
Suppose $m \le n^{2 - \Omega(1)} \cap \omega(n)$. The expected number of critical ties created across all insertions is $\Omega(m)$.
\label{cor:numties}
\end{corollary}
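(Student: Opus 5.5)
The plan is to derive Corollary~\ref{cor:numties} from the preceding lemma by a tiling argument. Partition (or cover) the time range $[3cn, m]$ with windows of length $c'n$. The previous lemma says that a window $[t, t + c'n]$ starting at a uniformly random $t \in [3cn, m - cn]$ has, in expectation, $\Omega(n)$ critical ties created inside it. Write $Q$ for the total number of created critical ties, and let $N(s)$ be the number created at time $s$. Then
\[
\E_t\Bigl[\sum_{s = t}^{t + c'n} N(s)\Bigr] = \frac{1}{L}\sum_{t}\sum_{s=t}^{t+c'n} \E[N(s)] \le \frac{(c'n+1)}{L}\,\E[Q],
\]
where $L = \Theta(m)$ is the number of admissible starting times. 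The inequality holds because each time $s$ lies in at most $c'n+1$ windows.

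Combining this with the lemma's lower bound of $\Omega(n)$ gives $\E[Q] \ge \Omega(n) \cdot L / (c'n + 1) = \Omega(m)$. This uses that $c'$ is a constant and that $L = m - O(n) = \Omega(m)$, which follows from the hypothesis $m = \omega(n)$.

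The main point to check is that the windows in the lemma have length $c'n$ and not $cn$; the lemma's statement and proof use the two interchangeably. I will take the window length to be $c'n$, a constant multiple of $n$, and require the starting range to leave room for it, so that $t$ ranges over $[3cn, m - c'n]$, which still has size $\Omega(m)$. I also need that a critical tie "created at time $s$" is counted once per time step and per pair $(i,j)$. Double counting across overlapping windows is handled by the averaging above. No further probabilistic work is needed: linearity of expectation over $s$ and over the uniformly random $t$ does everything. With the corollary in hand, Lemma~\ref{lem:tiestorecourse} then yields $\E[R] \ge \Omega(m/n) - O(1)$, which is what Proposition~\ref{prop:greedylower} needs.
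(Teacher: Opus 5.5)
Your proposal is correct and matches the paper's proof. The paper likewise samples the window start uniformly from $[3cn, m - c'n]$ and notes that each created critical tie is captured with probability at most $c'n/(m - 3cn - c'n) = O(n/m)$, which is your double-counting bound, giving $\E[Q] \ge \Omega(n)\cdot\Omega(m/n) = \Omega(m)$.
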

\begin{proof}
Let $c$ and $c'$ be as in the previous lemma. We know that for $t$ selected uniformly at random from $[3cn, m - c'n]$, the expected number of critical ties created in the window $[t, t + c'n]$ is $\Omega(n)$. This sampling process captures each critical tie with probability at most $(c'n) / (m - 3cn - c'n) = O(n/m)$. The total expected number of critical ties that get created over all insertions must therefore be at least $\Omega(n \mu) = \Omega(m)$. 
\end{proof}

Finally, we can complete the proof of Proposition \ref{prop:greedylower}.
\begin{proof}[Proof of Proposition \ref{prop:greedylower}]
For $m = O(n)$, the proposition is trivial. For $m = \omega(n)$, the proposition follows from Lemma \ref{lem:tiestorecourse} and Corollary \ref{cor:numties}.
\end{proof}

\section{McDiarmid's Inequality}\label{app:mcdiarmid}

In several sections of this paper, it is helpful to make use of the following inequality due to McDiarmid \cite{mcdiarmid1989method}:

\begin{thm}[McDiarmid's Inequality \cite{mcdiarmid1989method}]
Say that a function $F(X_1, \ldots, X_n)$ is $L$-Lipschitz if changing the value of any single argument $X_i$ never changes $F$ by more than $L$. Let $X_1, \ldots, X_n$ be independent random variables, and let $F$ be an $L$-Lipschitz function. Then, for all $k \ge 1$,
$$\Pr[F(X_1, \ldots, X_n) - \E[F(X_1, \ldots, X_n)] \ge k \sqrt{n}] \le e^{-\Omega(k^2)}.$$
\label{thm:mcdiarmid}
\end{thm}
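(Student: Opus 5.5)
The statement to prove is McDiarmid's inequality in the stated form: for an $L$-Lipschitz $F$ of independent $X_1,\ldots,X_n$, $\Pr[F-\E[F]\ge k\sqrt{n}]\le e^{-\Omega(k^2)}$, where the $\Omega$ hides a dependence on $L$ (think of $L=O(1)$, or read the bound as $k L\sqrt{n}$ in the deviation). I would use the standard Doob martingale plus Azuma--Hoeffding argument.

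\textbf{Step 1: the martingale and its increments.} Define $Z_j=\E[F\mid X_1,\ldots,X_j]$ for $j=0,\ldots,n$. Then $Z_0=\E[F]$, $Z_n=F$, and $(Z_j)$ is a martingale with respect to the filtration generated by the $X_j$. The key point is that each increment $D_j=Z_j-Z_{j-1}$ is confined to an interval of length at most $L$ once $X_1,\ldots,X_{j-1}$ are fixed. To see this, use independence to write
\[Z_j=\E_{X_{j+1},\ldots,X_n}\bigl[F(X_1,\ldots,X_j,X_{j+1},\ldots,X_n)\bigr],\]
which is an average over the remaining coordinates that does not depend on how the $X_j$ were conditioned. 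Call this $g(x)$ when $X_j=x$. Then $\sup_x g(x)-\inf_x g(x)\le L$, because two values of $x$ change $F$ pointwise by at most $L$. Since $Z_{j-1}$ is the average of $g(X_j)$ over $X_j$, the increment $D_j$ lies in $[\inf g-Z_{j-1},\,\sup g-Z_{j-1}]$, an interval of width at most $L$. This is the one place where independence is essential, and I expect it to be the main point needing care; the rest is routine.

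\textbf{Step 2: bound the conditional MGF.} By Hoeffding's lemma, a mean-zero variable in an interval of width $L$ satisfies $\E[e^{\lambda D_j}\mid X_{<j}]\le e^{\lambda^2L^2/8}$.

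\textbf{Step 3: chain and optimize.} Iterating conditional expectations from $j=n$ down to $j=1$ gives $\E[e^{\lambda(F-\E F)}]\le e^{n\lambda^2L^2/8}$. Markov's inequality then yields
\[\Pr[F-\E F\ge s]\le \exp\!\bigl(-\lambda s+n\lambda^2L^2/8\bigr),\]
and choosing $\lambda=4s/(nL^2)$ gives the bound $e^{-2s^2/(nL^2)}$. Substituting $s=k\sqrt n$ gives $e^{-2k^2/L^2}=e^{-\Omega(k^2)}$ for fixed $L$. Equivalently, with deviation $kL\sqrt n$ the bound is $e^{-2k^2}$, which is the form actually used in Lemma~\ref{lem:imbalance}. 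The condition $k\ge1$ is not needed.
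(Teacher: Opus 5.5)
Your proof is correct. The paper gives no argument of its own here; it simply cites McDiarmid's original paper, whose proof is exactly the route you take: a Doob martingale with increments confined to width-$L$ intervals, then Hoeffding's lemma (Azuma--Hoeffding), giving $e^{-2s^2/(nL^2)}$. Your reading of the deviation as $kL\sqrt{n}$ (equivalently, letting the $\Omega$ depend on $L$) is also the right one, since Lemma~\ref{lem:imbalance} applies the theorem in exactly that form with the non-constant $L=\mu^{3/4}$.
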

\end{document}